\DeclareMathOperator{\Leb}{Leb}
\DeclareMathOperator{\supp}{supp}
\DeclareMathOperator{\inter}{int}
\DeclareMathOperator*{\esssup}{ess\,sup}
\newcommand{\n}{\mathbbm{n}}
\newcommand{\cL}{\mathcal{L}}
\newcommand{\cM}{\mathcal{M}}
\newcommand{\1}{\mathbbm{1}}
\newcommand{\II}{I_{\varepsilon,\Gamma}}
\newcommand{\optionalitemlabel}[2]{%
  \phantomsection
  #1\protected@edef\@currentlabel{#1}\label{#2}%
}
\newtheorem*{theorem*}{Theorem}
\newtheorem{theorem}{Theorem}[section]
\newtheorem{atheorem}{Theorem}
\newtheorem{proposition}[theorem]{Proposition}
\newtheorem{corollary}[theorem]{Corollary}
\newtheorem{remark}[theorem]{Remark}
\newtheorem{lemma}[theorem]{Lemma}
\title{Analysis of bank leverage via dynamical systems and deep neural networks}
\author{Fabrizio~Lillo\thanks{Dipartimento di Matematica, Universit\'a di Bologna and Scuola Normale Superiore, Pisa, Italy. Email address: fabrizio.lillo@unibo.it} 
\and Giulia~Livieri\thanks{Scuola Normale Superiore, Pisa, Italy. Email address: giulia.livieri@sns.it} 
\and Stefano~Marmi\thanks{Scuola Normale Superiore, Pisa, Italy. Email address: stefano.marmi@sns.it} 
\and Anton~Solomko\thanks{Scuola Normale Superiore, Pisa, Italy. Email address: solomko.anton@gmail.com}
\and Sandro~Vaienti\thanks{Aix Marseille Universit\'e, Universit\'e de Toulon, CNRS, CPT, 13009 Marseille, France. Email address: vaienti@cpt.univ-mrs.fr} }
\date{\today}
\begin{document}

\maketitle

\begin{abstract}
We consider a model of a simple financial system consisting of a leveraged investor that invests in a risky asset and manages risk by using Value-at-Risk (VaR). 
The VaR is estimated by using past data via an adaptive expectation scheme. 
We show that the leverage dynamics can be described by a dynamical system of slow-fast type associated with a unimodal map on $[0,1]$ with an additive heteroscedastic noise whose variance is related to the portfolio rebalancing frequency to target leverage. 
In absence of noise the model is purely deterministic and the parameter space splits in two regions: 
(i) a region with a globally attracting fixed point or a 2-cycle; 
(ii) a dynamical core region, where the map could exhibit chaotic behavior.
Whenever the model is randomly perturbed, we prove the existence of a unique stationary density with bounded variation, the stochastic stability of the process and the almost certain existence and continuity of the Lyapunov exponent for the stationary measure.
We then use deep neural networks to estimate map parameters from a short time series.
Using this method, we estimate the model in a large dataset of US commercial banks over the period 2001--2014. 
We find that the parameters of a substantial fraction of banks lie in the dynamical core, and their leverage time series are consistent with a chaotic behavior.  
We also present evidence that the time series of the leverage of large banks tend to exhibit chaoticity more frequently than those of small banks.
\end{abstract}

{\small
%\noindent
{\bf Keywords:} leverage cycles, risk management, systemic risk, random dynamical systems, unimodal maps, Lyapunov exponents, neural networks.

%\noindent
{\bf 2020 Mathematics Subject Classiﬁcation:} primary 91G80, 34F05; secondary 37H15, 62M45.
}

%\newpage
\tableofcontents

\section{Introduction}
Leverage is one of the most important and controversial concepts in finance.
On one side, borrowing is essential in many economic activities, while, on the other, it is intrinsically connected with risk.
Concerning this last point, the pro-cyclical nature of leverage has been highlighted and studied in the recent literature (see, e.g.,\ \cite{Fostel08leveragecycles, adrianshin2010, huizingalaeven2012, Adrian13procyclicalleverage, acharyaryan2016, galothomas2017}, among many others).
Specifically, \cite{adrianshin2010} and \cite{Adrian13procyclicalleverage} argued that when assets are evaluated at \textit{fair value}, an increase in market prices of assets decreases the so-called ``quasi-market leverage ratio" -- roughly the ratio of total assets to equity capital -- and this leaves room to build up debt for banks that operate through leverage or Value at Risk\footnote{The Value at Risk measures the maximum loss that an asset portfolio may suffer over a specific horizon and with a given level of confidence.} (VaR) targeting.
The empirical findings in \cite{adrianshin2010} and \cite{Adrian13procyclicalleverage} show that cycles of expansion (contraction) in the banks' balance sheet size go hand in hand with increases (decreases) in leverage (i.e.,\ leverage is pro-cyclical); a behavior that has been witnessed since the 1960s but exacerbated during 2007--2009 financial crisis.
The creation of negative externalities in financial markets because VaR model's widespread use has been put forward in e.g., \cite{persaud2000, danielsson2011, barbara2017}.
It is shown that it can create market instability and result in what has been called by \cite{DANIELSSON20041069} as \textit{endogenous risk}, that is, the systemic risk caused and amplified by the system itself rather than being the result of an exogenous shock.
Indeed, because of the imposed VaR capital requirements, banks are forced to reduce their positions when the risk exceeds these limits.
Since the VaR of the trading portfolio increases when the volatility goes up,  banks are forced to reduce their positions rapidly, and because of these fire sales, the price can drop abruptly.
This leads to a new drop in prices and likely increases volatility, which triggers (further) portfolio reductions.
This mechanism creates an exceptionally threatening environment if many banks hold similar positions and use the same VaR model to manage their risk since they are forced to sell the same assets contemporaneously, leading to a destabilizing spiral.
Finally, to implement the VaR constraint (as well as any risk management mechanism), banks must estimate both the riskiness of the investments and their dependencies.
Since the estimation of risk is typically done using historical data, additional feedback is created between past and future risks,  creating new threats for the systemic stability of financial markets.

Understanding and modeling the leverage dynamics is therefore of paramount importance.
Some recent papers \cite{aymanns2015dynamics,corsi2016micro,mazzarisi2019panic} have proposed stylized agent-based models of financial institutions that lead to a dynamical system evolution for the leverage.
By using \textit{numerical methods}, they also show that in some parameters regime the leverage dynamics becomes chaotic via a cascade of period doubling bifurcations\footnote{Other works that analyze systemic risk problems through tools from dynamical system theory are, e.g. \cite{choi2012financial, choi2013financial, poledna2014leverage, castellacci2015modeling, aymanns2016taming}, to cite only a few.}.
The map describing some of these models also contains a noise component, describing the uncertainty naturally present in financial institutions' decision process.
For example, optimal leverage depends on portfolio risk, which is typically estimated with statistical methods on past observations and these estimations are naturally modeled as random variables.
In this paper, we address three important questions related to this approach (to be discussed below): 
(1) What are the mathematical properties (existence and uniqueness of stationary measure, stochastic stability, Lyapunov exponent, etc.) of the noisy deterministic maps emerging from these models? 
(2) Is it possible to reliably estimate the map parameters from short time series as those in publicly available datasets of banks balance sheets? 
(3) Is there evidence of chaotic behavior in the leverage dynamics of real banks?

We consider a simple agent-based financial system where the mechanisms described above are present.
Our starting point is a simplified version of the model proposed in \cite{mazzarisi2019panic}, which in turn builds on \cite{corsi2016micro}, to the case of one bank and one asset.
As \cite{mazzarisi2019panic} shows numerically, the dynamics describing the financial leverage displays a period-doubling bifurcation cascade resulting in chaotic behaviour (as measured with the computation of the Lyapunov exponent).
This happens  either when the parameter of the VaR constraint or the memory used by the banks to estimate volatility from past data vary.
These findings, while suggestive, are however not rigorously proved in \cite{mazzarisi2019panic}.
More importantly, the mathematical properties of this family of models, i.e.\ where a deterministic map is perturbed by a heteroscedastic additive noise (arising from the coupling with a faster random dynamics), are not known in general.
In the simplified model, the system is composed of a representative leveraged investor (a bank) that invests in a risky asset; the bank's risk management consists of two components.
First, the bank estimates the future volatility (the risk) of its investment in the risky asset using past market data.
Second, the bank uses the estimated volatility to set its desired leverage.
However, the bank faces a Value-at-Risk (VaR) capital requirement policy which implies a constraint for the financial leverage $\lambda_t := {A_t}/{E_t}$, where $A_t$ indicates the assets of the bank at time $t$, whereas $E_t$ the liabilities.
The bank is allowed a maximum leverage $\Bar{\lambda}_t$ which is a function of its own perceived risk.
The representative bank updates its expectation of risk at time intervals of unitary length, say $(t, t+1]$ with $t \in \mathbbm{Z}$, and, accordingly, it makes new decisions about the leverage.
This process defines the \textit{slow} component of the model.
Moreover, the model assumes that over the unitary time interval $(t, t+1]$ representative bank re-balances its portfolio to target the leverage without changing the risk expectations.
The re-balancing takes place in $\n$ time sub-intervals within $(t, t+1]$.
The time scale $1/\n$, with $\n \in \mathbbm{N}$, characterizes the \textit{fast} component of the model.
In particular, the slow variables evolve in time as a function of averages over the fast variables.
In summary, the considered model is a discrete-time slow-fast dynamical system (\cite{dolgopyat2005averaging} and \cite{berglund2006noise}).
Starting from this model, in this paper we make three main contributions.

First, we show that the dynamics of leverage in our model follows, under suitable approximation, a deterministic unimodal map on $[0,1]$ perturbed with additive and heteroscedastic noise.
The variance of the noise is related to the frequency of portfolio rebalancing to target leverage.
In particular, the parameter space of this deterministic map has two regions: 
(i) a region where the map has a globally attracting fixed point or a 2-cycle; 
(ii) the so-called \textit{dynamical core} region, where the map can exhibit chaotic behavior.
In order to study the mathematical properties of the map rigorously, we consider a more general class of maps and describe the leverage dynamic utilizing a Markov chain parametrized by the \textit{rebalance time} $\n$; we will study the regime of finite $\n$, as well as the limit for $\n\to\infty$.
Although the stability of Markov chains is relatively well studied (see e.g.,\ \cite{borovkov1998ergodicity} or \cite{meyn1993markov} and references therein), some specific properties of the stochastic kernel that defines our model do not allow us to apply general results available; for instance, we do not know if our chain is Harris recurrent.
We instead exploit the unimodal dynamics of the deterministic map.
Perturbations of unimodal maps with uniform additive noise were studied in \cite{baladi2002almost}, \cite{baladi1996strong}.
As far as we know, the Markov chains with the  kind of heteroscedastic noise we introduce are apparently new; see \cite{gianetto2012estimating} for another type of heteroscedastic nonlinear autoregressive process applied to financial time series.
To handle with them, we look at the Markov operator's spectral properties on suitable Banach spaces and prove the quasi-compactness of such an operator.
This allows us to obtain many rigorous results.
In particular, we get finitely many stationary measures with bounded variation densities.
The stationary measure's uniqueness is achieved when the chain perturbs the unimodal map which is either topologically transitive or admits an attracting periodic orbit; such maps correspond to a major part of the parameter space.
From a financial point of view, should the stationary measure not be unique, it would imply that, depending on the initial conditions, different banks could experience completely different dynamics, corresponding to different stationary measures.
When this occurs in physical systems one speaks of phase transitions and of coexistence of different mutually singular states.
In our case, this would imply, for instance, that policy measures could not be universal.
We also show the weak convergence of the unique stationary measure to the invariant measure of the unimodal map.
We point out that this step is particularly delicate since the stochastic kernel becomes singular in the limit of large $\n$.
It is well known that given a continuous Markov chain which perturbs a given map $T$ as in our setting, one could construct a sequence of random transformations close to $T$ and therefore replace the deterministic orbit of $T$ with a random orbit given by the concatenation of the maps randomly chosen in the sequence.
This construction is formally possible under general assumptions, but it is challenging to get ``representations by special classes of transformations" as Y.~Kifer pointed out in \cite{kifier1986ergodic}. 
We provide an explicit construction of those transformations and show their closeness with the unimodal map; we believe this inference from the Markov chain to the random transformations is interesting and illustrates very well the way the Markov chain moves randomly the states of the system.
In particular, our random maps are obtained by adding to the unimodal map $T$ an additive term which also depends on the state variable, and this motivates the attribute of heteroscedastic we gave to our noise.
Once we dispose of the stationary measure, we can define an average Lyapunov exponent by integrating the logarithm of the derivative of the unimodal map $T$ with respect to such a measure; this definition is suitable for the Markov chain approach.
By switching to random transformations, one can define the Lyapunov exponent of the cocycle.
We show numerically that the two approaches are asymptotically equivalent and prove that the average exponent converges to the Lyapunov exponent of $T$ in the limit of large $\n$, as a consequence of the stochastic stability.
We finally show that the average Lyapunov exponent depends continuously on the Markov chain parameters, and relate it to the different chaotic behavior of the unperturbed unimodal map.

One of the purposes of the present work was to rigorously establish the possibility of chaotic behaviour in leverage time series of banks, as well as to detect it in financial datasets.
For this reason we started from the parametric slow-fast model of \cite{mazzarisi2019panic} and looked at the corresponding Lyapunov exponent.
We remind that for deterministic systems, the  Lyapunov exponents  characterize the divergence of nearby orbits, allowing to distinguish between regular and chaotic dynamics.
In the presence of the noise, our model becomes intrinsically stochastic, in particular the Markov chain will mix exponentially fast, see section \ref{RLESection}. 
Nevertheless the average Lyapunov exponent still allows us to discriminate periodic and chaotic behaviours: it is negative when we perturb a contracting map (and then the realizations of the process fluctuate around the fixed point), 
and it becomes positive by perturbing the dynamical core region.
In both cases the distribution of the realizations in the state space is governed by the unique absolutely continuous stationary measure.
We prove that both behaviours, chaotic and periodic, are present in the parameter space and we actually observe both regimes in the real data.
This justifies our claims about chaotic dynamics of the leverage.

Indeed, in our third contribution (to be detailed below) we will detect possible chaos in banks' leverage dynamics.
As said, for the stochastic stability and the Lyapunov exponent we mainly discuss the relation between finite $\n$ (thus noisy system) and infinite $\n$ (purely deterministic system).
Thus our results indicate in which sense what we learn for a noisy system is informative about the deterministic backbone.
In our empirical analysis we will do not study or use directly neither the stationary measure nor the Lyapunov exponent (mainly because we have very short time series), however the "continuity" we observe from finite to infinite $\n$ suggests that the properties we observe empirically for finite $\n$ are informative of the underlying deterministic dynamics. 

The paper's second contribution concerns a methodology to infer the parameters of the noisy map from short empirical time series.
It is indeed interesting to ask if there is evidence of chaotic behavior in the leverage dynamics of real banks.
We claim that applying the maximum likelihood estimation is not feasible for two reasons.
First, the likelihood function is highly non-convex so that standard optimization methods may perform poorly.
Second, although the likelihood function for the process itself can be written explicitly, it may happen that in many cases we observe only a certain iterate of the process, e.g.,\ we observe only one slow time scale portfolio decision event out of two.
Therefore we propose to use a powerful deep learning technique known as Convolutional Neural Network (CNN) (\cite{lecun99}) to estimate the parameters of the map.
More precisely, our CNN takes as input the one-dimensional time-series and gives as output the map's corresponding parameters.
We train the CNN via extensive simulations of the model, considering different regions of the parameter space.
The robustness of the trained model is validated by its prediction of parameters performance in a huge testing set.
Results show the merit of using our proposed CNN architecture to estimate the parameters.
Importantly, being based only on simulations, the NN-based approach can also work for partial observations.
Without attempting to review the literature of parameter estimation of the dynamical system via NN, we only provide here a few key references to position our contribution. \cite{kumpati1990identification} employed multi-layer NN and recurrent networks to identify and control nonlinear deterministic dynamical systems.
Artificial NN have been used in \cite{materka1994application} and \cite{raol1996neural} in a similar framework.
Batch and recursive prediction error estimation algorithm have been derived for a NN model with a single hidden layer in \cite{chen1990non} and \cite{chen1992neural} for the identification of noisy discrete-time nonlinear dynamical systems.

The third contribution is the empirical analysis of real banks leverage time series.
Assuming the proposed unimodal map with heteroscedastic noise as data generating process for the banks' leverage, we estimate the parameters on quarterly data of about $5,000$ US Commercial  Banks provided by the Federal  Financial  Institutions  Examination  Council (FFIEC) via the proposed CNN architecture.
We have at our disposal a time period going from March 2001 to December 2014, for a total of 59 quarters.
Remarkably, we find that the parameters of a sizable fraction of banks lie in the map's dynamical core and that the large banks' leverage tends to be more chaotic than one of the small ones.
As a robustness check, identifying chaotic/periodic behavior is tested by following a non-parametric approach.
In this latter case, the map is not specified and estimators of indicators (such as the Lyapunov exponent \cite{zeng1992extracting}), which assume different values in the two regimes, are used to discriminate them from a finite length time series.
We use a very recent algorithm dubbed Chaos Decision Tree Algorithm \cite{toker2020simple} which combines several tools into an automated processing pipeline that can detect the presence (or absence) of chaos in noisy recordings, even for difficult edge cases.
We apply the Chaos Decision Tree Algorithm to our data set.
Remarkably, the results corroborate the CNN approach's findings concerning the chaotic behavior for a significant subset of typically large banks.

\medskip
\textbf{Outline of the paper.}
In Section~\ref{SMtoDSSection} we present the financial model of a representative bank managing its leverage.
We show that the model leads to a slow-fast deterministic-random dynamical system which can be recasted into a unimodal deterministic map with heteroscedastic noise.
In order to analyze it rigorously, in Section~\ref{ModelSction} we recall some facts about unimodal maps and Markov chains and then define the class of chains that we study.
We also represent our model in terms of random transformations.
In Section~\ref{StationaryMeasureSection} and \ref{StochasticStabilitySection} we show the existence and uniqueness of an absolutely continuous stationary measure and establish its convergence to the invariant measure of the deterministic map.
This allows us to define the Lyapunov exponent and prove its continuity with respect to the model parameters in Section~\ref{LyapunovExponentSection}.
We also discuss chaotic indicators naturally arising from the random maps representation of the process.
The last part of the paper concerns numerical and empirical analyses.
Specifically, Section~\ref{NumericalSection} presents some numerical investigations of the bifurcation diagram and Lyapunov exponent of the map.
Section~\ref{sec:estimationDNN} proposes an estimation method of the map based on the use of deep neural networks and Section~\ref{sec:empirical} presents an empirical application to a large set of leverage time series of US banks, showing evidence of chaotic behavior.
Finally, in Section~\ref{sec:conclusion} we draw some conclusions and outline some potential extensions of our work.

\section{From the structural model to the dynamical system} \label{SMtoDSSection}

The stylized model of the leverage dynamics we are going to present is a special case of the model of \cite{mazzarisi2019panic} (which in turn builds on \cite{corsi2016micro}) restricted to the case of a single (representative) financial institution and of a single investment asset.
We present below this model and show how, under suitable approximation, the resulting dynamics of leverage follows a deterministic map with additive and heteroscedastic noise.
The mathematical properties of such map are studied in the next sections.

In the model, a representative financial institution (hereafter a bank) takes investment decisions at discrete times $t \in {\mathbb Z}$ (the slow time scale).
At each time the bank's balance sheet is characterized by the asset $A_t$ and equity $E_t$, which together define the leverage $\lambda_t:= A_t/E_t$.
The bank wants to maximize leverage (by taking more debt) to increase profits, but regulation constrains the bank's Value-at-Risk (VaR) in such a way that $$
\lambda_t=\frac{1}{ \alpha \sigma_{e,t}},
$$
where $\alpha$ depends on the return distribution and VaR constraint\footnote{For example, if returns are Gaussian and the probability of VaR is 5\%, it is $\alpha=1.64$.}. $\sigma_{e,t}$ is the expected volatility at time $t$ of the asset, which in this simple model is composed by a representative risky investment.
Thus at each time $t$ the bank recomputes $\sigma_{e,t}$ and chooses $\lambda_t$.
Then in the interval $[t,t+1]$ the bank trades the risky investment to keep the leverage close to the target $\lambda_t$.
The trading process occurs on the points of a grid obtained by subdividing $[t,t+1]$ in $\n$ subintervals of length $1/\n$ (the fast time scale).
The dynamics of the investment return can be written as
\begin{equation}\label{eq:riskyreturn}
    r_{t+k/\n} = \varepsilon_{t+k/\n} + e_{t+(k-1)/\n},\quad k=1, 2, \ldots, \n,
\end{equation}
where $\varepsilon_{t+k/\n}$ and $e_{t+(k-1)/\n}$ are, respectively, the exogenous and endogenous component of the return.
The former is a white noise term with variance $\sigma^2_\epsilon$, while the latter depends on the banks' demand for the risky investment in the previous step.
For each bank, the demand for the risky investment at time $t+k/\n$ is the difference between the target value of $A_t$ to reach $\lambda_t$ and its actual value.
Since the bank's asset is composed by the risky investment, an investment return $r_{t+k/\n}$ modifies $A_t$ and the bank trades at each grid point to reach the target leverage.
It is possible to show (see \cite{corsi2016micro,mazzarisi2019panic}) that to achieve this, at each time $t+k/\n$ the bank's demand for the risky investment is
$$
D_{t+k/\n}=(\lambda_t-1) A^*_{t+(k-1)/\n} r_{t+k/\n}, 
$$
where $A^*_{t+(k-1)}$ is the target asset size in the previous step.
If there are $M$ identical banks, the aggregated demand is $MD_{t+k/\n}$.
The endogenous component of returns $e_{t+k/\n}$ is determined by the aggregated demand by the equation
$$
e_{t+k/\n}=\frac{1}{\gamma}\frac{MD_{t+k/\n}}{C_{t+k/\n}},
$$
where 
$C_{t+k/\n}=MA^*_{t+(k-1)/\n}$ is a proxy of the market capitalization of the risky asset and
$\gamma$ is a parameter measuring the investment liquidity.
Using the above expression, it is 
$$
e_{t+k/\n}=\frac{\lambda_t-1}{\gamma} e_{t+(k-1)/\n} = \phi_t e_{t+(k-1)/\n}
$$
and thus in the period $[t, t+1]$ the return $r_{t+k/\n}$  follows an AR(1) process with autoregression parameter $\phi_t=(\lambda_t-1)/\gamma$ and idiosyncratic variance $\sigma^2_\epsilon$.

To close the model, we specify how the bank forms expectations $\sigma_{e,t}$ on future volatility at time $t$.
We assume that bank uses adaptive expectations, which implies that
$$
\sigma^2_{e,t}=\omega \sigma^2_{e,t-1}+(1-\omega)\hat \sigma^2_{e,t},
$$
where $\omega \in [0,1]$ is a parameter weighting between the expectation at $t-1$ and the estimation $\hat\sigma^2_{e,t}$ of volatility   obtained by the return data in $[t-1,t]$.
As done in practice, this is obtained by estimating the sample variance of the returns in $[t-1,t]$, i.e.
\begin{multline}
\hat\sigma^2_{e,t} = \widehat{\text{Var}}\left[\sum_{k=1}^{\n} r_{t-1+k/\n}\right] \\
= \left(1+2\frac{\hat \phi_{t-1}(1-\hat \phi_{t-1}^\n)}{1-\hat \phi_{t-1}}-2\frac{(\n\hat \phi_{t-1}-\n-1)\hat \phi_{t-1}^{\n+1}+\hat \phi_{t-1}}{\n(1-\hat \phi_{t-1})^2}\right) \frac{\n \hat \sigma_{\epsilon}^2}{1-\hat \phi^2_{t-1}},
\end{multline}
where the last expression gives the aggregated variance of an AR(1) process as a function of the AR estimated parameters $\hat \phi_{t-1}$ and $\hat \sigma^2_\epsilon$.
In the following we will assume that these are the MLE estimators.
We remind that when $\n$ is large, $\hat \phi_{t-1}$ is a Gaussian distributed variable with mean $\phi_{t-1}$ and variance $(1-\phi^2_{t-1})/\n$.

In conclusion, the leverage dynamics is described by the following equations:
\begin{equation}\label{eq:model_final_1}
    \begin{cases}
    & \lambda_t=\left(\omega\frac{1}{\lambda^2_{t-1}}+(1-\omega)\alpha^2 \widehat{\text{Var}}[\sum_{k = 1}^{\n} r_{t-1+k/\n}]\right)^{-1/2},\\
    & r_s = \phi_{t-1}r_{s-1/\n} + \epsilon_s,\qquad s=t-1+k/\n,\quad k=1,2,\ldots,\n,
    \end{cases}
\end{equation}
Since slow variables evolve depending on averages of the fast variables, the model is a {\em slow-fast deterministic-random dynamical system}.
By using the expression above for the variance, we can rewrite the equation for the slow component only as
$$
\lambda_t=\left(\omega\frac{1}{\lambda^2_{t-1}}+(1-\omega)\alpha^2 \hat \sigma^2_{e,t}\right)^{-1/2},
$$
where the estimator $\hat \sigma^2_{e,t}$ can be seen as a stochastic term depending on $\lambda_{t-1}$ and whose variance goes to zero when $\n\to\infty$.

If $\n$ is large, the above map reduces to
$$
\lambda_t=\left(\omega\frac{1}{\lambda^2_{t-1}}+\frac{(1-\omega)\alpha^2 \n \hat \sigma_{\epsilon}^2}{(1-\hat \phi_{t-1})^2}\right)^{-1/2},
$$
and using the relation $\phi_t=\frac{\lambda_t-1}{\gamma}$, the map becomes
\begin{equation}
\phi_t=-\frac{1}{\gamma}+\frac{1}{\gamma}\left(\frac{\omega}{(1+\gamma \phi_{t-1})^2}+\frac{(1-\omega)\alpha^2 \n \hat \sigma_{\epsilon}^2}{(1-\hat \phi_{t-1})^2}\right)^{-1/2}.
\end{equation}\label{mapnew2}

When changing $\n$ also $\sigma^2_\epsilon$ changes, since the AR(1) can be seen as the discretization of a continuous time stochastic process (namely an Ornstein-Uhlenbeck process).
A simple scaling argument shows that the quantity $\Sigma_{\epsilon}=\sigma^2_\epsilon \n$ is instead constant and independent from the discretization step $1/\n$. 
In the limit $\n\to \infty$, it is $\hat \phi_t \to \phi_t$, thus the above map becomes purely deterministic\footnote{This is the deterministic skeleton, whose properties are discussed in detail in \cite{mazzarisi2019panic}.}.
The map in this case has a fixed point $\phi^{*} = \frac{1-\alpha \sqrt{\Sigma_{\epsilon}}}{1+\alpha\gamma\sqrt{\Sigma_{\epsilon}}}$.
By replacing this condition in \eqref{mapnew2} and assuming that the risky asset is very liquid ($\gamma \gg 1$), the map becomes
\begin{equation}\label{mapfabrizio}
\phi_t\simeq\left(\frac{\omega}{\phi^2_{t-1}}+\left(\frac{1-\phi^*}{\phi^*}\right)^2\frac{1-\omega}{(1-\hat \phi_{t-1})^2}\right)^{-1/2}.
\end{equation}

Since in the large $\n$ limit the ML estimator $\hat\phi_{t-1}$ is a Gaussian variable with mean $\phi_{t-1}$ and variance $(1-\phi^2_{t-1})/\n$, we can write 
$$
\hat\phi_{t-1}=\phi_{t-1}+\eta_{t-1},
$$
where $\eta_{t-1} \sim {\mathcal N}(0,(1-\phi^2_{t-1})/\n)$.
If the noise $\eta_{t-1}$ is small (i.e.\ $\n$ is large), we can perform a series expansion, obtaining
\begin{equation}
\phi_t \simeq \frac{|\phi_{t-1}(1-\phi_{t-1})|}{
\sqrt{b\phi^2_{t-1} + \omega (1-\phi_{t-1})^2}}(1+\zeta_{t-1}),
\end{equation}
where $b=b(\omega , \phi^*)$ is given by \begin{equation}\label{b_omega_phistar}
    b=(1-\omega)\left(\frac{1-\phi^*}{\phi^*}\right)^2
\end{equation}
and the noise term
$$
\zeta_{t-1}=\frac{-b \phi^2_{t-1}}{(1-\phi_{t-1})(b\phi^2_{t-1} + \omega (1-\phi_{t-1})^2)} \eta_{t-1}.
$$

In this approximation, the map can be seen as deterministic map with additive, but heteroscedastic, noise
\begin{equation} \label{MCAddNoise}
\phi_{t+1}=T(\phi_t; \theta)+\sigma(\phi_t; \theta) \epsilon_t,
\end{equation}
where $\epsilon_t\sim {\cal N}(0,1)$ and $\theta$ is a vector of parameters.
In our setting $\theta=(b,\omega,\n)$ and the deterministic map $T$ does not depend on $\n$.
Specifically,
\begin{equation}\label{unimodal_map}
T(\phi_t;\theta) = \frac{|\phi_t(1-\phi_t)|}{\sqrt{b\phi^2_t + \omega(1-\phi_t)^2}}
\end{equation}
and 
\begin{equation}\label{noise_variance}
\sigma(\phi_t; \theta) = \frac{b \phi_t^3 \sqrt{1-\phi_t^2}}{\sqrt{\n}\bigl(b\phi^2_t + \omega (1-\phi_t)^2\bigr)^{3/2}}.
\end{equation}
The map $T$ is represented in Fig.~\ref{fig:bump}.
The term under square root is always positive, since so are $\omega$ and $b$.

In the next sections we develop a rigorous mathematical theory of additive maps with heteroscedastic noise as in \eqref{MCAddNoise}.
In particular, we will study the existence of a stationary measure, the stochastic stability, and the Lyapunov exponent for this class of models, having in mind our main example of \eqref{unimodal_map} and \eqref{noise_variance} coming from the financial application.
It is worth noting, however, that our results remain valid for any noise $\epsilon_t$ in \eqref{MCAddNoise}, not only Gaussian; see Section~\ref{Coupling_section}.
Then in Section~\ref{sec:estimationDNN} we will present an estimation method for the map and in Section~\ref{sec:empirical} we will present the empirical analysis on real leverage time series.

\begin{figure}[h]
\centering
{\includegraphics[width=0.45\textwidth]{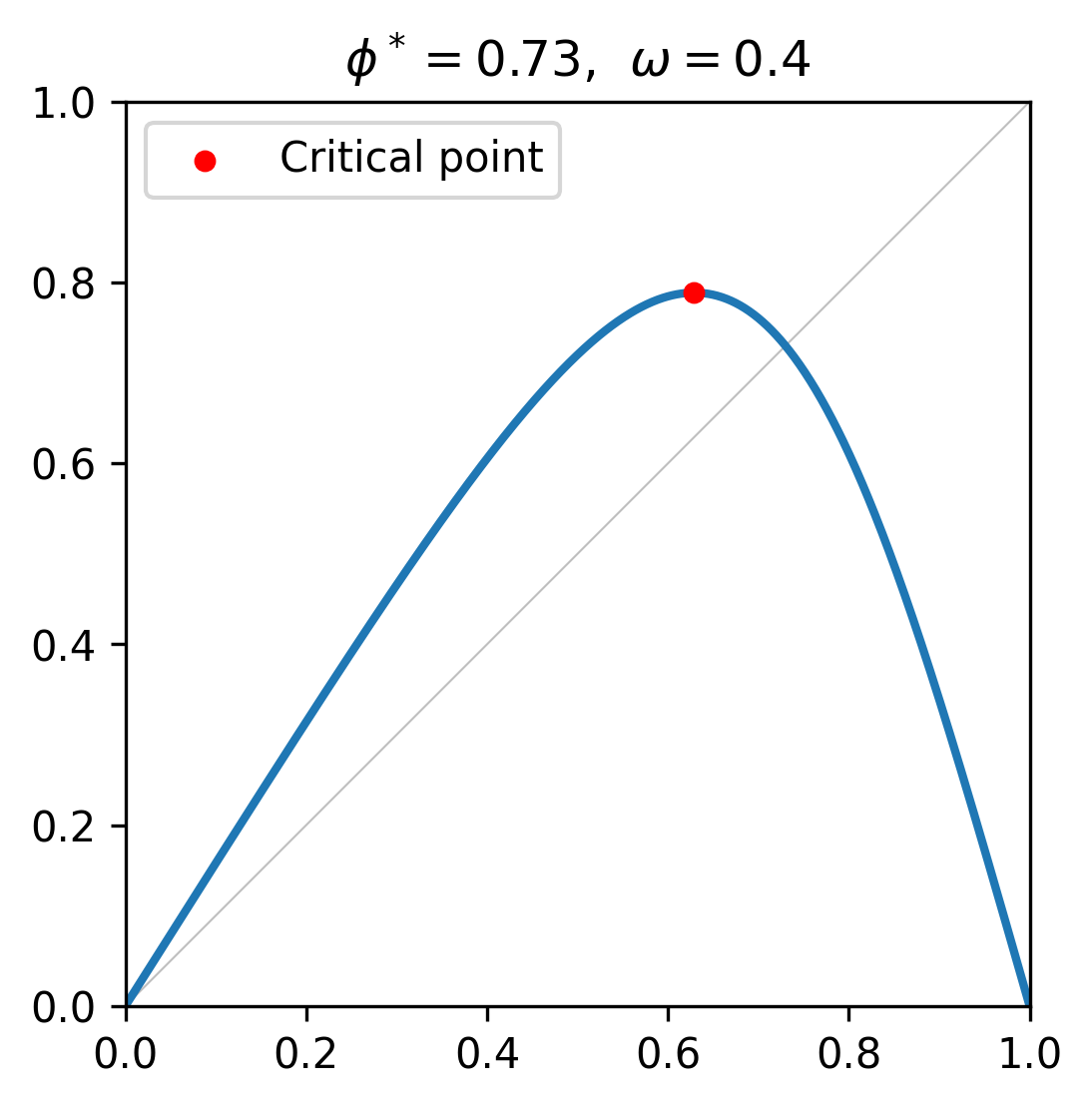}}
\caption{Plot of the deterministic component $T(\phi)$, $\phi^{*} = 0.73$, $\omega=0.4$ ($b=0.082$).}
\label{fig:bump}
\end{figure}

\section{The mathematical model}\label{ModelSction}

As we explained in the previous section, we model the dynamics of a financial system where two time scales are present:
a {\em slow time scale} where risk expectations (thus portfolios) are updated at unit times $t$, and a
{\em fast time scale}, between $t-1$ and $t$, during which banks rebalance portfolios $\n$ times.
Since slow variables evolve depending on averages of the fast variables, the model is a {\em slow-fast deterministic-random dynamical system}.
The evolution of the slow component is described by equations \eqref{mapfabrizio} and \eqref{MCAddNoise}.
Notice that since the distribution of $\phi_t$ only depends on $\phi_{t-1}$, both processes are (continuous state) Markov chains.
In this section, we define some tools that we later use to analyse the model.

\subsection{Unimodal maps}\label{UnimodalMapsSection}

Our process is constructed by perturbing with a heteroscedastic additive noise a deterministic unimodal map $T$ of the unit interval $I=[0,1]$.
We now describe the class of unimodal maps $T$ we will consider, with \eqref{unimodal_map} being its representative.

We will refer to the class of unimodal maps studied in \cite{baladi1996strong}; they enjoy a series of ergodic properties which allow us to establish rigorous results for the problem we deal with.  

We therefore consider unimodal maps $T\colon I\to I$ of class $C^4$ with $T(0)=T(1)=0$ and with a non-degenerate critical point\footnote{The critical point for \eqref{unimodal_map} is $c = \bigl(1 + \sqrt[3]{b/\omega}\bigr)^{-1}$.} at $c$: $T'(c)=0$.
The map $T$ is strictly increasing on $[0, c)$ and strictly decreasing on $(c, 1]$. 
Moreover, we suppose that $T$ satisfies the following assumptions:
\begin{enumerate}[label=(A\arabic*)]
\item\label{itm:A1} $T$ has negative Schwarzian derivative: $S(T) = \dfrac{T'''}{T'}-\dfrac32 \left(\dfrac{T''}{T'}\right)^2<0$,
\item\label{itm:A2} $\Delta:=T(c)<1$,
\item\label{itm:A3} the critical point is \emph{quadratic}: $T''(c)\ne 0$,
\item\label{itm:A4} $|T^k(c)-c|\ge e^{-\alpha k} $ for all $k\ge k_0$,
\item\label{itm:A5} $|(T^k)'(\Delta)|\ge \lambda_c^k$ for all $k\ge k_0$,
\end{enumerate}
where $k_0\ge 1, 1<\lambda_c<2$, and $\alpha>0$ with $e^{2\alpha}<\sqrt{\lambda_c}$ are fixed constants.

\begin{figure}[h]
\centering
{\includegraphics[width=0.5\textwidth]{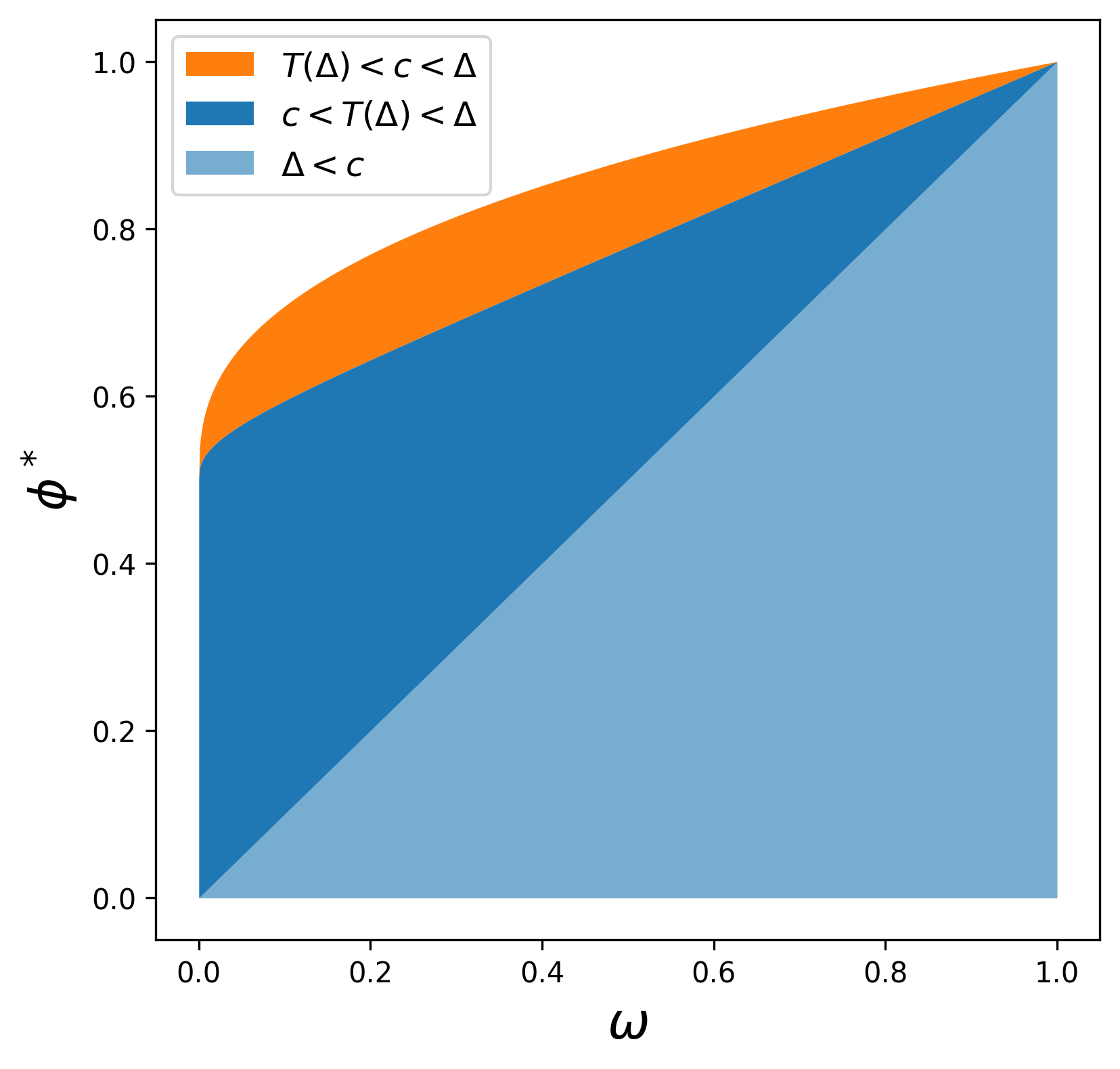}}
%\put(-187,53){\rotatebox{23}{$T(\Delta) < c < \Delta < 1$}}
%\put(-80,50){{$\Delta < 1$}}
%\put(-200,95){{$\Delta > 1$}}
\caption{Partition of the parameter space for the unimodal map \eqref{unimodal_map} according to the classification \ref{itm:C1}--\ref{itm:C3}.}
\label{fig:dyncore}
\end{figure}

According to the theory of unimodal maps (see for instance the review \cite{thunberg2001periodicity}), we have the following classification: 
\begin{enumerate}[label=(C\arabic*)]
\item\label{itm:C1} If $\Delta\le c$, then there exists a globally attracting fixed point.
\item\label{itm:C2} If $c < T(\Delta) < \Delta$, then there is a globally attracting fixed point or a 2-cycle in $(c, \Delta)$.
\item\label{itm:C3} If $T(\Delta) < c < \Delta$, we can  reduce the study to the so-called \emph{dynamical core} $[T(\Delta), \Delta]$, which is mapped onto itself and absorbs all initial conditions (except $0$ which is a fixed point).
\end{enumerate}
Following this classification, we will say that $T$ is \textbf{periodic} if there is a globally attracting fixed point or a globally attracting cycle.
Recall that a map $T$ on a topological space $X$ is called \emph{topologically transitive} if  for all nonempty open sets $U, V\subset X$, there exists $n$ such that $T^{-n}U\cap V\neq \emptyset$.
We will say that $T$ is \textbf{chaotic} if, in addition to  above assumptions \ref{itm:A1}--\ref{itm:A5},
\begin{enumerate}
\item[\optionalitemlabel{(A$_t$)}{At}]
$T$ is topologically transitive on the interval $[T(\Delta), \Delta]$.
\end{enumerate}
In particular, any such $T$ satisfies the dynamical core condition \ref{itm:C3}.
Notice also that $T(x)>x$ for any $x < T(\Delta)$, in particular, $T'(0) > 1$.

The invariant sets (attractors) of a unimodal map have a variety of structures, as it is stated by the following theorem of Blokh and Lyubich (we quote the statement given in \cite[Theorem~6]{thunberg2001periodicity}):

\begin{atheorem}[\cite{blokh1991measurable}]\label{thmA}
\normalfont
Let $T: I\to I$ be an S-unimodal map with nonflat critical point (S means $S(T)<0$). 
Then $T$ has a unique metric attractor $A$, such that the $\omega$-limit set $\omega(x) = A$ for Lebesgue almost all $x\in I$.
The attractor $A$ is of one of the following types:
\begin{enumerate}
    \item an attracting periodic orbit;
    \item a Cantor set of measure zero;
    \item a finite union of intervals with a dense orbit.
\end{enumerate}
In the first two cases, $A=\omega(c)$.
\end{atheorem}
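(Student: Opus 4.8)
The plan is to establish the trichotomy by studying the forward orbit of the critical point $c$ and the set $\omega(c)$, with the negative Schwarzian derivative as the technical engine throughout. Three consequences of $S(T)<0$ together with nonflatness of $c$ do the heavy lifting: (i) \emph{Singer's theorem} --- every one-sided attracting or neutral periodic orbit attracts $c$, so there is at most one such orbit; (ii) the \emph{Koebe distortion principle} --- any branch of an iterate $T^n$ that extends with a definite proportion of ``Koebe space'' on both sides has uniformly bounded distortion on the inner interval; and (iii) the \emph{absence of wandering intervals}, classical for $C^4$ negative-Schwarzian unimodal maps. Item (iii) is what lets one pass from statements about $\omega(c)$ to statements about Lebesgue-almost every point, a wandering interval being the only mechanism by which a positive-measure set of points could exhibit $\omega$-limit behaviour not already present on $\omega(c)$ or on the absorbing core $[T(\Delta),\Delta]$. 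As a preliminary reduction: in cases \ref{itm:C1} or \ref{itm:C2} there is an explicit attracting fixed point or $2$-cycle, placing us in case (a) below; otherwise $0$ is repelling, its grand orbit has Lebesgue measure zero, and every other point enters the core $[T(\Delta),\Delta]$, on which we work from now on.

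Now run the case split on the critical orbit. \textbf{(a) $c$ is attracted to a periodic orbit $O$.} By Singer's theorem $O$ is the unique attracting/neutral cycle; I would show its basin $B(O)$ has full Lebesgue measure, since the complement $I\setminus B(O)$ is forward invariant, contains no interval (no wandering intervals and no other periodic attractors), and the pullbacks of the complementary gaps have bounded distortion by Koebe, so $I\setminus B(O)$ is a uniformly hyperbolic, hence Lebesgue-null, Cantor set. Thus $A=O=\omega(c)$. \textbf{(b) $c$ is recurrent and persistently recurrent} (in particular whenever $T$ is infinitely renormalizable): the cycles of periodic intervals $J_n\ni c$, of periods $p_n\to\infty$, are nested and $A:=\bigcap_n\bigcup_{j=0}^{p_n-1}T^j(J_n)=\omega(c)$ is a Cantor set. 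Using the real \emph{a priori} bounds --- Koebe forces every cycle of periodic intervals to leave a definite proportion of the line uncovered at each renormalization scale --- the measure of $\bigcup_j T^j(J_n)$ decays geometrically, so $\Leb(A)=0$; a companion distortion estimate shows Lebesgue-a.e.\ orbit is eventually trapped in every $\bigcup_j T^j(J_n)$, hence has $\omega$-limit equal to $A$.

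\textbf{(c) $c$ is recurrent but not persistently recurrent.} Let $T^{p}|_J$ (with $J\ni c$, period $p$) be the deepest renormalization: a non-renormalizable S-unimodal map with no attracting cycle, topologically transitive on its core. I would build an induced Markov map for $T^p|_J$: choosing a symmetric interval $I_0\ni c$ from the principal nest so that each branch of the first-return map $R\colon\bigsqcup_k I_k\to I_0$ extends diffeomorphically onto a definite neighbourhood of $I_0$, the Koebe principle gives $R$ uniformly bounded distortion, while non-persistent recurrence supplies definite expansion; hence $R$ carries an absolutely continuous invariant probability, which, spread over the forward images of its branches and then over the $p$ iterates $J,T(J),\dots,T^{p-1}(J)$, yields an absolutely continuous, ergodic invariant measure $\mu$ for $T$ supported on $A:=\bigcup_{i=0}^{p-1}T^i(J)$. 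This $A$ is forward invariant and, since $T^p$ is topologically transitive on each $T^i(J)$, $\omega(x)=A$ for $\mu$-a.e.\ and hence Lebesgue-a.e.\ $x$ --- the finite-union-of-intervals alternative. Finally, uniqueness of the metric attractor follows because in each of (a)--(c) the basin exhibited already has full Lebesgue measure and every point off the grand orbit of $0$ enters the core; and we showed $A=\omega(c)$ in (a),(b), which yields the last sentence.

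The main obstacle is case (c): producing the absolutely continuous invariant measure for non-renormalizable S-unimodal maps with no periodic attractor. This rests on the bounded geometry of the principal nest (the real \emph{a priori} bounds) and on a sufficiently careful choice of inducing domain so that the induced map is uniformly expanding with bounded distortion; the delicate estimate is the control of the Lebesgue measure of points that keep returning near $c$ without the induced map expanding. A secondary difficulty is the quantitative bound giving $\Leb(A)=0$ in case (b). Both are points where, if permitted, one simply cites the Martens--de Melo--van Strien and Nowicki--van Strien / Keller machinery; the remaining ingredients --- Singer's theorem, no wandering intervals, the full-measure basin arguments and the passage from $\omega(c)$ to a.e.\ $x$ --- are comparatively routine under the negative Schwarzian hypothesis.
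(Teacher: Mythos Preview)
The paper does not prove this statement at all: Theorem~A is quoted verbatim (via Thunberg's survey) as the classical result of Blokh and Lyubich, and is used only as background for the classification in Section~\ref{UnimodalMapsSection}. There is therefore no ``paper's own proof'' to compare against; your sketch is an attempt at the original Blokh--Lyubich theorem itself.

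On the merits of your sketch, there is a genuine gap in case~(c). You propose to show that the attractor is a finite union of intervals by \emph{constructing an absolutely continuous invariant measure} via an induced Markov map, and then reading off $\omega(x)=A$ for Lebesgue-a.e.\ $x$ from ergodicity of $\mu$. But the existence of an acim is \emph{not} part of the Blokh--Lyubich conclusion, and in fact fails for many maps in the interval-attractor case: as the paper itself remarks just after Theorem~\ref{thmB}, there are uncountably many logistic parameters with interval attractors and no natural measure whatsoever. So your step ``non-persistent recurrence supplies definite expansion; hence $R$ carries an absolutely continuous invariant probability'' is false in general, and the whole route through $\mu$ collapses. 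The actual Blokh--Lyubich argument in this case is measure-theoretic but does not build an invariant measure: it shows directly, using no wandering intervals and density of preimages, that the cycle of intervals of the deepest renormalization is the $\omega$-limit set of Lebesgue-a.e.\ point. A secondary issue is that your case~(b) only treats the infinitely renormalizable (solenoidal) Cantor attractor; for critical points of sufficiently high order the non-renormalizable ``wild'' Cantor attractor is also possible and must be handled separately.
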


Associated to $T$ there is the \emph{transfer operator} (also called the \emph{Perron-Frobenius operator}) $L\colon L^1\to L^1$ which is the positive linear operator defined by the duality relation\footnote{Without mention of the contrary, all the $L^p$ spaces in the paper will be intended with respect to the Lebesgue measure.
The latter will be denoted as $dx$ or $\text{Leb}$.}
$$
\int_I Lf\, g = \int_I f\, g\circ T, \quad f\in L^1, g\in L^\infty.
$$

In order to get useful information from this operator, we need to restrict the functional space where it acts; we choose here the Banach space $BV$ of bounded variation functions on the unit interval equipped with the complete norm
$$
\|f\|_{BV}=|f|_{TV}+\|f\|_1,
$$
where $|f|_{TV}$ is the total variation of the function $f\in L^1$.
For a chaotic map $T$ it follows that it admits a unique absolutely continuous invariant measure $\nu = \nu\circ T^{-1}$ with a density $h\in BV$ and supported on the interval $[T(\Delta), \Delta]$ (\cite[Section~5, Corollary~1]{baladi1996strong}). 
Moreover, $\nu$ is mixing with exponential decay of correlations on BV observable, namely there exists $0<v<1$ and a constant $C>0$ such that
\begin{equation}\label{DE}
\Bigl| \int L^n f \, g dx-\int fdx \int gdx\Bigr|\le C v^n \|f\|_{BV}\|g\|_{\infty},
\end{equation}
see \cite[Section~5, Corollary~3]{baladi1996strong} and \cite[Proposition~5.15]{viana1997stochastic}.

\medskip

Before continuing, it is useful to quote a sort of analog of the theorem of Blokh and Lyubich given above, for what concerns invariant measures for the map $T$.
We give here the statement of Theorem~9 in \cite{thunberg2001periodicity}.

\begin{atheorem}[see {\cite[Chapter~V.1]{melo2012onedimensional}}] \label{thmB}
\normalfont
Let $T$ be an S-unimodal map with nonflat critical point.
If $T$ has a periodic attractor, or a Cantor attractor, then $T$ admits a unique SRB measure\footnote{We remind that an invariant measure $\mu$ is called a Sinai-Ruelle-Bowen (SRB) measure if
$$
\mu=\lim_{n\to \infty}\frac1n\sum_{k=0}^{n-1}\delta_{T^k(x)}
$$
for $\Leb$-a.e.\ $x\in[0,1]$, where $\delta_x$ is the Dirac mass at $x$.}
supported on the attractor.\\
If $T$ admits an absolutely continuous invariant probability measure $\mu$, then:
\begin{enumerate}
    \item\label{itm:BL1} $\mu$ is a SRB measure;
    \item\label{itm:BL2} the attractor $A$ of $T$ is an interval attractor;
    \item\label{itm:BL3} $\supp\mu = A$, in particular, $\mu$ is equivalent to the Lebesgue measure on $A$.
\end{enumerate}
\end{atheorem}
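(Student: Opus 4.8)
This is a classical result, so the plan is only to sketch it, reading it off from the Blokh--Lyubich trichotomy (Theorem~\ref{thmA}) together with the structure theory of $S$-unimodal maps that admit an absolutely continuous invariant measure, and quoting from \cite{melo2012onedimensional} the parts that rest on the full one-dimensional machinery (real bounds, Koebe distortion, absence of wandering intervals).

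\emph{The periodic and Cantor cases.} In both, Theorem~\ref{thmA} gives $A=\omega(c)$, and moreover $T|_A$ is uniquely ergodic: trivially so when $A$ is an attracting periodic orbit (the only invariant probability is the uniform one on the cycle), and by the classical adding-machine description of solenoidal attractors otherwise. Write $\mu_A$ for the unique invariant probability on $A$. For $\Leb$-a.e.\ $x$ one has $\omega(x)=A$, hence $\frac1n\sum_{k=0}^{n-1} d(T^k x,A)\to 0$; therefore any weak-$*$ limit point of the empirical measures $\mu_n^x:=\frac1n\sum_{k=0}^{n-1}\delta_{T^k(x)}$ is $T$-invariant and carried by $A$, so it equals $\mu_A$, whence $\mu_n^x\to\mu_A$. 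This exhibits $\mu_A$ as an SRB measure supported on $A$; since an SRB measure is by definition a $\Leb$-a.e.\ limit of empirical averages, it is the only one.

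\emph{The absolutely continuous case.} Suppose $\mu\ll\Leb$ is $T$-invariant. Since $\omega(x)=A$ for $\mu$-a.e.\ $x$, one has $d(T^n x,A)\to 0$ $\mu$-a.e.; integrating and using invariance, $\int d(x,A)\,d\mu=\int d(T^n x,A)\,d\mu\to 0$ by dominated convergence, so $\supp\mu\subset A$. As $\mu$ is a probability absolutely continuous with respect to Lebesgue, $A$ must have positive Lebesgue measure, which excludes the first two cases of Theorem~\ref{thmA}; thus $A$ is a finite union of intervals carrying a dense orbit, giving \ref{itm:BL2}. For \ref{itm:BL3} the plan is to invoke the regularity theory: for an $S$-unimodal map an a.c.i.p.\ is ergodic and its density is $BV$, bounded, and bounded away from $0$ on $\supp\mu$, while $\supp\mu$ is a forward-invariant finite union of intervals cyclically permuted by $T$; since the attractor $A$ is the closure of the critical orbit, this forces $\supp\mu=A$ and $\mu\sim\Leb|_A$ (and, in passing, uniqueness of the a.c.i.p., as two ergodic ones would have essentially disjoint supports, both equal to $A$).

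\emph{Main obstacle.} The substantive statement is \ref{itm:BL1}: that $\mu$ is \emph{physical}, i.e.\ the set of $x$ with $\mu_n^x\to\mu$ has full Lebesgue measure. Ergodicity and the Birkhoff theorem give this only for $\mu$-a.e.\ $x$, hence, through \ref{itm:BL3}, for $\Leb$-a.e.\ $x\in A$; extending it to the whole basin $\{x:\omega(x)=A\}$, which is all of $I$ up to a Lebesgue-null set, is exactly where the hyperbolicity of $S$-unimodal maps with an a.c.i.p.\ is needed --- bounded distortion along inverse branches, exactness (equivalently, exponential mixing) of $(T,\mu)$, and the fact that $\Leb$-a.e.\ orbit eventually lands inside $A$ and equidistributes there. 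I would isolate this as a single lemma and quote it from \cite[Chapter~V]{melo2012onedimensional} rather than reprove it.
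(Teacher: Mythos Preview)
The paper does not prove Theorem~\ref{thmB}: it is stated as a background result quoted from \cite[Chapter~V.1]{melo2012onedimensional} (alongside Theorems~\ref{thmA} and \ref{thmC}, which are likewise cited without proof), so there is no in-paper argument to compare your proposal against. Your sketch is a sensible outline of the standard route --- unique ergodicity on the attractor in the periodic/Cantor cases, and in the a.c.i.p.\ case the combination of the trichotomy with the regularity/ergodicity theory for $S$-unimodal maps --- and you correctly isolate \ref{itm:BL1} as the step that genuinely needs the one-dimensional machinery and should be quoted rather than reproved. One small point: in your argument for \ref{itm:BL3} you invoke ``$A$ is the closure of the critical orbit'', which in Theorem~\ref{thmA} is asserted only for the first two cases; in the interval case you should instead use that $T$ has a dense orbit on $A$ (hence is topologically transitive there) together with the fact that $\supp\mu$ is a forward-invariant finite union of intervals.
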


\begin{remark}
\normalfont
As pointed out in \cite{baladi1996strong}, \ref{itm:A4} ensures that $T$ has no periodic attractors and is ergodic with respect to the  measure $\mu$, which is the unique absolutely continuous invariant probability measure for $T$.
And \ref{At} allows to prove that $T$ is Bernoulli and therefore mixing. 
Therefore chaotic map fulfills conditions \ref{itm:BL1}--\ref{itm:BL3} of Theorem~\ref{thmB} above. 
As it is pointed out in \cite[Section~4]{thunberg2001periodicity}: ``The theorem [above] does not guarantee the existence of a natural
measure in the case of an interval attractor.
Indeed there are uncountably many parameters in the logistic family, for which the corresponding maps have interval attractors and lack natural measures altogether, or have natural measures with weird properties...
But, at least in the logistic family, both such singular phenomena and Cantor attractors are rare in the sense of Lebesgue measure''.
\end{remark}

In Section~\ref{LyapunovExponentSection} we will address the question to compute the Lyapunov exponent $\Lambda$ for the map $T$. 
The following theorem by G.~Keller can be applied to our situation:

\begin{atheorem}[\cite{keller1990exponents}] \label{thmC}
\normalfont
Let $T\colon I \to I$ be an S-unimodal map with nonflat
critical point.
Then $T$ admits an absolutely continuous invariant probability measure if and only if
$\lim_{n\to \infty}\frac{1}{n}\log|DT^n(x)|= \Lambda>0$, for almost all $x\in I$.
\end{atheorem}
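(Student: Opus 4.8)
I would prove the two implications separately; the delicate one is constructing an a.c.i.p.\ from the positive-exponent hypothesis, so I treat necessity first. Assume $T$ has an a.c.i.p.\ $\mu$, which I may take ergodic. By Theorem~\ref{thmB} the attractor $A$ is an interval attractor, $\supp\mu=A$, and $\mu$ is equivalent to Lebesgue on $A$; by Theorem~\ref{thmA}, $\omega(x)=A$ for $\Leb$-a.e.\ $x$. First I would check $\log|T'|\in L^1(\mu)$: the only possible obstruction is the singularity at the critical point $c$, but by the quadratic assumption \ref{itm:A3} one has $\log|T'(x)|\sim\log|x-c|$ near $c$, which is $\mu$-integrable since the density of an a.c.i.p.\ of an S-unimodal map lies in $L^1$ with at worst inverse-square-root-type singularities. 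Birkhoff's ergodic theorem then yields $\frac1n\log|DT^n(x)|\to\Lambda:=\int\log|T'|\,d\mu$ for $\mu$-a.e.\ $x$, hence for $\Leb$-a.e.\ $x\in A$ since $\mu\sim\Leb$ there, and then for $\Leb$-a.e.\ $x\in I$ by transferring convergence of the Birkhoff sums along the orbit of $x$ (which eventually shadows $A$; a truncation near $c$ handles the singularity). It remains to show $\Lambda>0$: an absolutely continuous $\mu$ satisfies Rokhlin's entropy formula $h_\mu(T)=\int\log|T'|\,d\mu=\Lambda$, and $h_\mu(T)>0$ because a zero-entropy S-unimodal map has, by the Blokh--Lyubich trichotomy (Theorem~\ref{thmA}), a periodic or a measure-zero Cantor attractor and hence no a.c.i.p.\ (Theorem~\ref{thmB}); alternatively one first invokes Przytycki's theorem that every invariant probability of a $C^{1+\varepsilon}$ interval map has non-negative exponent and then excludes the value $0$ by the same entropy argument. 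This positivity step is where the negative Schwarzian and the non-flat critical point do the real work.

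For sufficiency, assume $\frac1n\log|DT^n(x)|\to\Lambda>0$ for $\Leb$-a.e.\ $x$. First I would identify the attractor via the classification \ref{itm:C1}--\ref{itm:C3}: in \ref{itm:C1} and \ref{itm:C2} there is a globally attracting fixed point or $2$-cycle, so the a.e.\ exponent is $\le 0$, contradicting $\Lambda>0$; in \ref{itm:C3} I pass to the dynamical core and apply Theorem~\ref{thmA}: a periodic attractor would again force a.e.\ exponent $\le 0$, and a measure-zero (solenoidal) Cantor attractor would force the a.e.\ exponent to equal the exponent of the unique invariant measure it carries, which is $0$ --- both impossible. Hence $T$ has an interval attractor $A$ (case~3 of Theorem~\ref{thmA}): a finite union of intervals with a dense orbit, absorbing $\Leb$-a.e.\ point, on which the a.e.\ exponent is $\Lambda>0$.

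It then remains to build the a.c.i.p.\ on $A$. I would fix a small interval $J\subset A$ at positive distance from $c$ and set up a Markov inducing scheme over $J$ (equivalently, work in the Hofbauer canonical Markov extension): the branches of the induced map are compositions of iterates of $T$ that meet $c$ only at their last step, so the negative Schwarzian derivative gives uniformly bounded distortion via the Koebe principle, while the positive pointwise exponent makes the inducing derivatives grow, so after possibly shrinking $J$ the induced map is uniformly expanding with bounded distortion and admits an a.c.i.p.\ $\mu_J$ with density bounded away from $0$ and $\infty$ (Lasota--Yorke/Rychlik). Saturating, $\mu:=\sum_{n\ge 0}(T^n)_*\!\bigl(\mu_J|_{\{R>n\}}\bigr)$ with $R$ the inducing time, produces a $\sigma$-finite $T$-invariant absolutely continuous measure, and it can be normalized precisely when $\int_J R\,d\mu_J<\infty$; this integrability is what the positive exponent buys, since long excursions from $J$ are caused by deep passages near $c$, which a positive exponent a.e.\ makes summably rare. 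The hard part, as I see it, is exactly this last estimate --- showing the positive exponent prevents mass from piling up at the non-degenerate critical point, i.e.\ that the inducing time is integrable and the saturated measure is finite and genuinely absolutely continuous. This is where one must combine the negative Schwarzian (distortion/Koebe control), the quadratic nature of $c$ (to convert the exponent hypothesis into control of returns to $c$), and Keller's quantitative Hopf-decomposition estimates relating the pointwise exponent to the recurrence of typical orbits to the critical point.
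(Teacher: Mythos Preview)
The paper does not prove Theorem~\ref{thmC}: it is quoted verbatim as a result of Keller \cite{keller1990exponents} and is used only as background to motivate the definition and study of the Lyapunov exponent in Section~\ref{LyapunovExponentSection}. There is therefore nothing in the paper to compare your proposal against.

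That said, a brief comment on your strategy. The necessity direction is essentially standard and your outline is sound, though note that Theorem~\ref{thmC} as stated assumes only a \emph{nonflat} critical point, not the quadratic hypothesis~\ref{itm:A3}; your integrability argument for $\log|T'|$ and the inverse-square-root density estimate should be adapted to a critical point of arbitrary finite order $l$. For sufficiency, your inducing/Hofbauer-tower approach is indeed in the spirit of Keller's original paper (whose title already mentions Hopf decompositions), and you have correctly identified the crux: converting the a.e.\ positive exponent into integrability of the inducing time. One caution: your elimination of the Cantor-attractor case relies on the claim that the a.e.\ exponent there is $0$, but as the paper notes right after Theorem~\ref{thmC}, this is only known for the nondegenerate (quadratic) case, and for sufficiently high critical order there exist Cantor attractors with sensitive dependence. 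So in the general nonflat setting this step needs more care; Keller's actual argument does not proceed by first ruling out the Cantor case via a zero-exponent claim, but rather works directly with the Hopf decomposition and the relation between the exponent and recurrence to the critical point.
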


Of course if  $T$ has a periodic attractor, the Lyapunov exponent will be negative.
The situation is different whenever $T$ has a Cantor attractor.
For an S-unimodal map with nondegenerate critical point that also has a Cantor
attractor, the Lyapunov exponent will be $0$, while there are families of unimodal maps with critical point of sufficiently high order, which have Cantor attractors with
sensitive dependence on initial conditions, see \cite[Section~5]{thunberg2001periodicity}.

\subsection{Markov chains}

Recall that a Markov chain $\{X_{t} : t\in\mathbb{N}\}$ on the interval $I$ is given by \emph{transition probabilities}
$$
P_x(A) = \mathbb{P}\{X_{t+1} \in A \mid X_t=x\}
$$
(the probability that a chain at $x\in I$ will be in a set\footnote{
All sets considered will be assumed to be measurable.
For brevity's sake, we omit the word `measurable' everywhere in this text.
}
$A\subset I$ after one step)
and an initial distribution $\rho_0(A) = \mathbb{P}\{X_{0} \in A\}$.
In the particular case where all $P_x$, $x\in I$, and $\rho_0$ are absolutely continuous (with respect to $\Leb$) and are given by densities $p(x,\cdot)$ and $h_0$, respectively, we have
$$
P_x(A) = \int_{A} p(x,y) dy, \quad \rho_0(A)=\int_A h_0(y) dy.
$$
The map $p\colon I\times I \to \mathbb{R}_+$ (known as the \emph{stochastic kernel}) plays the role that the transition matrix does in the theory of Markov chains with a finite state space.
For $P_x$ to be a probability, it should satisfy $\int p(x,y) dy = 1$ for every $x\in I$.

Denote with $\cM$ the space of (real-valued) Radon measures on $I$.
There is an associated operator $\cL\colon\cM\to\cM$ (called the \emph{Markov operator} corresponding to $P$) acting by
\begin{equation} \label{MO}
\cL \rho = \int P_x d\rho(x), \quad \rho \in \cM,
\end{equation}
that is,
$\cL \rho(A) = \int P_x(A) d\rho(x)$ for every $A\subset I$, or, equivalently, 
$$
\int \varphi d\cL\rho = \iint \varphi(y) dP_x(y) d\rho(x)
$$
for all $\varphi\in C^0$, where $C^0$ denotes the Banach space of continuous functions on $I$ with the $\sup$ norm.
We note that $\mathcal L\colon L^1 \to L^1$ is an isometry, where $L^1$ is intended, from now on, with respect to the Lebesgue measure.
If the chain is given by the kernel $p$, formula \eqref{MO} restricted to $L^1$ becomes
\begin{equation} \label{MO1}
(\cL h)(y) = \int p(x,y) h(x) dx, \quad h \in L^1.
\end{equation}

If $\rho_t$ denotes the distribution of the random variable $X_t$, then the distribution of $X_{t+1}$ is $\rho_{t+1} = \cL \rho_t$.
In other words, fixing the distribution $\rho_0$ for $X_0$, the entire sequence of future distributions can be obtained by iterating with $\cL$.

A measure $\mu\in\cM_+$ is said to be \emph{stationary} if
$$
\cL\mu = \mu.
$$
Every stationary measure $\mu$ gives rise to a shift-invariant measure $\mathbb{P}_\mu$ on the sequences space $\Omega=\{(x_t)_{t\in\mathbb{N}} : x_t\in I$\} of realizations of the process, such that $\mathbb{P}_\mu(x_t\in A)=\mu(A)$ for all $t\in\mathbb{N}$ (see, e.g., \cite{koralov2012theory}).
We say that $\mu$ is \emph{ergodic} if $\mathbb{P}_\mu$ is ergodic.
In the next section we will show that, under some mild conditions, our model admits a unique (and thus ergodic) absolutely continuous stationary probability $\mu$ with a density $h\in BV$.
Then, by the Ergodic Theorem (see, e.g., \cite[Remark~C4.1]{bhattacharya2007random}), for every $f \in L^1(\mu)$,
\begin{equation*}%\tag{ET}%\label{ET}
%\bar{f}_n :=
\frac{1}{n} \sum_{t=1}^{n} f(X_t) \xrightarrow[n\to\infty]{} 
%\mathbb{E}_{\mu}f = 
\int f d\mu, 
\quad \text{$\mathbb{P}_\mu$-almost surely}.
\end{equation*}
In particular, realizations of the process are distributed in the state space according to the measure $\mu$.

\subsection{Coupling with a stochastic process} \label{Coupling_section}

We are now ready to define a Markov chain that describes our model.
It is obtained as a deterministic unimodal map $T\colon I\to I$ satisfying assumptions of Section~\ref{UnimodalMapsSection}, coupled with a stochastic process, namely, by perturbing $T$ with additive noise.
Starting from \eqref{MCAddNoise} as our main motivation, on the one hand, we consider a more general class and, on the other, impose some mild technical restrictions that are necessary for rigorous analysis.

Since the noise varies in a neighborhood of 0, we will need to extend the state space on the negative axes.
We will see in a moment, however, that such an extension is irrelevant for the asymptotic behavior of the perturbed system, whose random trajectories spend all the time, but a relatively short transient, on the positive unit interval.

We fix $T$ and parametrize the chain by the {\em rebalance time} $\n$ (which is roughly inverse to the variance of the noise), consequently indexing with $\n$ the chain $(X_t^{(\n)})$, the transition probabilities $P_x^{(\n)}$ and the stochastic kernel $p_{\n}(x,y)$ where it is necessary.
We will be interested in the limit for $\n\to\infty$.

We also need to assume that the noise is compactly supported, in order for trajectories of the process to stay bounded.
Compared to the Gaussian noise in \eqref{MCAddNoise}, this is done by truncating the distribution tails that are exponentially small for large $\n$, see Section~\ref{MainExampleSection} for the main example.

Denote by
$$
\Gamma:=1-\Delta
$$
the gap between $T(c)$ and $1$.
We now extend the domain of definition of $T$ to the larger interval $[-\Gamma, 1]$ (which, by abuse of notation, will be still denoted by $I$) so that $T$ is continuous at $0$ and on $[-\Gamma, 0)$ is $C^4$ smooth, positive and decreasing, with $T(-\Gamma)<\Delta$.\footnote{A similar extension was considered in \cite{baladi1996strong} to allow perturbations with additive noise; in particular, it was supposed that $T(I)\subset \inter(I)$ and that $T$ admits an extension to some compact interval $J\supset I$, preserving all the previous properties and satisfying $T(\partial J)\subset \partial J$.
Notice that with this extension the map $T$ is not anymore of class $C^4$ as prescribed at the beginning of Section~\ref{UnimodalMapsSection}, but this regularity still persist on the interval $(0,1)$ and this will be enough for the next considerations.}

%We now define the chain that describes our model.
%In order to construct the chain, it is enough to define the transition probabilities.
To construct the chain, we need to define transition probabilities.
Let $g_{x,\n}(y)$ be a probability density supported on a compact interval $[-s(x), s(x)]$.
We assume that
\begin{itemize}
    \item $0<s(x)<\Gamma/2$ for $x\in(0,1)$,
    \item $T(x) - s(x) > 0$ for $x\in(0, 1-\Gamma/2]$,
    \item $T(x) - s(x) > x$ for $x$ small (in particular, $T'(0) > 1$).
\end{itemize}
We set for simplicity $s(x):=0$ for $x\leq 0$ (meaning $P_x^{(\n)} = \delta_{T(x)}$);
Lemma~\ref{support_lemma1} below shows that this choice does not affect the dynamics.
We will also assume that both the mean and the variance of $g_{x,\n}$ decrease to $0$ as $\n\to\infty$ and, for every $\varepsilon > 0$,
\begin{equation}\label{g_TV_bound}
    \sup_{x\in[\varepsilon, 1-\varepsilon]} |g_{x,\n}|_{TV} < \infty.
\end{equation}

Fix any initial distribution $\rho_0\in \text{BV}$ and define transition probabilities
\begin{equation}\label{Px}
P_x^{(\n)}(A)
:= \int_{-s(x)}^{s(x)} {\1}_A(T(x)+y) g_{x, \n}(y)dy,
\end{equation}
which correspond to the stochastic kernel
\begin{equation}\label{pn}
p_\n(x,y) = g_{x,\n}(y - T(x)).
\end{equation}
Informally speaking, the probability that the chain steps from $x$ to $A$ will be high whenever $T(x)$ falls in $A$.
Equivalently, we can write
\begin{equation}\label{Proc1}
X_{t+1}^{(\n)} = T( X_{t}^{(\n)})+Y_{t+1}, \;\text{ where }\; Y_{t+1} \sim g_{x,\n}.    
\end{equation}
The values of $X_{t+1}^{(\n)}$ are spread in a neighborhood of $T(x)$ due to the addition of the random variable $Y_{t+1}$.

\subsection{The leverage model}  \label{MainExampleSection}

We will slightly modify \eqref{MCAddNoise} to satisfy the technical assumptions listed above.
The unimodal map is
\begin{equation}\label{unimodal_map_rewritten}
T(x) = \frac{|x(1-x)|}{\sqrt{b x^2 + \omega(1-x)^2}},
\end{equation}
where the parameters $\phi^*$ and $\omega$ are such that $T$ satisfies assumptions \ref{itm:A1}--\ref{itm:A5}. (see Section~\ref{UnimodalMapsSection}).
It always has negative Schwarzian derivative, as we verified numerically.
The condition $T(\Delta) < c < \Delta < 1$ defines a nonempty subset of parameters (see Fig.~\ref{fig:dyncore}).
Notice that $T'(0) = 1/\sqrt{\omega} > 1$.

We want $g_{x,\n}$ to be (truncated) normal with a variance close to \eqref{noise_variance}.
For this, let us denote by $\mathcal{N}_a(0,\sigma)$ the smoothed truncated normal distribution with the density
$
g(y) = c_{a,\sigma} \chi_a(y) e^{-\frac{y^2}{2\sigma^2}},
$
where $c_{a,\sigma}$ is so that $\int g(y) dy = 1$ and $\chi_a$ is a smooth bump function supported on $[-a,a]$.\footnote{The smoothness of the  truncation function is only used in the proof of Theorem~\ref{Lexp_cont_thm}. }
For instance, we may set 
$$
\chi_a(y) =
\begin{cases}
1, &\mbox{if } |y|\le (1-\varepsilon)a,\\
\Psi(\frac{y\pm(1-\varepsilon)a}{\varepsilon a}), &\mbox{if } (1-\varepsilon)a<|y|\le a,\\
0, &\mbox{if } |y|>a,
\end{cases}
$$
where $\Psi(t) = e^{1-\frac{1}{1-t^2}}$ is the standard $C^\infty$ bump function on $[-1, 1]$.
We set
\begin{equation}\label{noise_variance_rewritten}
\sigma_\n(x) := \frac{b x^3 \sqrt{1-x^2}}{\sqrt{\n} \bigl(b x^2 + \omega (1-x)^2\bigr)^{3/2}}.
\end{equation}
Denote $\sigma(x):=\sigma_1(x)$ and $\sigma_{\max} := \max_{x\in[0,1]} \sigma(x)$.
Set
\begin{equation}\label{noise_density}
g_{x,\n}(y) := c_{x,\n} \chi_{s(x)}(y)  e^{-\frac{y^2}{2\sigma^2_\n(x)}},
\end{equation}
where
$$
s(x) := \frac{\sigma(x)}{\sigma_{\max}} \min\Bigl\{\frac{\Gamma}{2}, T(1-\frac{\Gamma}{2})\Bigr\}
\quad \text{and} \quad
c_{x,\n} := \biggl( 
\int_{-s(x)}^{s(x)} \chi_{s(x)}(y) e^{-\frac{y^2}{2\sigma^2_\n(x)}} dy 
\biggr)^{-1}.
$$
We can then rewrite \eqref{Proc1} as
\begin{equation}\label{Proc2}
X_{t+1}^{(\n)} = T( X_{t}^{(\n)}) + \sigma_\n(X_t)Z_{t+1}, \quad Z_{t+1}\sim \mathcal{N}_{b_\n}(0,1),   
\end{equation}
with $b_\n := \frac{\sqrt{\n}}{\sigma_{\max}} \min\{\frac{\Gamma}{2}, T(1-\frac{\Gamma}{2})\}\to\infty$ as $\n\to\infty$.
The fact that, for $\n$ fixed, $g_{x,\n}$ are rescaled copies of the same distribution will be used in Section~\ref{RandomTransformSection} to explicitly describe random maps associated to the process.
We get the following stochastic kernel:
\begin{equation}\label{main_ex_st_kernel}
p_\n(x,y)
= c_{x,\n} \chi_{s(x)}(y-T(x)) e^{-\frac{(y-T(x))^2}{2\sigma_\n^2(x)}}.
\end{equation}
Notice also that the support of $p_\n(x,y)$ does not depend on $\n$ (see Fig.~\ref{fig:p}).

\begin{figure}[h]
\centering
{\includegraphics[width=0.5\textwidth]{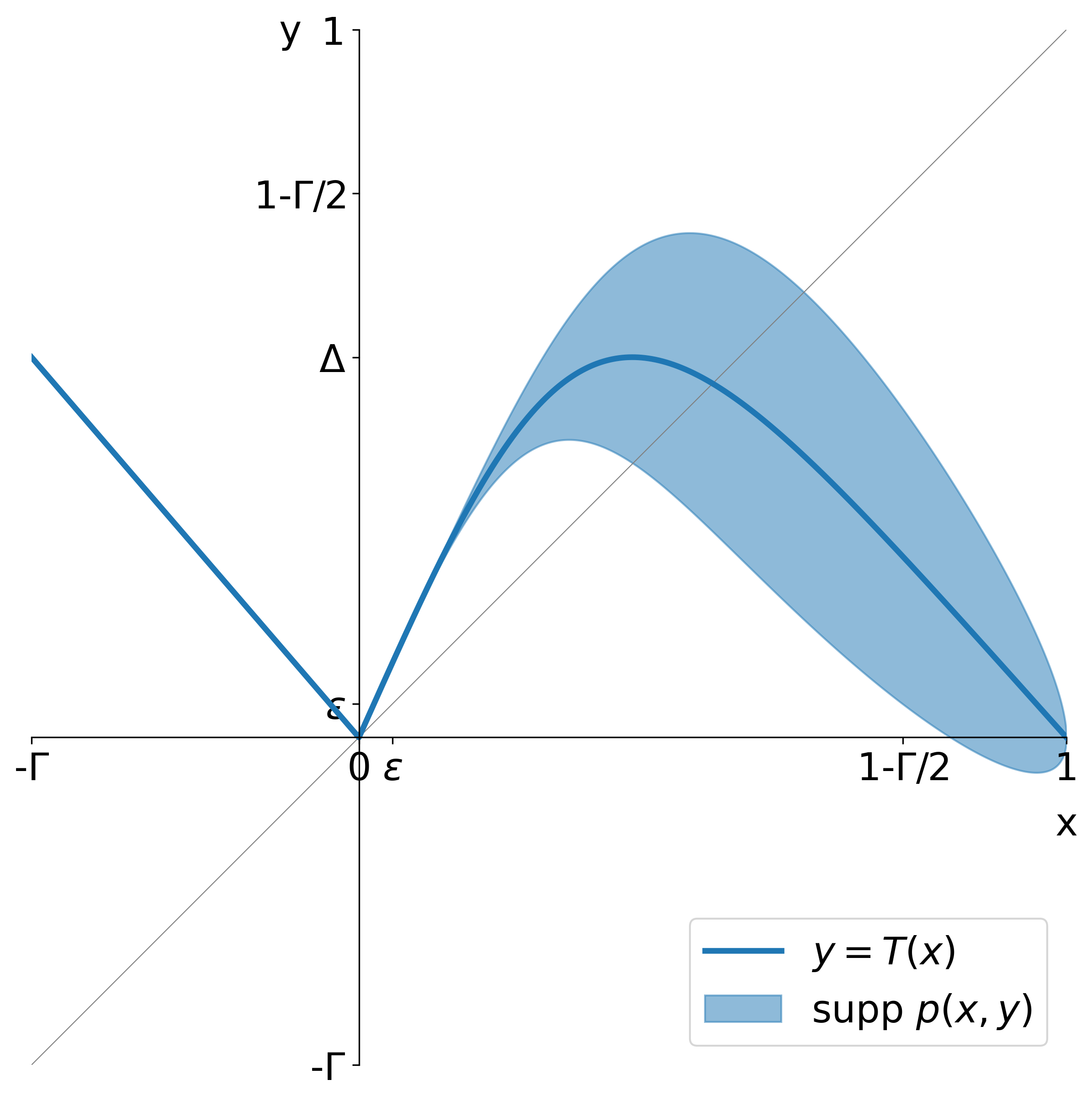}}
\caption{Support of the kernel \eqref{main_ex_st_kernel}.}
\label{fig:p}
\end{figure}

Finally, \eqref{g_TV_bound} holds, because $|g_{x,\n}|_{TV} = 2 c_{x,\n}$ and the latter is proportional to $1/\sigma_\n(x)$, which is bounded on $[\varepsilon, 1-\varepsilon]$.

\subsection{Random transformations}\label{RandomTransformSection}

Our model was defined as a Markov chain.
We now present a slightly different, yet equivalent, point of view.
Namely, we will pick up a family of maps $T_\eta\colon I\to I$, $\eta\in[0,1]$, in such a way that
\begin{equation} \label{RT2MC}
\Leb\{\eta : T_\eta(x)\in A\} = P_x(A)
\end{equation}
for all $A\subset I$.
We can then define a stochastic process
\begin{equation} \label{rmp}
\bar x_{t+1} = T_{\eta_{t+1}}(\bar x_{t}),
\end{equation}
where $\eta_t$ are independent and uniformly distributed in $[0,1]$.
We can write $\bar x_t = T_{\eta_t}\circ\cdots\circ T_{\eta_1} \bar x_0$, where $(\eta_t)_{t\in\mathbb{N}}$ is an i.i.d.\ stochastic process, 
i.e.\ the process \eqref{rmp} follows the orbits under the concatenation of randomly chosen maps from the family.
One can show that the two processes are equivalent, see, for instance, \cite{kifier1986ergodic}.
Conversely, starting with a family of maps $T_\eta$ one can use \eqref{RT2MC} to define transition probabilities $P_x$ and thus a Markov chain.

Rewriting \eqref{RT2MC} as
$
P_x(A) = \int_0^1 \1_A(T_\eta(x)) d\eta
$
and plugging into \eqref{MO} we get the \emph{disintegration formula} for the Markov operator:
\begin{equation}\label{dis}
\cL = \int L_\eta d\eta,
\end{equation}
where $L_\eta$ are the transfer operators associated with $T_\eta$.
In particular, a measure $\mu$ is stationary for the Markov chain if and only if it satisfies $\mu = \int L_\eta \mu d\eta$, i.e., for all $A\subset I$,
\begin{equation}\label{stationary_measure_random_maps}
\mu(A) = \int \mu(T_\eta^{-1}A) d\eta.
\end{equation}
Equation \eqref{stationary_measure_random_maps} is usually taken as the definition of a stationary measure for the family of random maps.
Every such measure corresponds to a product measure that is invariant for the skew-product with the Bernoulli shift in the base and the maps $T_\eta$ in the fibers;
we refer to \cite[Section~2]{aimino2015annealed} for details.

\begin{figure}[h]
\centering
\includegraphics[width=1\linewidth]{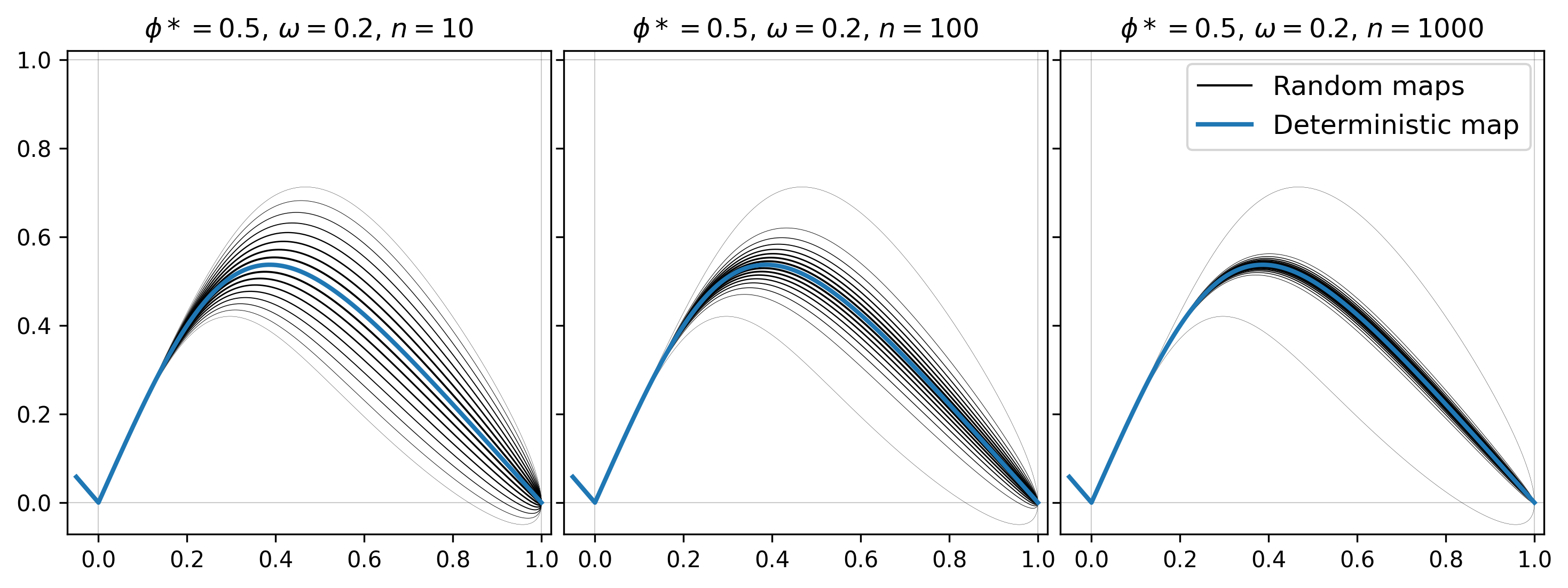}
\caption{Random maps \eqref{main_example_random_maps} for $\phi^*=0.5$, $\omega=0.2$, $\n=10, 100, 1000$, $\eta=k/16$.}
\label{fig:random_maps}
\end{figure}

As follows from \eqref{Proc2}, the random maps for the main example have the form
\begin{equation} \label{main_example_random_maps}
T_\eta(x) = T(x) + q_{\n}(\eta) \sigma_{\n}(x),
\end{equation}
where $q_{\n}$ is the quantile function of the truncated normal distribution $\mathcal{N}_{b_\n}(0,1)$.
Indeed, since $q_{\n}$ maps the uniform measure on $[0,1]$ to the truncated standard Gaussian measure on $[-b_{\n},b_{\n}]$, we have
\begin{align*}
P_x(A)
=& \mathbb{P}\{T(x) + \sigma_{\n}(x)Z_{t}\in A\}\\
=& \Leb\{\eta : T(x) + \sigma_{\n}(x)q_{\n}(\eta)\in A\}\\
=& \Leb\{\eta : T_\eta(x)\in A\}.
\end{align*}
We can equivalently rewrite \eqref{main_example_random_maps} as
\begin{equation} \label{main_example_random_maps_alt}
T_\eta(x) = T(x) + \tilde{q}_{\n}(\eta) \sigma(x),
\end{equation}
where $\tilde{q}_{\n}$ is the quantile function of $\mathcal{N}_{b_1}(0,\frac{1}{\n})$.
Notice that for different $\n$, the set $\{T_\eta\}_{\eta\in[0,1]}$ consists of the same maps, however the ones close to $T = T_{\frac{1}{2}}$ are given bigger weights for large $\n$.
More precisely, for every $\delta > 0$, we have
\begin{equation}\label{cat}
\sup_{\eta\in[\delta, 1-\delta]} \sup_{x\in I}|T_\eta(x) - T(x)|
\le \sigma_{\max} \sup_{\eta\in[\delta, 1-\delta]} |\tilde{q}_{\n}(\eta)|
\xrightarrow[\n\to\infty]{} 0.
\end{equation}

\section{Stationary measure} \label{StationaryMeasureSection}
We are now going to establish the existence of a unique stationary measure for the chain. 
This will be accomplished in the following steps: 
we first prove the Lasota-Yorke inequality \eqref{TT}; 
as a consequence we will get a finite number of ergodic absolutely continuous stationary measures whose supports are mutually disjoint up to sets of zero Lebesgue measure.
The uniqueness will be proved by showing that all the previous components share a measurable set of positive Lebesgue measure. 
Throughout this section $\n$ is fixed and we will omit it from notations.

\subsection{Existence}

We will first show that there are \emph{finitely many} ergodic stationary densities of bounded variation.
The following lemma will be useful in the sequel.

\begin{lemma} \label{Lh_TV_estimate_lemma}
For any $\rho\in BV$, if $C:=\esssup_{x\in\supp\rho} |p(x, \cdot)|_{TV} < \infty$, then
$$
|\cL\rho|_{TV} \le C \|\rho\|_1 \quad \text{and} \quad \|\cL\rho\|_{BV} \le (C+1) \|\rho\|_1.
$$
\end{lemma}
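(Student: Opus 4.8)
The plan is to work directly with the kernel representation \eqref{MO1}: since $\rho\in BV\subset L^1$, the measure $\cL\rho$ has the density $(\cL\rho)(y)=\int p(x,y)\,\rho(x)\,dx$ (I write $\rho$ also for its $BV$ density), and I want to bound separately its essential total variation and its $L^1$ norm, then add them. Write $S:=\supp\rho$, so $\rho(x)=0$ for $x\notin S$ while $|p(x,\cdot)|_{TV}\le C$ for Lebesgue‑a.e.\ $x\in S$.

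For the variation bound I would fix a finite increasing sequence $y_0<y_1<\cdots<y_m$ of Lebesgue points of $\cL\rho$ and estimate, by the triangle inequality and Fubini,
$$
\sum_{i=1}^{m}\bigl|(\cL\rho)(y_i)-(\cL\rho)(y_{i-1})\bigr|
\le \int_S |\rho(x)|\sum_{i=1}^{m}\bigl|p(x,y_i)-p(x,y_{i-1})\bigr|\,dx
\le \int_S |\rho(x)|\,|p(x,\cdot)|_{TV}\,dx
\le C\,\|\rho\|_1 .
$$
Taking the supremum over all such sequences gives $|\cL\rho|_{TV}\le C\|\rho\|_1$ (in particular $\cL\rho\in BV$). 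Equivalently, and avoiding any discussion of representatives, one may use the duality formula $|\cL\rho|_{TV}=\sup\{\int(\cL\rho)\varphi' : \varphi\in C^1_c(\inter I),\ \|\varphi\|_\infty\le1\}$, write $\int(\cL\rho)\varphi'=\int\rho(x)\bigl(\int p(x,y)\varphi'(y)\,dy\bigr)dx$ by Fubini, and bound the inner integral by $|p(x,\cdot)|_{TV}\|\varphi\|_\infty\le C$ on $S$ using the definition of the distributional derivative of $p(x,\cdot)\in BV$.

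For the $L^1$ bound, since $p\ge0$ one has $|(\cL\rho)(y)|\le\int p(x,y)|\rho(x)|\,dx$ pointwise, hence by Fubini and $\int p(x,y)\,dy=1$,
$$
\|\cL\rho\|_1\le\int|\rho(x)|\int p(x,y)\,dy\,dx=\|\rho\|_1 ,
$$
which is just the $L^1$‑contractivity of $\cL$ (its isometry property, noted after \eqref{MO}, on the positive cone). Combining the two estimates, $\|\cL\rho\|_{BV}=|\cL\rho|_{TV}+\|\cL\rho\|_1\le C\|\rho\|_1+\|\rho\|_1=(C+1)\|\rho\|_1$.

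The only delicate points — the "main obstacle", such as it is — are measure‑theoretic bookkeeping: measurability of $x\mapsto|p(x,\cdot)|_{TV}$ and legitimacy of using the a.e.\ bound $|p(x,\cdot)|_{TV}\le C$ under the integral, the passage from a fixed partition estimate to the essential variation (sidestepped entirely by the duality formulation), and the Fubini interchanges, all of which are routine given $p\ge0$, $\int p(x,\cdot)=1$ and $\rho\in L^1$. No genuine analytic difficulty is involved.
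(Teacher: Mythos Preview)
Your proof is correct and follows essentially the same route as the paper's: bound the total variation by fixing a partition, moving the sum inside the integral via Fubini and the triangle inequality to get $\int_S |\rho(x)|\,|p(x,\cdot)|_{TV}\,dx\le C\|\rho\|_1$, then use the $L^1$-contractivity of $\cL$ for the second term. Your treatment is in fact slightly more careful than the paper's (you write $|\rho(x)|$ rather than $\rho(x)$, mention Lebesgue points, and offer the duality alternative), but the argument is the same.
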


\begin{proof}
For the first inequality we have
\begin{align*}
|\cL \rho|_{TV}
&= \sup \sum_{i} \left|
\int p(x, y_{i+1}) \rho(x) dx - \int p(x, y_{i}) \rho(x) dx
\right| \\
&\leq \sup \sum_{i}
\int \left| p(x, y_{i+1}) \rho(x) - p(x, y_{i}) \rho(x) \right| dx \\
&\leq \int |p(x, \cdot)|_{TV} \rho(x) dx
\leq C\|\rho\|_1.
\end{align*}
The second inequality follows from the first one, since the Markov operator is an isometry, i.e.\ $\|\cL \rho\|_1 = \|\rho\|_1$ for all $\rho\in L^1$.
\end{proof}

We say that a stochastic kernel $p(x,y)$ has \emph{uniformly bounded variations} if $|p(x,\cdot)|_{TV} \in L^\infty$, i.e.\ there is $C>0$ such that $|p(x,\cdot)|_{TV} \leq C$ for almost every $x\in I$.

\begin{proposition} \label{ubv_fin_many_prop}
If the kernel $p$ has uniformly bounded variations,
then the operator $\cL$ is quasi-compact and there exist finitely many ergodic stationary measures with densities in $BV$
and, moreover, their supports are mutually disjoint up to sets of zero Lebesgue measure.
\end{proposition}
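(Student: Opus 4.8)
The plan is to run the classical spectral argument for transfer operators on the space of bounded variation functions, which in the present setting is considerably streamlined by Lemma~\ref{Lh_TV_estimate_lemma}. The first step is a Lasota--Yorke inequality. Since $p$ has uniformly bounded variations, $C:=\esssup_{x\in I}|p(x,\cdot)|_{TV}$ is finite and dominates $\esssup_{x\in\supp\rho}|p(x,\cdot)|_{TV}$ for every $\rho$, so Lemma~\ref{Lh_TV_estimate_lemma} gives, for all $\rho\in BV$,
$$
\|\cL\rho\|_{BV}\le (C+1)\,\|\rho\|_1\le (C+1)\,\|\rho\|_{BV}.
$$
This is already a Lasota--Yorke inequality $\|\cL\rho\|_{BV}\le\theta\|\rho\|_{BV}+K\|\rho\|_1$ at the first iterate, with the best possible coefficient $\theta=0$ and $K=C+1$; in particular $\sup_n\|\cL^n\|_{BV}\le C+1$, so the spectral radius of $\cL$ on $BV$ is at most $1$. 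One also uses that $\cL$ maps $BV$ into $BV$ (which is exactly the content of the lemma) and that it is a positive operator preserving the integral, which is immediate from \eqref{MO1}.

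Next I would invoke the classical Hennion / Ionescu-Tulcea--Marinescu theorem. Its hypotheses are precisely the inequality above together with the fact that bounded subsets of $BV$ are relatively compact in $L^1$ --- Helly's selection theorem on the bounded interval $I$, combined with the estimate $\|f\|_\infty\lesssim\|f\|_1+|f|_{TV}$ --- and its conclusion is that the essential spectral radius of $\cL$ on $BV$ is at most $\theta=0$. To see that this is strictly smaller than the spectral radius, and hence that $\cL|_{BV}$ is genuinely quasi-compact, I would first produce a $BV$ stationary density by a Krylov--Bogolyubov argument: starting from any probability density $\rho_0\in BV$, the Ces\`aro averages $\frac1n\sum_{k=0}^{n-1}\cL^k\rho_0$ are probability densities bounded in $BV$ by the previous step, hence precompact in $L^1$, and any $L^1$-limit point is a $BV$ probability density fixed by $\cL$; thus $1$ is an eigenvalue and the spectral radius equals $1$. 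Quasi-compactness then yields that the fixed subspace $\{\rho\in BV:\cL\rho=\rho\}$ is finite-dimensional (and, since $\sup_n\|\cL^n\|_{BV}<\infty$, that there is no Jordan block at the eigenvalue $1$).

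The last and most substantial step is to pass from ``finite-dimensional fixed space'' to ``finitely many ergodic stationary measures with pairwise disjoint supports''. This is where positivity of $\cL$ enters essentially, and it is a black-box application of the structure theory of quasi-compact positive operators on $BV$ (see \cite{baladi1996strong} and the references therein, or \cite{viana1997stochastic}): because $\cL$ preserves the cone of nonnegative densities and the total mass, the finite-dimensional fixed space is spanned by finitely many nonnegative fixed densities $h_1,\dots,h_k\in BV$ whose supports $A_i:=\supp h_i$ are pairwise disjoint modulo Lebesgue, and every nonnegative fixed density is a nonnegative combination of the $h_i$. The normalized measures $\mu_i:=h_i\,dx/\|h_i\|_1$ are then exactly the ergodic absolutely continuous stationary measures: each is ergodic because $h_i$ is extremal in the cone of fixed densities, any ergodic stationary density is extremal hence proportional to one of the $h_i$ (so there are at most $k\le\dim\{\cL\rho=\rho\}$ of them), and distinct ergodic stationary measures are mutually singular --- which, for absolutely continuous measures, forces their essential supports to be disjoint up to Lebesgue-null sets.

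I expect the main obstacle to be precisely this last step: reconstructing, or carefully citing, the ergodic decomposition for positive quasi-compact operators, and in particular the claim that distinct ergodic components have \emph{disjoint supports} rather than merely being linearly independent. By contrast, the Lasota--Yorke estimate --- typically the technical heart of the argument for piecewise-expanding maps --- is here essentially free from Lemma~\ref{Lh_TV_estimate_lemma}, so the analytic weight of the proof shifts entirely onto the spectral and structure-theoretic input.
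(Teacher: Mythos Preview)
Your proposal is correct and follows essentially the same route as the paper: derive a Lasota--Yorke inequality directly from Lemma~\ref{Lh_TV_estimate_lemma}, feed it into the Ionescu-Tulcea--Marinescu/Hennion machinery together with the compact embedding $BV\hookrightarrow L^1$, and then invoke the standard structure theory of positive quasi-compact operators for the ergodic decomposition. Your write-up is in fact more explicit than the paper's in two places---you observe that the contraction coefficient is actually $\theta=0$ (the paper writes the equivalent but less direct ``$\eta\|\rho\|_{BV}$ for any $\eta<1$''), and you supply the Krylov--Bogolyubov step to verify that $1$ is genuinely an eigenvalue---while correctly flagging that the disjoint-supports conclusion is the part that relies on cited black-box results rather than on anything proved in the paper.
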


\begin{proof}
By Lemma~\ref{Lh_TV_estimate_lemma}, for every $n$,
$$
\|\cL^n \rho\|_{BV} 
= |\cL^n \rho|_{TV} + \|\cL^n \rho\|_1
\leq C\|\cL^{n-1} \rho\|_1 + \|\rho\|_1
= (C+1)\|\rho\|_1.
$$
In particular,
\begin{equation}\label{TT}
\|\cL \rho\|_{BV}
%= |\cL \rho|_{TV} + \|\cL \rho\|_1
\le (C+1)\|\rho\|_1
\le  \eta\|\rho\|_{BV}+(C+1)\|\rho\|_1
\end{equation}
for any $\eta<1$.
This is the {\em Lasota-Yorke inequality}. 
The latter, plus the fact that  $BV$ is compactly embedded in $L^1$, implies that the peripheral spectrum of $\cL$ is discrete and therefore the chain will admit \emph{finitely many} (at least one) absolutely continuous ergodic stationary measures, with supports that are mutually disjoint up to sets of zero Lebesgue measure. 
Moreover, the essential spectral radius is strictly smaller than the  spectral radius ({\em spectral gap}).
These properties, which are consequences of the Ionescu-Tulcea-Marinescu theorem, are summarized by saying that the operator $\cL$ acting on $BV$ is {\em quasi-compact}, see, e.g., \cite{baladi2000positive,broise1996transformations,gora1997laws} for an exhaustive presentation of these results and  \cite[Section~2.3]{bahsoun2014pseudo} for a specific  application to random systems.
\end{proof}

\begin{remark}\label{cro}
\normalfont
Let us mention that whenever the operator $\cL$ is quasi-compact and the largest eigenvalue, which is $1$ in our case, is simple and therefore there is only one stationary measure with density in $BV$, then the norm of $\|\cL^k f\|_{BV}$ goes exponentially fast to zero when $k\to \infty$, for $f\in BV$ and  $\int fdx=0$ (exponential decay of correlations).
This fact will be extensively used in Section~\ref{LyapunovExponentSection}.
\end{remark}

Since the variance \eqref{noise_variance_rewritten} vanishes at 0 and 1, the kernel \eqref{main_ex_st_kernel} is in fact unbounded.
However, we can still apply Proposition~\ref{ubv_fin_many_prop} under a suitable restriction of the domain of $\cL.$ We first state a general result which allows us to confine the stationary measures.
For any $\varepsilon > 0$, define the interval
$$
\II:= [\varepsilon, 1-\Gamma/2].
$$

\begin{lemma} \label{support_lemma1}
Under the assumptions of Section~\ref{Coupling_section}, there is $\varepsilon>0$ such that any stationary measure $\mu$ has $\supp\mu\subset \{0\}\cup\II$.
If $\mu$ is continuous, $\supp\mu\subset\II$.
\end{lemma}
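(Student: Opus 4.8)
The plan is to first confine any stationary measure $\mu$ to $[0,1-\Gamma/2]$, and then to push $\mu$ off a small right-neighbourhood of the attracting fixed point $0$ by exploiting the local expansion of $T$ at the origin. For the first part I would check that $[0,1-\Gamma/2]$ is forward invariant for the chain: one has $P_0=\delta_0$, while for $x\in(0,1-\Gamma/2]$ the support of $P_x$ is $[T(x)-s(x),\,T(x)+s(x)]$, and $T(x)-s(x)>0$ together with $T(x)+s(x)<\Delta+\Gamma/2=1-\Gamma/2$ (using $T(x)\le\Delta$ and $s(x)<\Gamma/2$). Then I would show that from any $x\in I$ the chain enters $[0,1-\Gamma/2]$ in at most two steps: from $x\in[-\Gamma,0)$ one lands at $T(x)\in(0,T(-\Gamma)]\subset(0,\Delta)$; from $x\in(1-\Gamma/2,1]$ one lands in $(-\Gamma/2,1-\Gamma/2)$, and a point that falls in $(-\Gamma/2,0)$ is in turn sent to $T(\cdot)\in(0,T(-\Gamma/2))\subset(0,\Delta)$ --- here I use that the extension of $T$ to $[-\Gamma,0)$ is positive, decreasing, and satisfies $T(-\Gamma)<\Delta$. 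Hence $P^2_x([0,1-\Gamma/2])=1$ for every $x$, so $\mu=\cL^2\mu$ forces $\mu([0,1-\Gamma/2])=1$.

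\emph{Choice of $\varepsilon$.} I would fix $\delta_0\in(0,1-\Gamma/2]$ with $T(x)-s(x)>x$ on $(0,\delta_0]$, set $\beta:=\min_{x\in[\delta_0,\,1-\Gamma/2]}\bigl(T(x)-s(x)\bigr)$, which is strictly positive because $T-s$ is continuous and pointwise positive on the compact interval $[\delta_0,1-\Gamma/2]$, and take $\varepsilon:=\min(\beta,\delta_0)$. Then for $x\in\II=[\varepsilon,1-\Gamma/2]$ I get $T(x)-s(x)\ge\varepsilon$ --- distinguishing $x\le\delta_0$, where $T(x)-s(x)>x\ge\varepsilon$, from $x\ge\delta_0$, where $T(x)-s(x)\ge\beta\ge\varepsilon$ --- while $T(x)+s(x)<1-\Gamma/2$ as above; hence $\II$, and trivially $\{0\}$, are forward invariant.

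\emph{No mass in $(0,\varepsilon)$.} Since $[0,1-\Gamma/2]=\{0\}\sqcup(0,\varepsilon)\sqcup\II$ with $\{0\}$ and $\II$ forward invariant, stationarity yields, for every $n\ge1$, that $\mu((0,\varepsilon))=\int_I P^n_x((0,\varepsilon))\,d\mu(x)=\int_{(0,\varepsilon)}P^n_x((0,\varepsilon))\,d\mu(x)$, because $P^n_x((0,\varepsilon))=0$ for $x\in\{0\}\cup\II$. For $x\in(0,\varepsilon)\subset(0,\delta_0]$, once a trajectory leaves $(0,\varepsilon)$ it enters $[\varepsilon,1-\Gamma/2)\subset\II$ and never returns, so $P^n_x((0,\varepsilon))=\mathbb{P}_x(X_1,\dots,X_n\in(0,\varepsilon))$ is non-increasing in $n$; dominated convergence then gives $\mu((0,\varepsilon))=\int_{(0,\varepsilon)}\mathbb{P}_x\bigl(X_k\in(0,\varepsilon)\text{ for all }k\ge1\bigr)\,d\mu(x)$. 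On the event $\{X_k\in(0,\varepsilon)\text{ for all }k\}$ every $X_k$ lies in $(0,\delta_0]$ and $X_{k+1}\ge T(X_k)-s(X_k)>X_k$, so $(X_k)$ is strictly increasing and bounded, hence converges to some $\ell\in(0,\delta_0]$; passing to the limit and using continuity of $T$ and $s$ gives $\ell\ge T(\ell)-s(\ell)>\ell$, a contradiction. Therefore $\mu((0,\varepsilon))=0$, and with the first step this gives $\mu(\{0\}\cup\II)=1$, i.e.\ $\supp\mu\subset\{0\}\cup\II$; if $\mu$ is continuous then $\mu(\{0\})=0$, so $\supp\mu\subset\II$.

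The delicate point is the last step: confinement to $[0,1-\Gamma/2]$ is a soft consequence of the support structure of the transition probabilities, but discarding a right-neighbourhood of the fixed point $0$ genuinely requires the local expansion $T(x)-s(x)>x$, the observation that a trajectory cannot re-enter $(0,\varepsilon)$ once it has left it, and the monotonicity-plus-convergence argument along orbits. One should also be slightly careful with the two-step confinement argument near the right endpoint $1$, where the extension of $T$ to the negative axis is what prevents trajectories from escaping $I$.
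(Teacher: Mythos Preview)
Your proof is correct. The approach, however, differs from the paper's in two places. First, you confine $\mu$ to $[0,1-\Gamma/2]$ via a two-step argument that tracks trajectories through the extended negative half, whereas the paper confines in one step to the slightly larger interval $K_\Gamma=[-\Gamma/2,1-\Gamma/2]$ simply by observing that $P_x(K_\Gamma^c)=0$ for every $x$. Second, and more substantively, to rule out mass on $(0,\varepsilon)$ the paper invokes the Poincar\'e recurrence theorem for the shift on the path space $(\Omega,\mathbb{P}_\mu)$: if $\mu$ charged $[-\Gamma,\varepsilon)\setminus\{0\}$, recurrence would force infinitely many visits there, contradicting the fact that trajectories with $x_0\neq 0$ satisfy $x_t>\varepsilon$ for all but finitely many $t$. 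You instead give a direct argument via forward invariance of $\II$ and $\{0\}$, dominated convergence, and a monotone-limit contradiction along paths that stay in $(0,\varepsilon)$. Your route is more elementary in that it avoids the shift-invariant measure on sequences, and your explicit appeal to continuity of $T-s$ in the limit step $\ell\ge T(\ell)-s(\ell)>\ell$ actually makes rigorous what the paper's ``all but finitely many $x_t>\varepsilon$'' claim leaves implicit. The paper's Poincar\'e argument is shorter once that claim is granted, but both proofs rest on the same structural facts: strict increase of the chain on $(0,\varepsilon)$ and forward invariance of $\II$.
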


\begin{proof}
First, notice that any stationary measure is supported on the interval $K_\Gamma:=[-\Gamma/2, 1-\Gamma/2]$.
Indeed, by invariance,
$\mu(K_\Gamma^c) = \int P_x(K_\Gamma^c) d\mu(x) = 0$,
because $P_x(K_\Gamma^c)=0$ for all $x$.

Fix $\varepsilon>0$ such that $T(x)-s(x) > x$ for $x\in(0, \varepsilon)$.
By choosing a smaller $\varepsilon$ if needed, we may also assume that $T(x)-s(x) > \varepsilon$ for $x\in[\varepsilon, 1-\Gamma/2]$.
Then $\inf\supp P_x > \min\{x, \varepsilon\}$ for every $x\in K_\Gamma\setminus\{0\}$.
This means that for any realization $(x_t)$ of the process, either all $x_t=0$ (clearly, $0$ is a fixed point, since $P_0 = \delta_0$) or \emph{all but finitely many} $x_t>\varepsilon$.

On the other hand, if $\mu([-\Gamma, \varepsilon)\setminus\{0\})>0$, then by the Poincar\'{e} recurrence theorem, applied to the shift on $(\Omega, \mathbb{P}_\mu)$, $\mathbb{P}_\mu$-almost surely there would exist a realization $(x_t)$ with \emph{infinitely many} $0\neq x_t<\varepsilon$, which is not possible, as we showed above.
This finishes the proof.
\end{proof}

\begin{theorem} \label{cor_main_ex_bounded}
The chain defined in Section~\ref{Coupling_section} admits finitely many ergodic stationary measures with densities in $BV$.
Moreover, there is $\varepsilon>0$ such that $\supp\mu \subset \II$ for any such measure $\mu$.
\end{theorem}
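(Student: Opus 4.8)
The plan is to reduce the problem to the subinterval $\II$, on which the (otherwise unbounded) kernel \eqref{main_ex_st_kernel} does have uniformly bounded variations, and then to repeat the argument of Proposition~\ref{ubv_fin_many_prop} on a suitable $\cL$-invariant subspace of $BV$. By Lemma~\ref{support_lemma1} there is $\varepsilon>0$ such that every continuous --- hence, in particular, every absolutely continuous --- stationary measure is supported in $\II=[\varepsilon,1-\Gamma/2]$; shrinking $\varepsilon$ if necessary I would also arrange $\varepsilon<\Gamma/2$, so that $\II\subset[\varepsilon,1-\varepsilon]$. Then \eqref{g_TV_bound}, checked at the end of Section~\ref{MainExampleSection} (recall $|g_{x,\n}|_{TV}=2c_{x,\n}$, proportional to $1/\sigma_\n(x)$, which is bounded on $[\varepsilon,1-\varepsilon]$), gives
$$
C:=\esssup_{x\in\II}|p_\n(x,\cdot)|_{TV}<\infty .
$$

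Next I would introduce the closed subspace $V_\varepsilon:=\{f\in BV: f=0 \text{ a.e.\ on } I\setminus\II\}$, which is compactly embedded in $L^1$, and check that $\cL(V_\varepsilon)\subset V_\varepsilon$. Indeed, for $x\in\II$ the support of $p_\n(x,\cdot)$ is the interval $[T(x)-s(x),\,T(x)+s(x)]$; by the very choice of $\varepsilon$ in the proof of Lemma~\ref{support_lemma1} one has $T(x)-s(x)>\varepsilon$, while $T(x)+s(x)\le\Delta+\tfrac{\Gamma}{2}=1-\tfrac{\Gamma}{2}$ because $T\le\Delta$ and $s(x)\le\Gamma/2$. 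Hence $\supp p_\n(x,\cdot)\subset\II$, so $\cL\rho$ vanishes outside $\II$ whenever $\rho$ does, and $\cL|_{V_\varepsilon}$ is a well-defined positive operator that preserves the set of probability densities supported in $\II$. Applying Lemma~\ref{Lh_TV_estimate_lemma} on $V_\varepsilon$ and iterating, using that $\cL$ is an $L^1$-isometry, yields the Lasota--Yorke inequality \eqref{TT} for $\cL|_{V_\varepsilon}$.

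From here the conclusion follows exactly as in Proposition~\ref{ubv_fin_many_prop}: the Lasota--Yorke inequality together with the compact embedding $V_\varepsilon\hookrightarrow L^1$ and the Ionescu--Tulcea--Marinescu theorem imply that $\cL|_{V_\varepsilon}$ is quasi-compact, so its fixed-point set in $V_\varepsilon$ is finite-dimensional, nonempty (by positivity), and spanned by finitely many ergodic stationary densities whose supports are mutually disjoint up to sets of zero Lebesgue measure. Finally, to see that this catches \emph{all} absolutely continuous stationary measures of the chain, I would note that any such $\mu$ has, by Lemma~\ref{support_lemma1}, an $L^1$ density $h$ supported in $\II$; Lemma~\ref{Lh_TV_estimate_lemma} then forces $|h|_{TV}=|\cL h|_{TV}\le C\|h\|_1<\infty$, so actually $h\in V_\varepsilon$ and $\cL|_{V_\varepsilon}h=h$. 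Thus the absolutely continuous stationary measures of the chain are precisely the fixed probability densities of $\cL|_{V_\varepsilon}$, of which there are finitely many ergodic ones, each supported in $\II$ --- which is both assertions of the theorem.

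The only genuine difficulty, and the point I would flag as the main obstacle, is the one anticipated before the statement: since $\sigma_\n(x)\to0$ as $x\to0$ and as $x\to1$, the kernel \eqref{main_ex_st_kernel} is unbounded and Proposition~\ref{ubv_fin_many_prop} cannot be invoked directly. The whole argument therefore hinges on Lemma~\ref{support_lemma1} to excise those two bad regions and on the forward-invariance of $\II$ under the supports of the transition probabilities, which is what makes the restricted operator $\cL|_{V_\varepsilon}$ meaningful; everything else is a routine replay of the quasi-compactness argument.
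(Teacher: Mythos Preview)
Your proposal is correct and follows essentially the same approach as the paper: restrict to the subspace of densities supported in $\II$ (the paper calls it $Y\subset L^1$, you work in $V_\varepsilon\subset BV$), observe that the kernel has uniformly bounded variations there via \eqref{g_TV_bound}, check $\cL$-invariance of this subspace, and then rerun Proposition~\ref{ubv_fin_many_prop}. Your write-up is in fact more detailed than the paper's --- you spell out the forward-invariance $\supp p_\n(x,\cdot)\subset\II$ for $x\in\II$ and add the bootstrap showing any $L^1$ stationary density is automatically $BV$ --- but the underlying argument is the same.
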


\begin{proof}
By Lemma~\ref{support_lemma1}, the density of any absolutely continuous stationary measure belongs to the subspace $Y:=\{h\in L^1\mid \supp h\subset\II\}$.
From the first part of the proof of Lemma~\ref{support_lemma1} it also follows that $Y$ is $\cL$-invariant.
Moreover, the kernel \eqref{pn} has uniformly bounded variations when restricted to $\II\times \II$.
Indeed, $|p_\n(x,\cdot)|_{TV} = |g_{x,\n}|_{TV}$ and the latter is bounded on $\II$ by \eqref{g_TV_bound}.
We can therefore apply Proposition~\ref{ubv_fin_many_prop}.
\end{proof}

\begin{remark}
\normalfont
It is worth noticing that the preceding result is completely independent of the structure of the unimodal map $T$.
In this respect we could consider maps admitting attracting periodic points or Cantor sets of measure zero, but still producing smooth stationary measures when perturbed with our additive noise.
\end{remark}

\subsection{Uniqueness} \label{UniquenessSection}

We begin with the following simple lemma that links the topological dynamics of $T$ with the structure of any stationary measure.

\begin{lemma} \label{key_top_lemma}
For any stationary measure $\mu$ and any open set $U$, if $\mu(U) = 0$, then also $\mu(T^{-k}U)=0$ for all $k>0$.
\end{lemma}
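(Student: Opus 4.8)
The plan is to exploit the disintegration/random-maps representation from Section~\ref{RandomTransformSection}, or, equivalently, the explicit structure of the transition probabilities, to show that if a stationary measure assigns zero mass to an open set $U$, then it also misses all preimages $T^{-k}U$. The key observation is the following: for the chain defined in Section~\ref{Coupling_section}, the random maps $T_\eta$ given by \eqref{main_example_random_maps} (or, in the general setting, the maps obtained via \eqref{RT2MC}) satisfy $T_{1/2} = T$, and more generally $T_\eta \to T$ uniformly on compact subsets of $(0,1)$ away from the endpoints of $[0,1]$ as the weight concentrates, but for the present lemma we need only that $T$ itself appears in the family of maps, or at least that the deterministic map $T$ is "accessible" by the support of the noise in a suitable sense. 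Since $P_x$ has support containing $T(x)$ (indeed $T(x)$ is the center of the symmetric density $g_{x,\n}$), for every $x$ and every neighborhood $V$ of $T(x)$ we have $P_x(V) > 0$.

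First I would prove the base case $k=1$. Suppose $\mu(U)=0$ for an open set $U$. I claim $\mu(T^{-1}U)=0$. Suppose not: then there is a compact set $K\subset T^{-1}U \cap \inter(I)$ with $\mu(K)>0$ (using that $\supp\mu \subset \II \subset \inter(I)$ by Lemma~\ref{support_lemma1}, so we lose nothing by intersecting with the interior). For $x \in K$ we have $T(x) \in U$, and since $U$ is open and $T$ is continuous, there is a neighborhood $V_x$ of $T(x)$ with $V_x \subset U$. Then, because $T(x)$ lies in the interior of the support of $P_x$ (the density $g_{x,\n}$ is centered at $T(x)$ and strictly positive near $0$), we get $P_x(U) \ge P_x(V_x) > 0$ for every $x \in K$. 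Hence by stationarity
$$
\mu(U) = \int P_x(U)\, d\mu(x) \ge \int_K P_x(U)\, d\mu(x) > 0,
$$
contradicting $\mu(U)=0$. This proves $\mu(T^{-1}U)=0$.

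For the inductive step, I would like to write $T^{-k}U = T^{-1}(T^{-(k-1)}U)$ and invoke the case $k=1$, but there is a subtlety that I expect to be the main obstacle: the base-case argument above was applied to an \emph{open} set $U$, whereas $T^{-(k-1)}U$ need not be open — $T$ is not an open map (it has a critical point $c$), so preimages of open sets are open only away from the critical value, and in any case the cleanest formulation is for open sets. The fix is to run the induction on open sets directly: the set $W := \inter(T^{-(k-1)}U)$ is open, and one checks that $T^{-k}U \subset T^{-1}W \cup T^{-1}(\partial(T^{-(k-1)}U))$; since $T$ is $C^4$ and unimodal with a single non-degenerate critical point, $\partial(T^{-(k-1)}U)$ is a Lebesgue-null set (a finite union of points at each level, as $T$ is piecewise monotone), hence $\mu$-null because $\mu \ll \Leb$ — wait, $\mu$ need only be a stationary measure, not necessarily absolutely continuous. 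So instead I would argue more carefully: by the inductive hypothesis $\mu(T^{-(k-1)}U) = 0$, so a fortiori $\mu(W) = 0$ for the open set $W = \inter(T^{-(k-1)}U) \subset T^{-(k-1)}U$; applying the base case to $W$ gives $\mu(T^{-1}W) = 0$. It then remains to control $\mu(T^{-k}U \setminus T^{-1}W)$, i.e.\ the mass sitting over the boundary $T^{-(k-1)}U \setminus W$. Because $\mu$ is stationary and $\supp\mu\subset\II$ where the kernel has a density, one shows any stationary measure is in fact absolutely continuous on $\II$ — or, alternatively, one observes this lemma is only ever applied to the absolutely continuous stationary measures produced by Theorem~\ref{cor_main_ex_bounded}, so $\mu\ll\Leb$ may be assumed, and then the null boundary set is $\mu$-null and the induction closes.

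Thus the essential mechanism is: $T(x)$ is always in the interior of $\supp P_x$, so stationarity forces $\mu$ to "see" the forward image of its support, and contrapositively, missing an open set propagates backward along $T$. The only care needed is the measure-theoretic bookkeeping at the critical point, handled by passing to the interior of preimages and discarding the null boundary, which is legitimate since the relevant stationary measures are absolutely continuous.
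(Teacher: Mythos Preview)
Your base case is correct and is essentially the paper's argument: stationarity gives $0=\mu(U)=\int P_x(U)\,d\mu(x)\ge\int_{T^{-1}U}P_x(U)\,d\mu(x)$, and since $T(x)\in\supp P_x$ for every $x$, the integrand is strictly positive on $T^{-1}U$, forcing $\mu(T^{-1}U)=0$. (You do not need the compact set $K$ or Lemma~\ref{support_lemma1}; the argument works for any stationary $\mu$.)

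Your inductive step, however, is built on a confusion. You write that ``$T^{-(k-1)}U$ need not be open --- $T$ is not an open map''. But openness of \emph{preimages} is exactly continuity, not the open-map property (which concerns forward images). Since $T$ is continuous, $T^{-1}U$ is open whenever $U$ is, and hence $T^{-(k-1)}U$ is open for every $k\ge 1$. The induction is therefore immediate: apply the base case to the open set $T^{-(k-1)}U$. All the subsequent work --- passing to interiors, controlling boundaries, invoking absolute continuity of $\mu$ --- is unnecessary, and indeed the lemma holds for \emph{any} stationary measure as stated, not only absolutely continuous ones.
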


\begin{proof}
It is enough to show that $\mu(T^{-1}U)=0$, the result then follows by induction.
By invariance we have $0 = \mu(U) = \int P_x(U) d\mu(x) \ge \int_{T^{-1}U} P_x(U) d\mu(x)$.
But for every $x\in T^{-1}U$, $T(x) \in U \cap \supp P_x$, and hence $P_x(U) > 0$.
Therefore the latter integral can only be zero if $\mu(T^{-1}U)=0$.
\end{proof}

With the help of the following lemma we will show that the support of any stationary measure contains the support of the $T$-invariant measure (atomic in the periodic case).
Recall that $x\in\supp\mu$ iff $\mu(U)>0$ for any open $U\ni x$.

\begin{lemma} \label{general_support_lemma}
Let $A\subset I$ be such that (1) $T$ is topologically transitive on $A$ and 
(2) $\bigcup_{k=0}^\infty T^{-k}U = I$ for any open set $U\supset A$.
Then $A\subset\supp\mu$ for any stationary measure $\mu$.
\end{lemma}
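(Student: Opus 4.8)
The plan is to argue by contradiction using Lemma~\ref{key_top_lemma}. Suppose $A\not\subset\supp\mu$ for some stationary measure $\mu$. Then there is a point $a\in A$ and an open neighborhood $V\ni a$ with $\mu(V)=0$. Since $\supp\mu$ is closed, actually every point of $A\setminus\supp\mu$ has such a neighborhood; but I only need one. The idea is to ``spread'' this null set around by the dynamics of $T$: by hypothesis (2) applied to a suitable open set containing $A$, the backward iterates of a neighborhood of $A$ cover all of $I$, and by Lemma~\ref{key_top_lemma} each such backward iterate is still $\mu$-null, forcing $\mu(I)=0$, a contradiction since $\mu$ is a probability measure.

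The one technical point is that Lemma~\ref{key_top_lemma} propagates nullity along $T^{-k}$ of a \emph{single} open set, whereas hypothesis (2) is stated for an open set $U$ that \emph{contains} all of $A$, not just a neighborhood of the one bad point $a$. This is where topological transitivity on $A$ (hypothesis (1)) enters. First I would find an open $U_0\ni a$ with $\mu(U_0)=0$. By transitivity of $T|_A$, the forward orbit of a suitable point is dense in $A$, equivalently $\bigcup_{k\ge 0} T^{-k}(U_0\cap A)$ is dense in $A$ in the subspace topology; more usefully, for transitive $T|_A$ one has $\bigcup_{k\ge 0} T^{-k}U_0 \supset A$ after possibly first noting that $U_0\cap A\ne\emptyset$ is a nonempty relatively open subset of $A$, so its preimages eventually meet every nonempty relatively open subset of $A$, and in fact their union contains $A$ (using that $T(A)\subset A$, so points of $A$ whose orbit enters $U_0$ fill out a dense, $T^{-1}$-saturated subset; for an S-unimodal interval attractor with a dense orbit, a cleaner route is: the union $\bigcup_k T^{-k}U_0$ is open, $T^{-1}$-invariant up to the obvious inclusion, and meets $A$, hence by transitivity contains $A$). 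Set $U := \bigcup_{k\ge 0} T^{-k}U_0$. Then $U$ is open, $U\supset A$, and $\mu(U)=0$: indeed $\mu(U)\le \sum_{k\ge 0}\mu(T^{-k}U_0)=0$ by Lemma~\ref{key_top_lemma}.

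Now apply hypothesis (2) to this $U$: $\bigcup_{k=0}^\infty T^{-k}U = I$. But $T^{-k}U = \bigcup_{j\ge 0} T^{-(k+j)}U_0$, so again $\mu(T^{-k}U)=0$ for every $k$ by Lemma~\ref{key_top_lemma} (and countable subadditivity), whence
$$
\mu(I) = \mu\Bigl(\bigcup_{k=0}^\infty T^{-k}U\Bigr) \le \sum_{k=0}^\infty \mu(T^{-k}U) = 0,
$$
contradicting $\mu(I)=1$. Therefore $A\subset\supp\mu$, as claimed.

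The main obstacle, and the step deserving care, is the passage from ``a neighborhood of one point of $A$ is null'' to ``a neighborhood of all of $A$ is null,'' i.e.\ producing the open set $U\supset A$ with $\mu(U)=0$ to which hypothesis (2) applies; this is exactly what topological transitivity on $A$ buys us, and one should make sure the union of backward iterates of $U_0$ genuinely contains $A$ (not merely a dense subset) — using that $U_0$ is open so $U_0\cap A$ is relatively open and nonempty, and that transitivity gives $\bigcup_k T^{-k}(U_0\cap A)$ dense in $A$, then enlarging to the honestly open set $\bigcup_k T^{-k}U_0$ which, being open and containing a dense subset of $A$ together with the structure of $A$ (a finite union of intervals in the chaotic case, or handled directly in the general transitive case), contains $A$. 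Once $U$ is in hand, the rest is a one-line application of Lemma~\ref{key_top_lemma} and countable subadditivity.
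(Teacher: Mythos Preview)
Your argument is essentially the paper's: both invoke transitivity to get $A\subset\bigcup_{k\ge 0}T^{-k}U$ for any open $U$ meeting $A$, then apply hypothesis~(2) and Lemma~\ref{key_top_lemma} to conclude $\mu(U)>0$. The paper does this directly (two lines) rather than by contradiction, so your intermediate construction of a null open neighborhood $U=\bigcup_k T^{-k}U_0\supset A$ is unnecessary; notably, the paper also asserts the containment $A\subset\bigcup_k T^{-k}U$ in one stroke from transitivity, so the subtlety you flag in your last paragraph is treated no more carefully there.
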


\begin{proof}
Given an open set $U$ with $U\cap A \neq \emptyset$,
by transitivity $A\subset \bigcup_{k=0}^\infty T^{-k}U$, and therefore $\bigcup_{k=0}^\infty T^{-k}U = I$.
By Lemma~\ref{key_top_lemma}, $\mu(U)>0$.
\end{proof}

Now we are ready to state the main result of this section.

\begin{theorem}
If $T$ is either periodic or chaotic, then the chain defined in Section~\ref{Coupling_section} admits a unique stationary measure $\mu$ with $BV$ density.
Moreover, $\supp\mu$ contains a neighbourhood of the periodic cycle if $T$ is periodic, or the interval $[T(\Delta), \Delta]$ if $T$ is chaotic.
\end{theorem}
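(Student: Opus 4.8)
The plan is to combine the ``finitely many components'' result from Theorem~\ref{cor_main_ex_bounded} with a connectivity argument showing that all the ergodic components must overlap, hence there can be only one. By Theorem~\ref{cor_main_ex_bounded}, there are finitely many ergodic stationary measures $\mu_1,\dots,\mu_m$ with $BV$ densities, supported on $\II$, and with supports mutually disjoint up to Lebesgue-null sets. So it suffices to show that the (essential) supports of any two such measures have nonempty intersection of positive Lebesgue measure; then by the disjointness clause $m=1$.

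First I would identify a common ``core'' set $A$ contained in the support of \emph{every} stationary measure, using Lemma~\ref{general_support_lemma}. In the chaotic case, take $A = [T(\Delta),\Delta]$: by hypothesis \ref{At}, $T$ is topologically transitive on this interval, and condition~(2) of Lemma~\ref{general_support_lemma} holds because every $x\in(0,1]$ is eventually mapped into the dynamical core (by \ref{itm:C3}) while $T'(0)>1$ pushes points near $0$ away until they enter the core — so $\bigcup_k T^{-k}U = I$ for any open $U\supset[T(\Delta),\Delta]$. Hence $[T(\Delta),\Delta]\subset\supp\mu_i$ for every $i$. In the periodic case, take $A$ to be the attracting cycle (a finite orbit $\{p_1,\dots,p_\ell\}$ on which $T$ is trivially topologically transitive in the sense of the finite-orbit dynamics), and again global attraction of the cycle (outside the repelling fixed point $0$, from which orbits escape since $T'(0)>1$) gives condition~(2); so the cycle lies in every $\supp\mu_i$. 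In either case all the $\mu_i$ share a common point, indeed a common set $A$, in their supports.

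The main obstacle — and the real content — is upgrading ``$A$ is in every support'' to ``the supports overlap on a set of positive Lebesgue measure,'' since supports being topologically non-disjoint does not contradict being Lebesgue-essentially disjoint. Here I would exploit the smoothing effect of the noise: because the kernel $p_\n(x,y) = g_{x,\n}(y-T(x))$ spreads mass over an interval of positive length $2s(x)>0$ around $T(x)$ whenever $x\in(0,1)$, the density $h_i$ of $\mu_i$ is strictly positive on a neighbourhood of $T(\supp\mu_i)$ — more precisely, from $h_i = \cL h_i$ and \eqref{MO1}, for any $y$ with $y = T(x) + w$, $|w|<s(x)$, $x\in\supp h_i$, we get $h_i(y) = \int p_\n(x,y)h_i(x)\,dx > 0$. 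Since $A\subset\supp h_i$ and $A$ (being the core interval, or containing a point of the cycle whose image under $T$ has a noise-neighbourhood) has image with an open $\varepsilon$-neighbourhood $V$ on which every $h_i$ is strictly positive, we conclude $\Leb(\supp h_i\cap\supp h_j)\ge\Leb(V)>0$ for all $i,j$. This contradicts the mutual Lebesgue-disjointness of the supports unless $m=1$. Finally, the ``Moreover'' clause is read off directly from the inclusion $A\subset\supp\mu$ established above together with the neighbourhood-positivity just shown. I expect the only delicate point to be the periodic case: one must check that the noise-blurred image of a neighbourhood of the cycle still covers an open set on which all densities are positive; this follows because $s(x)>0$ throughout $(0,1)$ and the cycle avoids $\{0,1\}$ (as it lies in $(c,\Delta)$ by \ref{itm:C2}, or more generally in the interior).
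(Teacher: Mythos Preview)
Your proposal is correct and follows essentially the same approach as the paper: use Lemma~\ref{general_support_lemma} to put the attracting cycle (periodic case) or the interval $[T(\Delta),\Delta]$ (chaotic case) inside the support of every stationary measure, then use kernel positivity to upgrade to a common open set of positive Lebesgue measure, contradicting the essential disjointness of distinct ergodic components. One small simplification in the paper's version: in the chaotic case the noise-smoothing step is unnecessary, since $[T(\Delta),\Delta]$ already has positive Lebesgue measure and lies in every support, so $\Leb(\supp\mu_1\cap\supp\mu_2)>0$ is immediate; the kernel-positivity argument is only needed in the periodic case, where the common set $A$ is finite.
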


\begin{proof}
If $T$ is periodic, a globally attracting cycle $\mathcal{O}$ satisfies the assumptions of Lemma~\ref{general_support_lemma}, therefore $\mathcal{O}\subset\supp\mu$ for any stationary measure $\mu$.
Let us show that $\supp\mu$ contains an open neighbourhood of $\mathcal{O}$.
Recall that $p(x,y)>0$ if and only if $y\in(T(x)-s(x),T(x)+s(x))$, and $T$ and $s$ are continuous.
Given $x_0\in\mathcal{O}$, let $x_{1}\in\mathcal{O}$ be such that $x_0=T(x_{1})$.
Since $(x_{1},x_0)\in\{(x,y)\mid p(x,y)>0\}$ and the latter set is open, we can find open sets $U\ni x_{1}$ and $V\ni x_0$ such that $p(x,y)>0$ for all $x\in U$, $y\in V$.
Also $\mu(U)>0$, because $x_{1}\in\supp\mu$.
Denoting $h$ the density of $\mu$, by invariance we get $h(y) \ge \int_U p(x,y)d\mu(x) >0$ for all $y\in V$, i.e.\ $V\subset\supp\mu$.

If $T$ is chaotic, then from \ref{At} we know that the set $I_{\Delta}=[T(\Delta), \Delta]$ is invariant for $T$ and that $T$ is topologically transitive when restricted to $I_{\Delta}$.
It also follows from \ref{itm:C3} that $\bigcup_{k=0}^\infty T^{-k} I_\Delta = I$,
so again we can apply Lemma~\ref{general_support_lemma}.

In both cases, we conclude that $\Leb(\supp\mu_1\cap\supp\mu_2)>0$ for any stationary measures $\mu_1$, $\mu_2$, and therefore by Proposition~\ref{ubv_fin_many_prop} they must coincide.
Hence the stationary measure is unique.
\end{proof}

\section{Stochastic stability} \label{StochasticStabilitySection}

Once we consider random perturbations of a deterministic dynamics, an important question is to investigate the stochastic stability of the system, which means to determine if a sequence of stationary measure will converge, in a sense to precise, to the invariant measure of the unperturbed map.
In our case the sequence of probability measures is  given by $\mu_\n:=h_\n dx$.
These measures belong to the set of Borel probability measures on the unit interval, which is a compact metric space with the weak-* topology\footnote{The \emph{weak-* topology} is given by the family of seminorms $\|\rho\|_\varphi = \int \varphi d\rho$, $\rho\in\cM$, $\varphi\in C^0$.}.
There will be therefore at least one subsequence $(\mu_{\n_k})_{k\ge 1}$ converging to a probability measure $\mu_{\infty}$ on $I$.
Our objective is to prove that: 
(i) $\mu_{\infty}$ is invariant, 
(ii) it is the same for any convergent subsequence, if more than one, and 
(iii) it coincides with $\mu$.
Whenever that happens we will say that our random system is  {\em weakly stochastic stable}. 
This result could be strengthened  by showing that $\|h_\n-h\|_1\to 0$, which is called the {\em strong stochastic stability}; we are not able at the moment to get this result. 
Instead we now give a {\em sufficient condition} to get the weak stochastic stability:
\begin{enumerate}
\item[\optionalitemlabel{(A$_q$)}{Aq}]
\textit{There exist $q>1$ and $C_q>0$ such that for all $\n\ge 1$ we have $\|h_\n\|_q\le C_q$.}
\end{enumerate}

We will see in the next section that with the preceding assumption we can prove the convergence of the Lyapunov exponent (Theorem~\ref{Lexp_conv_n}) and then verify it numerically, which is an indirect indication of the validity of \ref{Aq}.

\begin{lemma}\label{weak_star_delta_lemma}
For every $x\in I$, $P_x^{(\n)}$ converges to $\delta_{Tx}$ in the weak-* topology as $\n\to\infty$, i.e.\ $\int \varphi dP_x^{(\n)} \to \varphi(Tx)$ for all $\varphi\in C^0(I)$.
\end{lemma}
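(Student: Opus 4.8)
The plan is to unwind the definition \eqref{Px} and reduce the weak-* convergence to the elementary fact that a family of probability measures on $\mathbb{R}$ whose means and variances both tend to $0$ converges weakly to $\delta_0$. Fix $x\in I$ and $\varphi\in C^0(I)$, and write $Y_\n\sim g_{x,\n}$. By \eqref{Px},
$$
\int \varphi\, dP_x^{(\n)} = \mathbb{E}\bigl[\varphi\bigl(T(x)+Y_\n\bigr)\bigr],
$$
so it suffices to show that $\mathbb{E}[\varphi(T(x)+Y_\n)]\to\varphi(Tx)$ as $\n\to\infty$.

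The first step is to exploit the uniform continuity of $\varphi$ on the compact interval $I$: given $\varepsilon>0$ choose $\delta>0$ such that $|\varphi(u)-\varphi(v)|<\varepsilon$ whenever $u,v\in I$ with $|u-v|<\delta$. Splitting the expectation according to whether $|Y_\n|<\delta$ or not, and bounding $|\varphi|\le\|\varphi\|_\infty$ on the complementary event, we obtain
$$
\Bigl|\,\mathbb{E}[\varphi(T(x)+Y_\n)] - \varphi(Tx)\,\Bigr| \;\le\; \varepsilon + 2\|\varphi\|_\infty\, \mathbb{P}\{|Y_\n|\ge\delta\}.
$$

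The second step is to control the tail probability. By the assumptions of Section~\ref{Coupling_section}, the mean $m_\n:=\mathbb{E}Y_\n$ and the variance $v_\n:=\mathbb{E}(Y_\n-m_\n)^2$ of $g_{x,\n}$ both tend to $0$ as $\n\to\infty$; in particular $|m_\n|<\delta/2$ for all large $\n$, so Chebyshev's inequality gives $\mathbb{P}\{|Y_\n|\ge\delta\}\le \mathbb{P}\{|Y_\n-m_\n|\ge\delta/2\}\le 4v_\n/\delta^2\to 0$. Hence $\limsup_{\n\to\infty}\bigl|\int\varphi\,dP_x^{(\n)}-\varphi(Tx)\bigr|\le\varepsilon$, and letting $\varepsilon\downarrow 0$ concludes. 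For the main example of Section~\ref{MainExampleSection} one can shortcut the tail estimate, since there $m_\n=0$, $g_{x,\n}$ is supported on the fixed interval $[-s(x),s(x)]$, and its mass concentrates at $0$ because $\sigma_\n(x)=\sigma(x)/\sqrt{\n}\to 0$.

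There is no genuine obstacle in this lemma; it is a soft consequence of the noise shrinking. The only point that deserves a word of care is that the convergence is asserted \emph{pointwise in $x$} (not uniformly): both $s(x)$ and the rate at which $g_{x,\n}$ concentrates degenerate as $x\to 0$ or $x\to 1$, where $\sigma_\n(x)\to 0$ already for fixed $\n$, so no uniform-in-$x$ statement is claimed or needed here.
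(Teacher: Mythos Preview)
Your proof is correct and follows essentially the same approach as the paper's: both split the integral into a small ball around $T(x)$ and its complement, bound the tail by Chebyshev's inequality using that the variance of $g_{x,\n}$ tends to $0$, and use continuity of $\varphi$ on the small ball. Your treatment is in fact slightly more careful than the paper's, since you explicitly handle the possibly nonzero mean $m_\n$ of $g_{x,\n}$ before applying Chebyshev, whereas the paper applies Chebyshev centered at $T(x)$ directly.
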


\begin{proof}
For arbitrary $\varepsilon>0$ we can split
$$
\int \varphi dP_x^{(\n)} = 
\int_{B_\varepsilon(Tx)} \varphi dP_x^{(\n)} + 
\int_{B_\varepsilon(Tx)^c} \varphi dP_x^{(\n)}.
$$
By Chebyshev's inequality, $P_x^{(\n)}(B_\varepsilon(Tx)^c)\le {\mathrm{Var}^2 P_x^{(\n)}}/{\varepsilon^2}\to 0$ as $\n\to\infty$, while $\varphi$ is bounded, so the second integral can be made arbitrarily small for $\n$ large.
Consequently, $P_x^{(\n)}(B_\varepsilon(Tx))\to 1$, and since $\varphi$ is continuous, the first integral can be made arbitrarily close to $\varphi(Tx)$.
\end{proof}

The proof of \ref{LK1} below follows a suggestion in \cite[Theorem~D]{alves2003random}.

\begin{proposition}\label{LK}
Let $\mu$ be a weak-* limit measure of a sequence $\mu_{\n_k}=h_{\n_k}dx$.
If $h_{\n_k}$ satisfy \ref{Aq}, then 
\begin{enumerate}[label=(\roman*)]
    \item\label{LK1} $\mu$ is absolutely continuous with density in $L^q$;
    \item\label{LK2} $\mu$ is invariant under $T$.
\end{enumerate}
\end{proposition}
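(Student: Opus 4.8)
The plan is to prove \ref{LK1} and \ref{LK2} in turn, using assumption \ref{Aq} in both. For \ref{LK1}, the idea is that a bounded sequence of densities in $L^q$ with $q>1$ cannot concentrate onto a singular measure. Since $\|h_{\n_k}\|_q\le C_q$ and $L^q$ is reflexive (so the closed ball of radius $C_q$ is weakly sequentially compact; if $q=\infty$ one uses instead the weak-* compactness afforded by the separability of $L^1$), I would extract a subsequence $(h_{\n_{k_j}})$ converging weakly to some $h\in L^q$ with $\|h\|_q\le C_q$. To identify $\mu$ with $h\,dx$ I would test against $\varphi\in C^0\subset L^{q'}$: on the one hand $\int\varphi\,h_{\n_{k_j}}\,dx\to\int\varphi\,h\,dx$ by weak convergence, on the other $\int\varphi\,h_{\n_{k_j}}\,dx=\int\varphi\,d\mu_{\n_{k_j}}\to\int\varphi\,d\mu$ since $(\n_{k_j})$ is a subsequence of a weak-* convergent sequence of measures. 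Hence $\int\varphi\,d\mu=\int\varphi\,h\,dx$ for all $\varphi\in C^0$, so $\mu$ is absolutely continuous with density $h\in L^q$ (necessarily a probability density, as $\mu$ is a probability measure).

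For \ref{LK2}, I keep the subsequence just produced (relabelled $(\n_k)$) and fix $\varphi\in C^0(I)$. Set $\Phi_\n(x):=\int\varphi\,dP_x^{(\n)}$. Since each $\mu_{\n_k}$ is stationary, the definition \eqref{MO} of the Markov operator gives
$$\int\varphi\,h_{\n_k}\,dx=\int\varphi\,d\mu_{\n_k}=\iint\varphi(y)\,dP_x^{(\n_k)}(y)\,d\mu_{\n_k}(x)=\int\Phi_{\n_k}\,h_{\n_k}\,dx.$$
By Lemma~\ref{weak_star_delta_lemma}, $\Phi_{\n_k}(x)\to\varphi(Tx)$ for every $x$; since $|\Phi_{\n_k}|\le\|\varphi\|_\infty$ on the bounded interval $I$, dominated convergence gives $\|\Phi_{\n_k}-\varphi\circ T\|_{q'}\to0$. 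Hölder's inequality and \ref{Aq} then yield
$$\Bigl|\int(\Phi_{\n_k}-\varphi\circ T)\,h_{\n_k}\,dx\Bigr|\le\|\Phi_{\n_k}-\varphi\circ T\|_{q'}\,\|h_{\n_k}\|_q\le C_q\,\|\Phi_{\n_k}-\varphi\circ T\|_{q'}\to 0,$$
while $\int(\varphi\circ T)\,h_{\n_k}\,dx\to\int\varphi\circ T\,d\mu$ because $\varphi\circ T\in C^0$ (as $T$ is continuous) and $\mu_{\n_k}\to\mu$ weak-*. Passing to the limit in the displayed identity, and recalling $\int\varphi\,h_{\n_k}\,dx\to\int\varphi\,d\mu$, we obtain $\int\varphi\,d\mu=\int\varphi\circ T\,d\mu$ for every $\varphi\in C^0(I)$, that is, $\mu=\mu\circ T^{-1}$.

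The main obstacle is the passage to the limit in $\int\Phi_{\n_k}\,d\mu_{\n_k}$, where both the integrand and the integrating measure move: pointwise convergence $\Phi_{\n_k}\to\varphi\circ T$ together with weak-* convergence $\mu_{\n_k}\to\mu$ is not enough on its own (a moving bump function tested against moving Dirac masses is a counterexample). Assumption \ref{Aq} is precisely what repairs this, by promoting the pointwise convergence of $\Phi_{\n_k}$ to convergence in $L^{q'}$ on the finite measure space $I$, which can then be paired with the uniform $L^q$ bound on the densities through Hölder. An alternative valid only for the main example of Section~\ref{MainExampleSection} would be to prove $\Phi_\n\to\varphi\circ T$ uniformly on the common compact support $\II$ of the measures $\mu_\n$, via a Chebyshev estimate uniform in $x\in\II$ (using that the variance of $g_{x,\n}$ there tends to $0$ uniformly); but the route through \ref{Aq} is shorter and stays within the general framework.
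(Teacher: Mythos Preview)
Your proof is correct and, for part~\ref{LK2}, essentially identical to the paper's: both use stationarity to rewrite $\int\varphi\,d\mu_{\n_k}$ as $\int\Phi_{\n_k}\,d\mu_{\n_k}$, invoke Lemma~\ref{weak_star_delta_lemma} for pointwise convergence, upgrade to $L^{q'}$ convergence by dominated convergence, and pair with the uniform $L^q$ bound via H\"older. For part~\ref{LK1} there is a minor tactical difference: you extract a weak $L^q$ limit of a subsequence of the densities and identify it with $\mu$, whereas the paper argues directly that $\varphi\mapsto\int\varphi\,d\mu$ is bounded on $C^0$ in the $L^p$ norm (by H\"older and \ref{Aq}), extends by density to $L^p$, and is therefore represented by some $h\in L^q$; the paper's route avoids the subsequence extraction but both are standard and equally short.
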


\begin{proof}
\ref{LK1}
Let $\varphi\in C^0(I)$.
By H\"{o}lder's inequality, with $p=\frac{q}{q-1}$,
$$
\Bigl|\int \varphi d\mu \Bigr|
= \Bigl|\lim_{k\to \infty}\int \varphi h_{\n_k}dx \Bigr|
\le \lim_{k\to \infty}\|h_{\n_k}\|_q \|\varphi\|_p
\le C_q\|\varphi\|_p.
$$
Therefore the map $L^p\ni\varphi \mapsto \int\varphi d\mu\in\mathbb{R}$ is continuous, since $C^0$ is dense in $L^p$, and therefore such a functional will be in $L^q$, namely $\mu=hdx$, $h\in L^q$, $\|h\|_q\le C_q$. 

\ref{LK2}
It suffices to prove that any test function $\varphi\in C^0(I)$ satisfies 
$$
\int \varphi h_{\n_k} dy-\int \varphi \circ T h_{\n_k}dy\to 0.
$$
Since $h_{\n_k}=\cL h_{\n_k}=\int p_{\n_k}(x,\cdot)h_{\n_k}(x)dx$, by changing the order of integration in the first integral and subtracting the second, we get
\begin{equation}\label{gh}
\int h_{\n_k}(x) \left[\int p_{\n_k}(x,y) \varphi(y)dy - \varphi(T(x))\right]dx.
\end{equation}
Since the function $\psi_{\n_k}(x):=\int p_{\n_k}(x,y) \varphi(y)dy - \varphi(T(x))$ is uniformly bounded,
$$
\eqref{gh} \le \|h_{\n_k}\|_q \|\psi_{\n_k}\|_p\le C_q \|\psi_{\n_k}\|_p.
$$
By Lemma~\ref{weak_star_delta_lemma}, $\psi_{\n_k}(x)\to 0$ for every $x\in I$, and therefore by dominated convergence $\|\psi_{\n_k}\|_p \to 0$.
\end{proof}

\begin{theorem} \label{WSS}
Under Assumption~\ref{Aq},
the chain defined in Section~\ref{Coupling_section} is weakly stochastic stable, i.e.\ the stationary probabilities converge to the unique $T$-invariant probability in the weak-* topology as $\n\to\infty$.
\end{theorem}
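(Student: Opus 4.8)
The plan is to combine soft weak-* compactness with the two structural inputs already available: Proposition~\ref{LK}, which under Assumption~\ref{Aq} constrains any subsequential weak-* limit of $(\mu_\n)$ to be an absolutely continuous $T$-invariant probability, and the uniqueness of the absolutely continuous invariant measure $\nu$ of the deterministic map $T$ recalled in Section~\ref{UnimodalMapsSection}.

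First I would note that the set of Borel probability measures on $I$ is compact and metrizable in the weak-* topology, so it suffices to show that every weak-* convergent subsequence of $(\mu_\n)$ has one and the same limit $\nu$: in a compact metric space, a sequence all of whose subsequential limits equal a fixed point converges to that point. So fix a convergent subsequence $\mu_{\n_k}\to\mu_\infty$. By Assumption~\ref{Aq} the densities $h_{\n_k}$ are uniformly bounded in $L^q$, and therefore Proposition~\ref{LK} applies: $\mu_\infty$ is absolutely continuous, with density in $L^q$, and invariant under $T$; being a weak-* limit of probability measures on a compact set, it is itself a probability. In particular, $T$ admits an absolutely continuous invariant probability measure.

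Next I would identify $\mu_\infty$. Since $T$ is S-unimodal with nondegenerate critical point, the existence of an absolutely continuous invariant probability forces, by Theorem~\ref{thmB} (equivalently, by Keller's Theorem~\ref{thmC}), the attractor of $T$ to be an interval attractor; in particular $T$ has no attracting cycle, so, since the $\mu_\n$ are uniquely defined and hence (Section~\ref{UniquenessSection}) $T$ is periodic or chaotic, $T$ must be chaotic. For a chaotic $T$ the absolutely continuous invariant probability is unique and equals $\nu$, supported on $[T(\Delta),\Delta]$ (\cite[Section~5]{baladi1996strong}, together with Theorems~\ref{thmA}--\ref{thmB}). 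Hence $\mu_\infty=\nu$, and since the subsequence was arbitrary, $\mu_\n\to\nu$ weak-* as $\n\to\infty$, which is the asserted weak stochastic stability. Incidentally, this shows that Assumption~\ref{Aq} already excludes the periodic case; one also sees this directly, since when $T$ has an attracting cycle the stationary densities concentrate on it as $\n\to\infty$ and $\|h_\n\|_q\to\infty$ for every $q>1$.

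The argument above is entirely soft once Proposition~\ref{LK} is granted, so the real obstacle is upstream: proving Proposition~\ref{LK}, i.e.\ passing to the limit in the stationarity identity $h_{\n_k}=\cL h_{\n_k}$ as the kernel $p_{\n_k}$ degenerates to the deterministic push-forward by $T$, which is precisely where the uniform $L^q$-bound of Assumption~\ref{Aq} is indispensable for our method. A minor point worth checking is that no mass leaks to the fixed point $0$ in the limit, but this is immediate: $\mu_\infty$ is nonatomic, and in any case every $\mu_\n$ is supported on the fixed compact interval $\II\subset(0,1)$ by Theorem~\ref{cor_main_ex_bounded} (the $\varepsilon$ of Lemma~\ref{support_lemma1} being independent of $\n$).
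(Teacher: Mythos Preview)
Your argument is correct and matches the paper's: weak-* compactness of probabilities on $I$ together with Proposition~\ref{LK} forces every subsequential limit to be an absolutely continuous $T$-invariant probability, and uniqueness of that measure gives convergence of the full sequence. The paper's proof is a one-liner (``Since $T$ admits a unique invariant measure, $\mu$ must be the same for all convergent subsequences\ldots''); you supply the extra justification, via Theorems~\ref{thmA}--\ref{thmC}, that the existence of such a measure already excludes the periodic case, which the paper only remarks after the proof. One phrasing slip: ``the $\mu_\n$ are uniquely defined and hence (Section~\ref{UniquenessSection}) $T$ is periodic or chaotic'' reverses the implication established there---uniqueness of $\mu_\n$ is proved \emph{assuming} $T$ is periodic or chaotic, not conversely---but since the dichotomy is the standing framework of the paper, this does not affect your conclusion.
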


\begin{proof}
Since $T$ admits a unique invariant measure, $\mu$ must be the same for all convergent subsequences in Proposition~\ref{LK}, and therefore the entire sequence $\mu_\n$ converges to $\mu$.
\end{proof}

It follows from Proposition~\ref{LK} that Assumption~\ref{Aq} cannot be satisfied in the periodic case, since the limiting $T$-invariant measure is singular and supported on the periodic orbit, so Theorem~\ref{WSS} only covers the chaotic case.
We will now give a proof in the periodic case under the following assumption:
\begin{enumerate}
\item[\optionalitemlabel{(A$_s$)}{As}]
\textit{For all $\n$ sufficiently large and all $x\in\supp\mu_\n$ we have $|T'(x)|\le\tau<1$.}
\end{enumerate}
%For instance, the above assumption is satisfied whenever the supports of the transition probabilities $P_x^{(\n)}$ shrink to $T(x)$ as $\n\to\infty$.

\begin{proposition} \label{WSS_fixed_point}
If $T$ is periodic and satisfies \ref{As}, then
the chain defined in Section~\ref{MainExampleSection} is weakly stochastic stable.
\end{proposition}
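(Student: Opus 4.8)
The plan is: extract a weak-$*$ convergent subsequence $\mu_{\n_k}=h_{\n_k}\,dx\to\mu_\infty$ (the Borel probabilities on the compact interval form a weak-$*$ compact metrizable space), prove that the only possible limit is the $T$-invariant probability $\mu_*=\frac1p\sum_{i=1}^p\delta_{a_i}$ carried by the attracting cycle $\mathcal O=\{a_1,\dots,a_p\}$, and then conclude $\mu_\n\to\mu_*$ since every subsequential limit equals $\mu_*$ and the ambient space is compact metrizable.

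\emph{Step 1: $\mu_\infty$ is $T$-invariant.} Here I would rerun the computation from the proof of Proposition~\ref{LK}\ref{LK2}, but replace the use of \ref{Aq} by the uniform smallness of the noise. For $\varphi\in C^0(I)$ put $\psi_\n(x):=\int p_\n(x,y)\varphi(y)\,dy-\varphi(T(x))$; by stationarity, $\int\varphi\,d\mu_\n-\int\varphi\circ T\,d\mu_\n=\int\psi_\n\,d\mu_\n$. Since $P_x^{(\n)}$ is centred at $T(x)$ with variance $\le\sigma_{\max}^2/\n$ \emph{uniformly in $x$}, bounding $|\psi_\n(x)|\le\int p_\n(x,y)\,|\varphi(y)-\varphi(T(x))|\,dy$ by splitting the integral at $|y-T(x)|=\delta$ and using the modulus of continuity of $\varphi$ together with Chebyshev's inequality gives $\|\psi_\n\|_\infty\to0$; hence $\int\varphi\,d\mu_\infty=\int\varphi\circ T\,d\mu_\infty$ for every $\varphi\in C^0$, i.e.\ $\mu_\infty$ is $T$-invariant. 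Note it is precisely for the \emph{identification} of the limit, not its invariance, that \ref{Aq} was used in the chaotic case, and it is there that it fails in the periodic case, the limit being singular.

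\emph{Step 2: $\supp\mu_\infty\subset\mathcal O$.} First, $\supp\mu_\n\subset\II$ with $\varepsilon$ \emph{independent of $\n$} (the truncation $s(\cdot)$ of Section~\ref{MainExampleSection} does not depend on $\n$, so the $\varepsilon$ furnished by Lemma~\ref{support_lemma1} is uniform), whence $\supp\mu_\infty\subset\II$ by weak convergence; and since each point of $\supp\mu_\infty$ is a limit of points of $\supp\mu_{\n_k}$ while $T'$ is continuous, \ref{As} gives $|T'|\le\tau<1$ on $\supp\mu_\infty$. Now apply Poincar\'e recurrence to $(I,T,\mu_\infty)$: the recurrent points carry full mass, so $R:=\{\text{recurrent }x\}\cap\supp\mu_\infty$ is dense in $\supp\mu_\infty$. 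Fix $x\in R$. Its whole forward orbit lies in $\supp\mu_\infty$ (the support of an invariant measure is forward invariant), so $|(T^n)'(x)|=\prod_{i<n}|T'(T^ix)|\le\tau^n\to0$. A recurrent point with exponentially contracting derivatives along its orbit is asymptotic to an attracting periodic orbit: if the orbit of $x$ meets $c$ this is immediate since then $\omega(x)=\omega(c)=\mathcal O$ in the periodic regime (Theorem~\ref{thmA}); otherwise the orbit stays away from $c$, distortion along it is controlled, and the classical argument applies (see \cite{melo2012onedimensional}). That attracting orbit, being contained in $\supp\mu_\infty$, has all multipliers $\le\tau<1$, hence \emph{is} attracting, hence equals the unique attracting cycle $\mathcal O$ (recall $T$ is S-unimodal). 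Thus $\omega(x)=\mathcal O$, and $x$ being recurrent, $x\in\omega(x)=\mathcal O$. Therefore $R\subset\mathcal O$ and $\supp\mu_\infty=\overline R\subset\mathcal O$; an invariant probability supported in the $p$-cycle must be $\mu_*$.

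\emph{Step 3: conclusion.} Every weak-$*$ limit point of $(\mu_\n)$ equals $\mu_*$, so the whole sequence converges, $\mu_\n\to\mu_*$, which is the asserted weak stochastic stability (the realizations of the process concentrate on the attracting cycle as $\n\to\infty$). I expect Step 2 to be the crux: in the periodic regime $T$ carries many invariant probabilities (Dirac masses on repelling fixed points and cycles, etc.), and Assumption~\ref{As} is exactly what rules them all out via the recurrence/contraction dichotomy; the only genuinely one-dimensional input is the fact that a recurrent orbit along which the derivative contracts exponentially is trapped by a periodic sink.
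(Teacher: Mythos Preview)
Your argument is correct and takes a genuinely different route from the paper's. The paper does not pass through weak-$*$ compactness at all: it exploits the random-maps representation $T_\eta(x)=T(x)+\tilde q_\n(\eta)\sigma(x)$ and stationarity of $h_\n$ under the random Koopman operator to write $\int\varphi\,h_\n\,dx=\int_I\int_{[0,1]^k}\varphi(T_{\eta_k}\circ\cdots\circ T_{\eta_1}(x))\,h_\n(x)\,d\bar\eta\,dx$ for an arbitrary $k$, then combines the uniform closeness \eqref{cat} of $T_\eta$ to $T$ on $[\delta,1-\delta]$ with \ref{As} to propagate the error $|T_{\eta_j}\circ\cdots\circ T_{\eta_1}(x)-T^j(x)|$ through a telescoping bound (a geometric series in $\tau$), showing directly that random orbits shadow the deterministic ones converging to the cycle. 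Your approach is more measure-theoretic: you recover $T$-invariance of any subsequential limit from the uniform vanishing of the noise variance (a clean way to bypass \ref{Aq}), and then identify the limit via Poincar\'e recurrence plus one-dimensional contraction. The paper's argument is hands-on and tied to the explicit structure of Section~\ref{MainExampleSection}; yours is more portable but leans in Step~2 on a nontrivial 1D fact.

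One small imprecision in Step~2: the dichotomy should be ``$c\in\omega(x)$'' versus ``$c\notin\omega(x)$'', not ``orbit meets $c$'' versus ``orbit stays away'', since the orbit may accumulate on $c$ without hitting it, and then distortion is not controlled. The first alternative is still immediate: if $T^{n_j}(x)\to c$ then, since $\omega(c)=\mathcal O$ by Theorem~\ref{thmA}, some $T^{n_j+M}(x)$ lands in the (open) immediate basin of $\mathcal O$ and the orbit is trapped, whence $\omega(x)=\mathcal O$ and recurrence gives $x\in\mathcal O$. In the second alternative the orbit is genuinely bounded away from $c$ and your shrinking-interval argument goes through.
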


\begin{proof}
Let us first consider the case when $T$ has a globally attracting fixed point $x_0$.
We need to show that for any test function $\varphi\in C^0(I)$ we have $\int \varphi(x) h_{\n}(x) dx\to \varphi(x_0)$ as $\n\to\infty$.
Since $h_\n$ is a fixed point of the random transfer operator \eqref{dis} and this operator is the dual of the random Koopman operator $\varphi\mapsto\int\varphi\circ T_\eta d\eta$ (see, e.g., \cite[Section~2]{aimino2015annealed} for details), the previous weak limit leads to prove that the following quantity
\begin{equation}\label{wss_fp_int}
\int_I \int_{[0,1]^k} \bigl(\varphi(T_{\eta_k} \circ\cdots\circ T_{\eta_1}(x)) - \varphi(x_0)\bigr) h_{\n}(x)\,d\bar{\eta}\,dx
\end{equation}
goes to 0 as $\n\to\infty$, where $k$ is an arbitrary fixed number and $\bar{\eta} = (\eta_1,\ldots,\eta_k)$.

Given $\varepsilon>0$, let $\zeta>0$ be such that $|\varphi(x) - \varphi(x_0)|<\varepsilon$ when $|x-x_0|<\frac{2\zeta}{1-\tau}$.
Fix $k$ such that, for all $\n$,
\begin{equation}\label{wss_fp_1}
\sup_{x\in\supp\mu_\n} |T^k(x) - x_0| < \zeta.
\end{equation}
Next, fix $\delta > 0$ such that
\begin{equation}\label{wss_fp_2}
2\|\varphi\|_\infty (1 - (1 - 2\delta)^k) < \varepsilon.
\end{equation}
Finally, by \eqref{cat}, for all $\n$ sufficiently large, we have
\begin{equation}\label{wss_fp_3}
\sup_{\eta\in[\delta, 1-\delta]} \sup_{x\in I} |T_\eta(x) - T(x)| < \zeta.
\end{equation}
We now split the integral \eqref{wss_fp_int} in the $\bar{\eta}$ variable over the region $E:=[\delta, 1-\delta]^k$ and its complement.
On $E^c$ the absolute value of \eqref{wss_fp_int} is bounded by \eqref{wss_fp_2}.
Notice that the integral over $x$ takes place on the support of $\mu_{\n}$, where \ref{As} holds.
Moreover, since the map $[0,1]\ni\eta\mapsto T_\eta\in C^2(I)$ is continuous, each $T_\eta$ maps $\supp\mu_\n$ to itself; see \cite{araujo2000attractors}. 
Therefore, for $\bar{\eta} \in E$ and $x\in\supp\mu_\n$, by \eqref{wss_fp_3} and \ref{As}, we have
$$
|T_{\eta_2}\circ T_{\eta_1}(x) - T^2(x)| \le
|T_{\eta_2} (T_{\eta_1}(x)) - T(T_{\eta_1}(x))| + |T(T_{\eta_1}(x)) - T^2(x)|
< \zeta + \tau\zeta.
$$
By induction we easily get
$
|T_{\eta_k}\circ\cdots\circ T_{\eta_1}(x) - T^k(x)| < \frac{\zeta}{1-\tau}
$
and therefore, in view of \eqref{wss_fp_1},
$
|T_{\eta_k}\circ\cdots\circ T_{\eta_1}(x) - x_0| < \frac{2\zeta}{1-\tau}
$
for all $\bar{\eta} \in E$ and $x\in\supp\mu_\n$.
Then, by the choice of $\zeta$, $|\varphi(T_{\eta_k}\circ\cdots\circ T_{\eta_1}(x)) - \varphi(x_0)| < \varepsilon$ and the absolute value of \eqref{wss_fp_int} over $E$ is therefore bounded by $\varepsilon$.

It is straightforward to modify the above proof for the case when $T$ has a globally attracting periodic orbit of length $m>1$.
One needs to replace $T$ with $T^m$, the latter will have $m$ attracting fixed points.
The corresponding random maps of the form $T_{\bar\eta} = T_{\eta_1}\circ\cdots\circ T_{\eta_m}$ will be parametrized by $\bar{\eta} = (\eta_1,\cdots,\eta_m)\in [0,1]^m$ endowed with the Lebesgue measure.
We leave the details to the reader.
\end{proof}

\begin{remark}
\normalfont
It follows from the proof that Proposition~\ref{WSS_fixed_point} remains valid for the general class of Markov chains defined in Section~\ref{Coupling_section} whenever \eqref{cat} holds, which is in turn the case when $T$ and $g_{x,\n}$ are sufficiently smooth.
\end{remark}

\begin{remark}
\normalfont
We conjecture that, if $T$ is periodic with the attracting periodic orbit $\mathcal{O}$, then $h_\n\to 0$ uniformly on compact sets $K\subset I\setminus\mathcal{O}$ as $\n\to\infty$.
This property, that we checked numerically, straightens the previous result.
In particular,
\begin{enumerate}
\item[\optionalitemlabel{(A$_c$)}{Ac}]
\textit{If $T$ is periodic and the critical point $c$ does not belong to the attracting periodic orbit, then $h_\n\to 0$ uniformly in a neighbourhood of $c$ as $\n\to\infty$.}
\end{enumerate}
\end{remark}

\section{Lyapunov exponent} \label{LyapunovExponentSection}

\subsection{Average Lyapunov exponent}\label{ssdd}
We are interested in the existence of the \emph{Lyapunov exponent} for the slow component, which in our case is defined $\mathbb{P}_\mu$-almost surely as the limit
\begin{equation} \label{Lexp}
\Lambda = \lim_{n\to\infty} \frac{1}{n} \sum_{t=0}^{n-1} \log|T'(X_t)|
\end{equation}
along the chain $(X_t)_{t\geq 0}$. 
We now motivate such a choice. 
It is twofold: first of all we want to reproduce the Lyapunov exponent of the unperturbed map $T$ in the limit of zero noise, which we will get in Theorem~\ref{Lexp_conv_n}; successively we want an indicator which kept memory of the underlying {\em slow dynamics} played by the map $T$.
We  will, in particular, show that such an exponent is negative for periodic $T$, even in presence of mixing stationary measure.

We now return to \eqref{Lexp}; if the chain admits a unique stationary probability $\mu$, then, by the ergodic theorem for Markov chains, the above limit equals
\begin{equation} \label{Lexp_int}
\int \log|T'| d\mu,
\end{equation}
assuming $\log|T'|\in L^1(\mu)$.

\begin{remark}
\normalfont
The Lyapunov exponent \eqref{Lexp_int} was called the {\em average Lyapunov exponent} in \cite{galatolo2020existence, nisoli2020sufficient}, and it was associated to the phenomenon of  {\em noise induced order}, which happens when the perturbed systems admit a unique stationary measure depending on some parameter, say $\theta$, and the Lyapunov exponent depends continuously on $\theta$ and exhibits a transition from  positive to negative values, see also \cite{matsumoto1983noise} for an experimental evidence of this fact. 
We will partially prove  this phenomenon below by combining Corollary \ref{cor_pos_neg_Lexp} and Theorem~\ref{Lexp_cont_thm}, and show it numerically in Section~\ref{NumericalLyapunovSection}. 
\end{remark}

A unimodal map $T$ is said to have a \emph{critical point of order $l$} if there is a constant $D$ such that $D^{-1}|x-c|^{l-1}\le |T'(x)|\le D|x-c|^{l-1}$.
In this case it was proved in \cite{nowicki1991invariant} that the invariant density for $T$ is in $L^q$, with $q<\frac{l}{l-1}$. 
We assume in \ref{itm:A3} that $T$ has a critical point of order 2.
It is easy to check that \eqref{unimodal_map_rewritten} satisfies this assumption.
Consequently, $\log|T'|$ is in $L^p$ for any $p\ge 1$.

\begin{theorem}
If $T$ is periodic or chaotic, the limit \eqref{Lexp} exists almost surely. 
\end{theorem}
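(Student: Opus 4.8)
The plan is to read the statement off the ergodic theorem for Markov chains, once its two hypotheses — existence of a unique ergodic stationary measure and integrability of the observable $\log|T'|$ — have been checked; both have essentially been prepared in the preceding sections. So the proof is mostly bookkeeping.

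First I would record that when $T$ is periodic or chaotic, the uniqueness theorem of Section~\ref{UniquenessSection} provides a unique absolutely continuous stationary measure $\mu = h\,dx$ with $h\in BV$, and that this $\mu$ is ergodic: it is one of the finitely many \emph{ergodic} stationary measures produced by Proposition~\ref{ubv_fin_many_prop}, and uniqueness leaves no room for anything else. Hence the shift‑invariant measure $\mathbb{P}_\mu$ on the realization space $\Omega$ is ergodic; this is the measure with respect to which \eqref{Lexp} is understood, exactly as in its definition.

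Next I would verify that $\log|T'|\in L^1(\mu)$. Since $h$ is a bounded‑variation function on a bounded interval it is bounded, so $\mu\le \|h\|_\infty\,\Leb$. By assumption~\ref{itm:A3} the critical point $c$ is quadratic, i.e.\ of order $2$, so by the result of \cite{nowicki1991invariant} recalled just before the statement one has $\log|T'|\in L^p(\Leb)$ for every $p\ge 1$, in particular $\int_I |\log|T'||\,dx<\infty$; therefore $\int |\log|T'||\,d\mu \le \|h\|_\infty \int_I |\log|T'||\,dx<\infty$. Then I would invoke the ergodic theorem for Markov chains, \cite[Remark~C4.1]{bhattacharya2007random}: for an ergodic stationary measure $\mu$ and $f\in L^1(\mu)$, the Birkhoff averages $\frac1n\sum_{t=0}^{n-1} f(X_t)$ converge $\mathbb{P}_\mu$‑almost surely to $\int f\,d\mu$. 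Applying this with $f=\log|T'|$ shows that the limit \eqref{Lexp} exists $\mathbb{P}_\mu$‑almost surely and equals \eqref{Lexp_int}, which is the claim.

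There is no serious obstacle here, since the statement merely repackages results already established. The one technical point deserving attention is the integrability step: in the chaotic case the critical point $c$ lies inside $\supp\mu$ and $\log|T'|$ has a logarithmic singularity there, which is tamed precisely by the quadratic order of $c$ together with the boundedness of the stationary density. If one wished to strengthen ``$\mathbb{P}_\mu$‑almost surely'' to convergence for Lebesgue‑a.e.\ starting point of the chain, one would additionally use the spectral gap of $\cL$ (Remark~\ref{cro}) to transfer the conclusion from the stationary chain to an arbitrary absolutely continuous initial law; I would regard this as an optional refinement rather than part of the core argument.
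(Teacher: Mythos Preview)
Your proposal is correct and follows essentially the same approach as the paper: the paper's proof is a one-liner noting that $\log|T'|\in L^1$ (established in the paragraph preceding the theorem) and that the unique stationary measure has bounded density (from Section~\ref{StationaryMeasureSection}), so the integral \eqref{Lexp_int} is finite and the ergodic theorem applies. You have simply spelled out each of these steps explicitly, including the ergodicity of $\mu$ and the domination $d\mu\le\|h\|_\infty\,dx$, which the paper leaves implicit.
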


\begin{proof}
The integral \eqref{Lexp_int} is finite, because $\log|T'|$ is in $L^1$ and the unique stationary measure $\mu$ has bounded density, as we proved in Section~\ref{StationaryMeasureSection}. 
\end{proof}

Once we know that the Lyapunov exponent exists almost surely, it is natural to ask how it depends on the model parameters, for instance, the length $\n$ of the fast component series.
We have the following

\begin{theorem}\label{Lexp_conv_n}
Suppose one of the following is satisfied:
(a) $T$ verifies \ref{Aq};
(b) $T$ is periodic and verifies \ref{As} and \ref{Ac}.
Then the Lyapunov exponent \eqref{Lexp} converges to the Lyapunov exponent of the deterministic map $T$ as $\n\to\infty$.
\end{theorem}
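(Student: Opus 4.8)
The plan is to prove convergence of the average Lyapunov exponent $\Lambda_\n = \int \log|T'|\,d\mu_\n$ to $\Lambda_\infty = \int \log|T'|\,d\mu$ by combining the weak-$*$ convergence $\mu_\n \to \mu$ (Theorems~\ref{WSS} and Proposition~\ref{WSS_fixed_point}) with a uniform integrability argument that upgrades weak-$*$ convergence to convergence of the integral of the unbounded observable $\log|T'|$. The difficulty is precisely that $\log|T'|$ is not continuous: it has a logarithmic singularity at the critical point $c$ (where $T'(c)=0$) and, after the domain extension, possibly near the endpoints. So one cannot simply invoke the definition of weak-$*$ convergence; one needs to control the mass that the stationary densities $h_\n$ place near $c$.

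In case (a), I would argue as follows. Fix $\delta>0$ and split $\int\log|T'|\,d\mu_\n = \int_{|x-c|\ge\delta}\log|T'|\,d\mu_\n + \int_{|x-c|<\delta}\log|T'|\,d\mu_\n$, and similarly for $\mu$. On the region $\{|x-c|\ge\delta\}$ the function $\log|T'|$ is bounded and continuous (away from the critical point; near the endpoints one can enlarge the excluded set to a neighbourhood of $\{0,1\}$ as well, noting that by Lemma~\ref{support_lemma1} all the $\mu_\n$ are supported in $\II=[\varepsilon,1-\Gamma/2]$, so endpoint issues do not actually arise). Hence by weak-$*$ convergence $\int_{|x-c|\ge\delta}\log|T'|\,d\mu_\n \to \int_{|x-c|\ge\delta}\log|T'|\,d\mu$, provided the boundary $\{|x-c|=\delta\}$ has zero $\mu$-measure, which holds for all but countably many $\delta$ since $\mu$ is absolutely continuous. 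For the region near $c$, I use assumption \ref{Aq}: since $\|h_\n\|_q\le C_q$ with $q>1$, Hölder's inequality with $p=q/(q-1)$ gives
\begin{equation*}
\Bigl|\int_{|x-c|<\delta}\log|T'|\,h_\n\,dx\Bigr|
\le C_q\,\Bigl(\int_{|x-c|<\delta}|\log|T'(x)||^p\,dx\Bigr)^{1/p}.
\end{equation*}
Because the critical point is quadratic (\ref{itm:A3}), $|\log|T'(x)||\sim|\log|x-c||$ near $c$, which lies in $L^p$ for every $p$, so the right-hand side tends to $0$ as $\delta\to 0$, uniformly in $\n$. The limit measure $\mu$ has density in $L^q$ by Proposition~\ref{LK}\ref{LK1}, so the same bound controls $\int_{|x-c|<\delta}\log|T'|\,d\mu$. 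Combining the three estimates and letting $\delta\to 0$ gives $\Lambda_\n\to\Lambda_\infty$; and $\Lambda_\infty=\int\log|T'|\,d\mu$ is the Lyapunov exponent of $T$ by Birkhoff/the ergodic theorem for $T$ (Theorem~\ref{thmB} and Theorem~\ref{thmC} identify $\mu$ as the SRB/acip and $\Lambda_\infty=\lim\frac1n\log|DT^n|>0$).

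In case (b), $T$ is periodic, the limit measure $\mu$ is the atomic invariant measure on the attracting cycle $\mathcal{O}$, and $\Lambda_\infty = \frac{1}{m}\sum_{x\in\mathcal{O}}\log|T'(x)| < 0$; note $c\notin\mathcal{O}$ by assumption \ref{Ac} (if $c$ were periodic $T$ would have a superattracting cycle — excluded), so $\log|T'|$ is continuous and bounded on a neighbourhood of $\mathcal{O}$. Here the uniform-$L^q$ bound is unavailable (Proposition~\ref{LK} shows \ref{Aq} fails in the periodic case), so instead I would use \ref{Ac} directly: it says $h_\n\to 0$ uniformly on a neighbourhood $V$ of $c$. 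Then $\int_V\log|T'|\,h_\n\,dx\to 0$ because $\|h_\n\|_{L^\infty(V)}\to 0$ while $\log|T'|\in L^1(V)$. Away from $V$, $\log|T'|$ is bounded and, using \ref{As} together with the random-maps argument of Proposition~\ref{WSS_fixed_point} (each $T_\eta$ on $E=[\delta,1-\delta]^k$ contracts $\supp\mu_\n$ towards $\mathcal{O}$), the mass of $\mu_\n$ concentrates on an arbitrarily small neighbourhood of $\mathcal{O}$; hence $\int_{I\setminus V}\log|T'|\,h_\n\,dx\to \frac{1}{m}\sum_{x\in\mathcal{O}}\log|T'(x)|$. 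Adding the two pieces gives $\Lambda_\n\to\Lambda_\infty$.

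The main obstacle is the singularity of $\log|T'|$ at $c$: everything hinges on showing the stationary densities do not concentrate there. In case (a) this is handled cleanly by the $L^q$-bound \ref{Aq} and the quadratic (hence integrable-log) nature of the critical point; in case (b) it is exactly what assumption \ref{Ac} is designed to supply, and the remaining work is the standard concentration-near-the-attractor estimate already carried out in Proposition~\ref{WSS_fixed_point}. A minor technical point to check is that the measure-zero-boundary condition needed to pass weak-$*$ limits through indicator functions $\1_{\{|x-c|\ge\delta\}}$ holds for a cofinal set of $\delta$, which follows from absolute continuity of $\mu$ in case (a) and is trivial in case (b).
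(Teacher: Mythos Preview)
Your approach is essentially the same as the paper's: in both cases you combine weak-$*$ convergence of $\mu_\n$ with a control of the mass near the critical point, via the $L^q$ bound \ref{Aq} in case~(a) and via \ref{Ac} in case~(b). The only cosmetic difference in case~(a) is that the paper approximates $\log|T'|$ in $L^p$ by a continuous function $\varphi_\varepsilon$ (density of $C^0$ in $L^p$), whereas you truncate spatially on $\{|x-c|\ge\delta\}$; both routes are standard and rely on the same two ingredients (H\"older with the uniform $L^q$ bound, and $\log|T'|\in L^p$).

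There is, however, a genuine gap in your case~(b). You write ``note $c\notin\mathcal{O}$ by assumption \ref{Ac}'', but \ref{Ac} is a \emph{conditional} statement whose hypothesis is $c\notin\mathcal{O}$; if $c\in\mathcal{O}$ the statement is vacuously satisfied, so ``$T$ verifies \ref{Ac}'' does not exclude that case. Your parenthetical justification (``a superattracting cycle --- excluded'') is not supported by the paper's standing assumptions on periodic maps: assumptions \ref{itm:A4}--\ref{itm:A5} are imposed only in the chaotic case, and nothing rules out $c\in\mathcal{O}$. The paper handles this subcase separately: if $c\in\mathcal{O}$ then the deterministic Lyapunov exponent is $-\infty$, and one shows $\int\log|T'|\,d\mu_\n\to-\infty$ by bounding above with the continuous truncations $f_m=\max\{\log|T'|,-m\}$ and using weak-$*$ convergence (Proposition~\ref{WSS_fixed_point}) to get $\limsup_\n\int\log|T'|\,d\mu_\n\le -m/|\mathcal{O}|+C$ for every $m$. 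You should add this short argument; the rest of your case~(b) matches the paper's treatment of the subcase $c\notin\mathcal{O}$.
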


\begin{proof}
(a)
Denote with $\mu_\n=h_\n dx$ the unique stationary measure associated to $\n$ and with $\mu=hdx$ the unique invariant measure for $T$.
We need to show that
\begin{equation}\label{Lexp_n_conv}
\int \log |T'|d\mu_\n \xrightarrow[\n\to\infty]{} \int \log |T'|d\mu. 
\end{equation}
Let $q>1$ be such that $h_\n, h \in L^q$ and set $p:=\frac{q}{q-1}$.
Since $\log|T'|\in L^p$, for any $\varepsilon>0$ there is $\varphi_{\varepsilon}\in C^0$ such that $\|\log|T'|-\varphi_{\varepsilon}\|_p<\varepsilon$.
Write 
$$
\int \log |T'|d\mu_\n=\int (\log |T'|-\varphi_{\varepsilon})  h_\n dx + \int \varphi_{\varepsilon}d\mu_\n
$$
and
$$
\int \log |T'|d\mu=\int (\log |T'|-\varphi_{\varepsilon})  hdx + \int \varphi_{\varepsilon}d\mu.
$$
Since $\log |T'|-\varphi_{\varepsilon}\in L^p$ and $h_\n,h\in L^q$, we have
$$
\int |\log |T'|-\varphi_{\varepsilon}| h_\n dx \le 
\|\log|T'|-\varphi_{\varepsilon}\|_p\|h_\n\|_q\le \varepsilon C_q,
$$
and the same inequality holds for the integral with respect to $\mu$.
Finally, from Theorem~\ref{WSS} we know that $\mu_\n \xrightarrow[]{w^*} \mu$, hence 
$\int \varphi_{\varepsilon} d\mu_\n \to \int \varphi_{\varepsilon} d\mu$ as $\n\to\infty$.

(b)
We know from Proposition~\ref{WSS_fixed_point} that $\int \varphi d\mu_\n \to \frac{1}{|\mathcal{O}|}\sum_{x\in\mathcal{O}}\varphi(x)$ for all $\varphi\in C^0(I)$ as $\n\to\infty$, since the $T$-invariant measure $\mu$ is atomic and supported on the attracting periodic orbit $\mathcal{O}$.
Let us first consider the case when the critical point $c$ belongs to $\mathcal{O}$;
the right-hand side of \eqref{Lexp_n_conv} is then $-\infty$.
Denoting $f_m(x) := \max\{\log|T'(x)|, -m\}\in C^0(I)$,
for every $m$ we have
$$
\int \log|T'| d\mu_\n \le \int f_m d\mu_\n \xrightarrow[\n\to\infty]{} \frac{1}{|\mathcal{O}|}\sum_{x\in\mathcal{O}}f_m(x) \le - \frac{m}{|\mathcal{O}|} + C,
$$
because $f_m\le C:= \sup\log|T'|$ and $f_m(c) = -m$.
Therefore $\int \log|T'| d\mu_\n \to -\infty$ as $\n\to\infty$.

If $c\notin\mathcal{O}$, we can fix a neighbourhood $U\ni c$ given by \ref{Ac} and split 
$$
\int \log|T'| d\mu_\n = \int_U \log|T'| d\mu_\n + \int_{U^c} \log|T'| d\mu_\n.
$$
The first term is bounded by $\|\log|T'|\|_1 \sup_U h_\n$ and vanishes as $\n\to\infty$ by \ref{Ac},
while the second one converges to $\int \log|T'| d\mu$ by Proposition~\ref{WSS_fixed_point} (approximate $\log|T'|\1_{U^c}$ with a suitable continuous function).
\end{proof}

\begin{corollary}\label{cor_pos_neg_Lexp}
Under the assumptions of Theorem~\ref{Lexp_conv_n} and for $\n$ large enough, $\Lambda$ is positive if $T$ is chaotic, and negative if $T$ is periodic.
\end{corollary}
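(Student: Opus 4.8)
The plan is to read off the sign of $\Lambda$ for large $\n$ directly from the convergence already established in Theorem~\ref{Lexp_conv_n}. Under hypothesis (a) of that theorem the map $T$ is necessarily chaotic — Assumption~\ref{Aq} cannot hold in the periodic case, by the observation following Theorem~\ref{WSS} — while under (b) it is periodic, so the two alternatives in the corollary match cases (a) and (b). In either case Theorem~\ref{Lexp_conv_n} gives $\Lambda = \int\log|T'|\,d\mu_\n \to \Lambda_T$ as $\n\to\infty$, where $\Lambda_T$ is the Lyapunov exponent of the unperturbed map $T$: namely $\int\log|T'|\,d\mu$ against the absolutely continuous invariant probability $\mu$ of $T$ in the chaotic case, and $\frac1m\sum_{x\in\mathcal{O}}\log|T'(x)|$ over the globally attracting cycle $\mathcal{O}$, $m:=|\mathcal{O}|$, in the periodic case. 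Since a convergent sequence whose limit is strictly positive (resp.\ strictly negative, or $-\infty$) is eventually strictly positive (resp.\ strictly negative), it suffices to show that $\Lambda_T>0$ when $T$ is chaotic and $\Lambda_T<0$ (possibly $-\infty$) when $T$ is periodic.

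In the chaotic case, $T$ admits a unique absolutely continuous invariant probability with $BV$ density (Section~\ref{UnimodalMapsSection}), so Keller's Theorem~\ref{thmC} applies and gives $\lim_{n\to\infty}\frac1n\log|DT^n(x)| = \Lambda_T > 0$ for Lebesgue-almost every $x$; by the ergodic theorem this value coincides with $\int\log|T'|\,d\mu$, the limit in Theorem~\ref{Lexp_conv_n}(a). Hence $\Lambda>0$ for $\n$ large enough. In the periodic case, $\mu$ is atomic and carried by the attracting cycle $\mathcal{O}$, whence $\Lambda_T = \frac1m\log|(T^m)'(x_0)|$ for any $x_0\in\mathcal{O}$. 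If $c\in\mathcal{O}$ then $(T^m)'(x_0)=0$ and $\Lambda_T=-\infty$; if $c\notin\mathcal{O}$ then $\mathcal{O}$ is a hyperbolic attracting cycle, so $|(T^m)'(x_0)|<1$ and $\Lambda_T<0$. Either way $\Lambda<0$ for $\n$ large enough. (This mirrors the two subcases already separated in the proof of Theorem~\ref{Lexp_conv_n}(b).)

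The one step requiring genuine care is the strict inequality $|(T^m)'(x_0)|<1$ for an attracting periodic orbit — that is, ruling out a neutral, yet ``globally attracting'', cycle with $|(T^m)'|=1$, for which one would only obtain $\Lambda_T=0$ and could not conclude that $\Lambda$ is eventually strictly negative. This is a standard consequence of the negative Schwarzian derivative assumption \ref{itm:A1} (it already underpins the classification \ref{itm:C1}--\ref{itm:C3}) and is precisely the assertion recorded just after Theorem~\ref{thmC}, so I would simply invoke it. Granting this, the corollary follows at once from Theorem~\ref{Lexp_conv_n}.
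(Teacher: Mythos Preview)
Your argument is correct and is exactly the reasoning the paper has in mind: the corollary is stated without proof, as an immediate consequence of Theorem~\ref{Lexp_conv_n} together with Theorem~\ref{thmC} (for the chaotic case) and the remark following it (for the periodic case). One small simplification: the neutral-orbit issue you single out is in fact already excluded by hypothesis~(b) itself, since \ref{As} forces $|T'|\le\tau<1$ on $\supp\mu_\n\supset\mathcal{O}$, whence $|(T^m)'(x_0)|\le\tau^m<1$; so no separate appeal to negative Schwarzian is needed.
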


In a few cases the negativity of the Lyapunov exponent can be shown relatively easily.
For instance, denote $\{ x\in I \mid |T'(x)|\le 1 \} = [m,M]$ and $\bar\Delta := \sup_{x\in I} T(x)+s(x)$.
If $T(x)-s(x) > \min\{x, m\}$ for all $x\in(0,\bar\Delta]$, then, arguing as in the proof of Lemma~\ref{support_lemma1}, one can show that any continuous stationary measure $\mu$ has $\supp\mu\subset [m, \bar\Delta]$.
So if, moreover, $\bar\Delta \le M$, then $\Lambda < 0$.
Following the classification given in Section~\ref{UnimodalMapsSection}, let us consider the case $T(c) < c$, where the map $T$ exhibits a globally attracting fixed point.
In this case, the conditions above will be satisfied if $s(x)$ is small enough,
in other words, the stationary measure will be supported in a neighbourhood of the fixed point, where $|T'(x)| \le 1$.
For other cases, we provide some numerical examples in Section~\ref{NumericalLyapunovSection}.

\subsection{Continuity of the Lyapunov exponent} \label{LyapExpContinuitySection}

Denote by $\Theta := \{\theta = (\phi^*, \omega, \n)\in(0,1)^2\times(0,\infty) \mid \max T_\theta < 1\}$ the (extended) parameter space.
In order to prove the continuity of the Lyapunov exponent, we will assume that $T_\theta(x) \in C^3(\Theta\times [0,1])$ and $p_\theta(x,y) \in C^2(\Theta\times (0,1)^2)$.
It is straightforward that our main example defined in Section~\ref{MainExampleSection} satisfies this assumption.
Let $\tilde{\Theta}\subset\Theta$ be the set of parameters $\theta$ for which there is a unique stationary measure $\mu_\theta$ with a density $h_\theta\in BV$;
we proved in Section~\ref{StationaryMeasureSection} that this is the case if $T_\theta$ is periodic or chaotic, but our numerical investigations confirm that in fact $\Leb(\Theta\setminus\tilde{\Theta})=0$.

\begin{theorem} \label{Lexp_cont_thm}
The mapping $\tilde{\Theta}\ni\theta\mapsto \Lambda_{\theta}\in\mathbb{R}$ is continuous.
\end{theorem}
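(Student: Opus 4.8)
The plan is to write $\Lambda_\theta=\int\log|T_\theta'|\,d\mu_\theta=\int\log|T_\theta'|\,h_\theta\,dx$ and to obtain continuity by combining two ingredients: continuity of $\theta\mapsto\log|T_\theta'|$ in $L^1$, which is elementary from $T_\theta\in C^3$ and the (uniformly) quadratic nature of the critical point; and continuity of the stationary density $\theta\mapsto h_\theta$ in $L^1$, which is the substantive point and which I would deduce from analytic perturbation theory for the quasi‑compact Markov operators $\cL_\theta$.

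First I would fix $\theta_0\in\tilde\Theta$ and set up a $\theta$‑uniform framework on a fixed compact interval inside $(0,1)$. By Lemma~\ref{support_lemma1} — whose constant $\varepsilon$ may be chosen uniformly for $\theta$ in a small neighbourhood $V$ of $\theta_0$, since $T_\theta$ and $s_\theta$ depend continuously on $\theta$ — every stationary density is supported in $\II=[\varepsilon,1-\Gamma_\theta/2]$, and since $\Delta_\theta=T_\theta(c_\theta)$ stays bounded away from $1$ near $\theta_0$, all these intervals sit inside a fixed compact $J=[\varepsilon/2,\,1-\Gamma_{\theta_0}/4]\subset(0,1)$. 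One checks from the assumptions of Section~\ref{Coupling_section} that $\bigcup_{x\in J}\supp P_x^{(\theta)}\subset J$ for $\theta\in V$, so $\cL_\theta$ restricts to an operator on $L^1(J)$ (and on $BV(J)$) having $h_\theta$ as its unique fixed density. On $J$ the kernels $p_\theta$ are uniformly bounded and jointly continuous in $(\theta,x,y)$, hence $p_\theta\to p_{\theta_0}$ uniformly on $J\times J$; moreover, by Lemma~\ref{Lh_TV_estimate_lemma}, $\|h_\theta\|_{BV}\le 1+\sup_{\theta\in V,\,x\in J}|p_\theta(x,\cdot)|_{TV}<\infty$, so the densities $h_\theta$ are uniformly bounded in $L^\infty(J)$.

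The heart of the argument is then the operator‑norm continuity $\|\cL_\theta-\cL_{\theta_0}\|_{L^1(J)\to L^1(J)}\le|J|\cdot\sup_{J\times J}|p_\theta-p_{\theta_0}|\to0$ as $\theta\to\theta_0$; this is precisely where confining to $J$ is essential, since the naive estimate fails on the full extended interval because $p_\theta(x,\cdot)$ degenerates to a sharp spike as $x\to0$ or $x\to1$. On $L^1(J)$ the operator $\cL_{\theta_0}$ is quasi‑compact by Proposition~\ref{ubv_fin_many_prop} — in fact, by the strong Lasota–Yorke bound $\|\cL_\theta h\|_{BV}\le(C+1)\|h\|_1$ together with the compact embedding $BV(J)\hookrightarrow L^1(J)$, it is even compact — its spectral radius is $1$, and since $\theta_0\in\tilde\Theta$ the eigenvalue $1$ is isolated with a one‑dimensional generalized eigenspace, so that the associated spectral projection is the rank‑one operator $\Pi_{\theta_0}f=\bigl(\int f\,dx\bigr)h_{\theta_0}$ (semisimplicity of $1$ follows from the mean ergodic theorem and uniqueness of $\mu_{\theta_0}$). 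Standard analytic perturbation theory — using $\Pi_\theta=\frac{1}{2\pi i}\oint(\zeta-\cL_\theta)^{-1}\,d\zeta$ over a small circle about $1$, whose integrand is uniformly bounded for $\theta\in V$ by the operator‑norm continuity — then gives $\Pi_\theta\to\Pi_{\theta_0}$ in the $L^1(J)$ operator norm; evaluating at a fixed density yields $h_\theta\to h_{\theta_0}$ in $L^1$.

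Finally I would assemble the pieces via $|\Lambda_\theta-\Lambda_{\theta_0}|\le\|h_\theta\|_\infty\,\|\log|T_\theta'|-\log|T_{\theta_0}'|\|_{L^1(J)}+\bigl|\int_J\log|T_{\theta_0}'|\,(h_\theta-h_{\theta_0})\,dx\bigr|$. The first term tends to $0$ because $T_\theta'\to T_{\theta_0}'$ uniformly on $J$ while, near the continuously moving critical point, the uniform quadratic bound of \ref{itm:A3} makes $\{\log|T_\theta'|\}_{\theta\in V}$ uniformly integrable on $J$, so Vitali's theorem applies. The second term tends to $0$ because $h_\theta\to h_{\theta_0}$ in $L^1(J)$ while staying bounded in $L^\infty(J)$, hence converges in every $L^r(J)$, $r<\infty$, whereas $\log|T_{\theta_0}'|\in L^{r'}(J)$ for all $r'<\infty$ (again by the quadratic critical point), so Hölder's inequality closes the estimate. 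I expect the main obstacle to be the $\theta$‑uniform reduction to the fixed compact $J$ — securing a uniform $\varepsilon$ in the support lemma and the $\cL_\theta$‑invariance of $L^1(J)$ — since that is exactly what upgrades the easy but useless pointwise kernel convergence on the full domain into genuine operator‑norm continuity; once it is in place, the perturbation step and the two integral estimates are routine.
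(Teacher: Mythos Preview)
Your proposal is correct and takes a genuinely different route from the paper's proof, the key divergence being how continuity of $\theta\mapsto h_\theta$ is obtained. The paper bounds $\|h_\theta-h_{\theta'}\|_{BV}$ directly: using the spectral gap in the form $\|\cL_\theta^k f\|_{BV}\le C_\theta\zeta_\theta^k\|f\|_{BV}$ for zero-mean $f$, together with a telescopic expansion $\cL_\theta^k-\cL_{\theta'}^k=\sum_j\cL_\theta^{k-j}(\cL_\theta-\cL_{\theta'})\cL_{\theta'}^{j-1}$, it gets $\|h_\theta-h_{\theta'}\|_{BV}\le M_\theta\|(\cL_\theta-\cL_{\theta'})h_{\theta'}\|_{BV}$, and then estimates the right-hand side from the $C^2$ smoothness of the kernel. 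You instead observe that the strong Lasota--Yorke bound makes $\cL_\theta$ genuinely compact on $L^1(J)$ and invoke Kato-type perturbation of the rank-one spectral projection at $1$, using operator-norm continuity of $\theta\mapsto\cL_\theta$ on $L^1(J)$. Both arguments ultimately rest on the spectral gap; the paper's is more hands-on and yields an explicit local Lipschitz estimate in $\theta$, while yours is cleaner but imports more machinery and puts more weight on the setup step (the $\theta$-uniform confinement to a fixed $\cL_\theta$-invariant $J$). Note, incidentally, that once you have the uniform $BV$ bound on $h_\theta$ and operator-norm continuity of $\cL_\theta$, a Helly-compactness argument (every subsequential $L^1$-limit of $h_{\theta_n}$ is a probability fixed density of $\cL_{\theta_0}$, hence equals $h_{\theta_0}$) already gives $h_\theta\to h_{\theta_0}$ in $L^1$ without the contour-integral step. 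For the remaining piece, the paper writes out a dedicated lemma showing $\|\log|T_\theta'|-\log|T_{\theta'}'|\|_1\to0$ by explicit comparison of the logarithmic singularities at the moving critical points; your uniform-integrability/Vitali route is an equivalent but more compressed way to reach the same conclusion.
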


\begin{proof}
Fix an exhaustion of $\Theta$ by nested compact sets $\Theta_\iota$ and set $\tilde{\Theta}_\iota := \Theta_\iota\cap\tilde{\Theta}$.
It is enough to prove that the mapping $\tilde{\Theta}_\iota \ni \theta \mapsto \Lambda_\theta \in\mathbb{R}$ is continuous on each $\tilde{\Theta}_\iota$, and from now on we fix one of them.
As we showed in Lemma~\ref{support_lemma1}, for each $\theta\in\Theta$, $\supp\mu_\theta\subset I_{\varepsilon_\theta} = [\varepsilon_\theta, 1-\varepsilon_\theta]$, and since $\varepsilon_\theta$ can be shown to depend continuously on $\theta$, we can find a single $\varepsilon>0$ that works for all $\theta\in\Theta_\iota$.

Given $\theta,\theta'\in \tilde{\Theta}_\iota$ we can write
$$
|\Lambda_{\theta}-\Lambda_{\theta'}| 
= \Bigl|\int \log |T_{\theta}'|h_{\theta}dx-\int \log |T_{\theta'}'|h_{\theta'}dx\Bigr| \le
$$
\begin{equation}\label{cont_1}
\int \bigl|\log |T_{\theta}'|\bigr| |h_{\theta}-h_{\theta'}|dx 
+ \int \bigl|\log |T'_{\theta}|-\log |T_{\theta'}'|\bigr| h_{\theta'} dx.
\end{equation}
To bound the second term in \eqref{cont_1}, first notice that, by Lemma~\ref{Lh_TV_estimate_lemma},
$$
\|h_{\theta'}\|_\infty \le \|h_{\theta'}\|_{BV} = \|\cL_{\theta'} h_{\theta'}\|_{BV} \le C \|h_{\theta'}\|_1 = C,
$$
where 
$$
C = 1+\sup_{\theta\in\Theta_\iota} \sup_{x\in I_\varepsilon} |p_\theta(x,\cdot)|_{TV} \le 1+\sup_{\theta\in\Theta_\iota} \sup_{x,y\in I_\varepsilon} \bigl|\frac{\partial p_\theta}{\partial y}(x,y)\bigr| < \infty
$$
is finite, because $\frac{\partial p_\theta}{\partial y}(x,y)$ is continuous and $\Theta_\iota\times I_\varepsilon^2$ is compact.
The second term is thus bounded by $C \|\log |T'_{\theta}|-\log |T_{\theta'}'|\|_1$
and, by Lemma~\ref{logTv_lemma} below, goes to 0 as $\theta'\to\theta$.

We now estimate the first term in \eqref{cont_1}.
Since $\log|T'_\theta| \in L^1$, it is enough to bound $\|h_{\theta}-h_{\theta'}\|_\infty$ which is again dominated by $\|h_{\theta}-h_{\theta'}\|_{BV}$.
By invariance,
$$
\|h_{\theta}-h_{\theta'}\|_{BV}
= \|\cL^k_{\theta} h_{\theta}-\cL^k_{\theta'}h_{\theta'}\|_{BV}
\le \|\cL^k_{\theta} (h_{\theta} - h_{\theta'})\|_{BV}
+ \|(\cL^k_{\theta} - \cL^k_{\theta'}) h_{\theta'}\|_{BV}.
$$
As we said in Remark~\ref{cro}, the Markov operator $\cL_\theta$ enjoys the exponential bound
$$
\|\cL_\theta^k f\|_{BV}\le C_\theta \zeta_\theta^k \|f\|_{BV}
$$
for all $k>0$ and $f\in BV$ supported on $I_\varepsilon$ with $\int f dx=0$, where the constants $C_\theta > 0$, $0<\zeta_\theta<1$ depend on the parameter $\theta$.
Since $h_\theta - h_{\theta'}$ has zero mean, we therefore have
$$
\|\cL^k_{\theta} (h_{\theta} - h_{\theta'})\|_{BV}\le C_{\theta} \zeta_{\theta}^k \|h_{\theta}-h_{\theta'}\|_{BV}.
$$
Then 
$$
(1- C_{\theta} \zeta_{\theta}^{k}) \|h_{\theta}-h_{\theta'}\|_{BV}
\le \|(\cL^k_{\theta} - \cL^k_{\theta'}) h_{\theta'}\|_{BV},
$$
and for $k$ sufficiently large, $C_{\theta} \zeta_{\theta}^{k}<1$.
By a standard trick, expanding a telescopic sum $\cL^k_{\theta} - \cL^k_{\theta'} = \sum_{j=1}^k \cL_{\theta}^{k-j}(\cL_{\theta}-\cL_{\theta'})\cL_{\theta'}^{j-1}$, we get
\begin{multline*}
\|(\cL^k_{\theta} - \cL^k_{\theta'}) h_{\theta'}\|_{BV}
\le \sum_{j=1}^k \|\cL_{\theta}^{k-j}(\cL_{\theta}-\cL_{\theta'})h_{\theta'}\|_{BV} \\
\le \sum_{j=1}^k C_{\theta} \zeta_{\theta}^{k-j}\|(\cL_{\theta}-\cL_{\theta'})h_{\theta'}\|_{BV} 
\le C_{\theta}\frac{1}{1-\zeta_{\theta}}\|(\cL_{\theta}-\cL_{\theta'})h_{\theta'}\|_{BV}.
\end{multline*}
Combining the above inequalities we come to the following estimate:
$$
\int \bigl|\log |T_{\theta}'|\bigr| |h_{\theta}-h_{\theta'}|dx \le M_\theta \|(\cL_{\theta}-\cL_{\theta'})h_{\theta'}\|_{BV},
$$
where $M_\theta = \frac{C_\theta \|\log|T'_\theta|\|_1}{(1 - \zeta_\theta)(1- C_\theta \zeta_\theta^{k})}$.
It therefore remains to bound $\|(\cL_{\theta}-\cL_{\theta'})h_{\theta'}\|_{BV}$.
Since both $h_{\theta'}$ and $(\cL_{\theta}-\cL_{\theta'})h_{\theta'}$ are supported on $I_\varepsilon$ and $\|h_{\theta'}\|_\infty \le C$, we have
$$
\|(\cL_{\theta}-\cL_{\theta'})h_{\theta'}\|_1 
= \int_{I_\varepsilon} \int_{I_\varepsilon} |p_{\theta}(x,y) - p_{\theta'}(x,y)| h_{\theta'}(x) dx dy
\le M_1 \|\theta-\theta'\|,
$$
where $M_1 = C \sup_{\theta\in\Theta_\iota} \sup_{x,y\in I_\varepsilon} \|\nabla_\theta p_\theta(x,y)\|$ is finite because $\nabla_\theta p_\theta(x,y)$ is continuous and $\Theta_\iota\times I_\varepsilon^2$ is compact.
Similarly, arguing as in the proof of Lemma~\ref{Lh_TV_estimate_lemma}, we get
$$
|(\cL_{\theta}-\cL_{\theta'})h_{\theta'}|_{TV} 
\le \int_{I_\varepsilon} |p_{\theta}(x, \cdot) - p_{\theta'}(x, \cdot)|_{TV} h_{\theta'}(x) dx
\le M_2 \|\theta-\theta'\|,
$$
with $M_2 = C \sup_{\theta\in\Theta_\iota} \sup_{x,y\in I_\varepsilon} \|\nabla_\theta \frac{\partial p_\theta}{\partial y}(x,y)\| < \infty$.
Therefore, the first term in \eqref{cont_1} is bounded by $M_\theta(M_1+M_2)\|\theta-\theta'\|$.
This finishes the proof.
\end{proof}

\begin{remark}
\normalfont
Clearly, the above proof works if we replace $\log|T'|$ with any continuous function.
Therefore, the mapping $\tilde{\Theta}\ni\theta\mapsto \mu_{\theta}\in\cM$ is continuous with respect to the weak-* topology on $\cM$,
i.e.\ $\tilde{\Theta}\ni\theta\mapsto \int\varphi\mu_{\theta}\in\mathbb{R}$ is continuous for any $\varphi\in C^0(I)$.
Theorem~\ref{Lexp_cont_thm} is more delicate, however, because $\log|T'_\theta|$ is neither continuous nor bounded and also depends on $\theta$.
\end{remark}

\begin{lemma} \label{logTv_lemma}
$\|\log |T'_{\theta}|-\log |T_{\theta'}'|\|_1\to 0$ as $\theta'\to\theta$.
\end{lemma}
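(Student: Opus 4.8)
The plan is to reduce the $L^1$ convergence of $\log|T'_{\theta'}|$ to $\log|T'_\theta|$ to two separate issues: the behaviour away from the critical point, where $T'_\theta$ is bounded away from $0$ and depends smoothly on $\theta$, and the behaviour near the critical point, where $\log|T'_\theta|$ blows up but is integrable uniformly in $\theta$. Concretely, I would fix $\theta$ and a compact neighbourhood $\Theta_\iota\ni\theta$ in parameter space, and write $c_\theta$ for the critical point of $T_\theta$, which by \ref{itm:A3} is non-degenerate and (from the explicit formula $c_\theta = (1+\sqrt[3]{b/\omega})^{-1}$) depends continuously on $\theta$. For $\delta>0$ split $I = U_\delta \cup U_\delta^c$ where $U_\delta = (c_\theta-\delta, c_\theta+\delta)$ (enlarging slightly so it contains $c_{\theta'}$ for $\theta'$ close to $\theta$).

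On $U_\delta^c$ the derivative $|T'_{\theta'}(x)|$ is bounded below by a positive constant uniformly for $x\in U_\delta^c$ and $\theta'$ in a small neighbourhood of $\theta$ (using that $T'_\theta$ is continuous, jointly in $(x,\theta)$ by the $C^3$ assumption, and vanishes only at $c_\theta$). Hence $\log|T'_{\theta'}|$ is continuous and uniformly bounded on $U_\delta^c \times \{\text{nbhd of }\theta\}$, and $\log|T'_{\theta'}| \to \log|T'_\theta|$ uniformly on $U_\delta^c$ as $\theta'\to\theta$; in particular its $L^1(U_\delta^c)$ norm goes to $0$. On $U_\delta$ I would use the order-$2$ critical point bound: there is a constant $D$ (uniform over $\Theta_\iota$, again by compactness and continuity of $T''_\theta(c_\theta)\ne 0$) such that $D^{-1}|x-c_\theta| \le |T'_\theta(x)| \le D|x-c_\theta|$ on $U_\delta$, and similarly with $\theta'$ and $c_{\theta'}$. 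Then $\bigl|\log|T'_\theta(x)|\bigr| \le |\log D| + \bigl|\log|x-c_\theta|\bigr|$, whose integral over $U_\delta$ is $O(\delta\log(1/\delta))$, uniformly in $\theta'$ near $\theta$ and $\theta$ in $\Theta_\iota$. Thus
$$
\int_{U_\delta} \bigl|\log|T'_\theta| - \log|T'_{\theta'}|\bigr|\,dx
\le \int_{U_\delta}\bigl|\log|T'_\theta|\bigr|\,dx + \int_{U_\delta}\bigl|\log|T'_{\theta'}|\bigr|\,dx
$$
can be made $<\varepsilon/2$ by choosing $\delta$ small, uniformly in $\theta'$. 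Having fixed such $\delta$, the $U_\delta^c$ part is $<\varepsilon/2$ for $\theta'$ close enough to $\theta$, and combining gives the claim.

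The main obstacle is the estimate near the critical point: one needs the lower bound $|T'_\theta(x)| \ge D^{-1}|x-c_\theta|$ to hold with $D$ \emph{uniform} over a parameter neighbourhood (so that the integrable singularity is controlled uniformly), and one needs to handle the fact that the singularity location $c_{\theta'}$ moves with $\theta'$ — this is why I enlarge $U_\delta$ to contain both $c_\theta$ and $c_{\theta'}$, and why continuity of $\theta\mapsto c_\theta$ and of $\theta\mapsto T''_\theta(c_\theta)$ is used. Everything else (the uniform lower bound on $|T'_{\theta'}|$ off the critical point, uniform convergence there) is routine given the $C^3$ joint regularity of $T_\theta$ in $(x,\theta)$ assumed just before Theorem~\ref{Lexp_cont_thm}.
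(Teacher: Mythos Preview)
Your argument is correct, and the overall two-region split (near/away from the moving critical point) is the same as the paper's. The genuine difference is in how you handle the neighbourhood $U_\delta$ of the critical point. You use the crude triangle inequality
\[
\int_{U_\delta}\bigl|\log|T'_\theta|-\log|T'_{\theta'}|\bigr|
\le \int_{U_\delta}\bigl|\log|T'_\theta|\bigr|+\int_{U_\delta}\bigl|\log|T'_{\theta'}|\bigr|
= O(\delta\log(1/\delta)),
\]
relying only on the uniform equi-integrability of the logarithmic singularity (the order-$2$ bound $D^{-1}|x-c_\theta|\le|T'_\theta(x)|\le D|x-c_\theta|$ with $D$ uniform over a compact parameter set). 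The paper instead bounds the \emph{difference} directly: from the same Taylor estimates it gets
\[
\bigl|\log|T'_\theta(x)|-\log|T'_{\theta'}(x)|\bigr|
\le \log\frac{D_\delta^+}{D_\delta^-}+\Bigl|\log\frac{|x-c_\theta|}{|x-c_{\theta'}|}\Bigr|,
\]
makes the first term small by shrinking $\delta$, and then computes explicitly that $\int_{U_\delta}\bigl|\log\frac{|x-c_\theta|}{|x-c_{\theta'}|}\bigr|\,dx\to 0$ as $c_{\theta'}\to c_\theta$. So the paper exhibits the cancellation of the two logarithmic singularities, whereas you simply absorb each singularity separately into a term that is small with $\delta$. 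Your route is shorter and avoids the elementary but slightly fiddly computation of $\int\bigl|\log\frac{|x-a|}{|x-b|}\bigr|\,dx$; the paper's route gives a bit more information (the difference is actually pointwise controlled, not just in $L^1$, on $U_\delta$). Both are valid and the off-critical-point parts are essentially identical.
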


\begin{proof}
First notice that the critical point $c_\theta$ of the map $T_\theta$ depends continuously on the parameter $\theta\in\Theta$.
Indeed, since $T'_\theta$ is continuous on $\Theta\times I$, the set $\{(\theta, c_\theta)\} = (T'_\theta)^{-1}(\{0\})$ is closed, and then the map $\Theta\ni\theta\mapsto c_\theta\in I$ is continuous by the closed graph theorem.

The functions $\log|T'_\theta|$ and $\log|T'_{\theta'}|$ have logarithmic singularities at $c_\theta$ and $c_{\theta'}$ respectively.
We will show that these singularities cancel out as $c_{\theta'}$ approaches $c_\theta$.
As in the proof of Theorem~\ref{Lexp_cont_thm}, we may assume that $\theta, \theta' \in \Theta_\iota$, where $\Theta_\iota$ is compact.

Let $\alpha := \sup_{\theta\in\Theta_\iota}|T'''_\theta|$.
As a direct consequence of the mean value theorem,
$$
(|T''_\theta(c_\theta)|-2\alpha\delta) |x-c_\theta|
\le |T'_\theta(x)| \le 
(|T''_\theta(c_\theta)|+2\alpha\delta) |x-c_\theta|
$$
for all $|x-c_\theta| \le 2\delta$, and the same inequality holds if we replace $\theta$ with $\theta'$.
Set $D_\delta^\pm := |T''_\theta(c_\theta)|\pm 3\alpha\delta$; both $D_\delta^+$ and $D_\delta^-$ are positive, since $T_\theta$ has quadratic critical point ($T''_\theta(c_\theta)<0$).
If $\theta'$ is sufficiently close to $\theta$, then $|c_\theta - c_{\theta'}| < \delta/2$ and $|T''_\theta(c_\theta) - T''_\theta(c_{\theta'})|<\delta$,
and for all $|x-c_\theta| \le \delta$ we then simultaneously have
$$
D_\delta^- |x-c_\theta| \le |T'_\theta(x)| \le D_\delta^+ |x-c_\theta|,
$$
$$
D_\delta^- |x-c_{\theta'}| \le |T'_{\theta'}(x)| \le D_\delta^+ |x-c_{\theta'}|,
$$
and hence
$$
\bigl|\log |T'_{\theta}|-\log |T_{\theta'}'|\bigr|
\le \log\frac{D_\delta^+}{D_\delta^-} 
+ \Bigl|\log\frac{|x-c_{\theta}|}{|x-c_{\theta'}|}\Bigr|.
$$
Given $\varepsilon>0$ and $\theta\in\Theta_\iota$, we first fix $\delta>0$ such that $\log\frac{D_\delta^+}{D_\delta^-} < \frac{\varepsilon}{3}$ and then let $\theta'\to\theta$.
The integral of the second term is elementary and vanishes as $c_{\theta'}\to c_{\theta}$,
so $\int_{B_\delta(c_\theta)} \bigl|\log\frac{|x-c_{\theta}|}{|x-c_{\theta'}|}\bigr| dx < \frac{\varepsilon}{3}$, provided $\theta'$ and $\theta$ are sufficiently close,
and
\begin{equation} \label{log_est1}
\int_{B_\delta(c_\theta)} \bigl|\log |T'_{\theta}|-\log |T_{\theta'}'|\bigr| dx < \frac{2\varepsilon}{3}.
\end{equation}

Let us show that $\log |T'_{\theta}|-\log |T_{\theta'}'|\to 0$ uniformly on $B_\delta(c_\theta)^c$.
Denote $\beta := \inf_{x\in B_\delta(c_\theta)^c}|T'_\theta(x)| > 0$.
We have $\|T'_{\theta}-T'_{\theta'}\|_\infty \le M \|\theta-\theta'\|$, where $M = \sup_{\theta\in\Theta_\iota, x\in I}\|\nabla_\theta T'_\theta(x)\|$.
Therefore $\inf_{x\in B_\delta(c_\theta)^c}|T'_{\theta'}(x)| > \frac{\beta}{2}$ if $\|\theta-\theta'\| < \frac{\beta}{2M}$.
Consequently, $\bigl|\log |T'_{\theta}|-\log |T_{\theta'}'|\bigr| < \frac{2M}{\beta} \|\theta-\theta'\|$, and for $\|\theta-\theta'\| < \frac{\varepsilon\beta}{6M}$ we have
\begin{equation} \label{log_est2}
\int_{B_\delta(c_\theta)^c} \bigl|\log |T'_{\theta}|-\log |T_{\theta'}'|\bigr| dx < \frac{\varepsilon}{3}.
\end{equation}
Combining \eqref{log_est2} with \eqref{log_est1} we finally get $\|\log |T'_{\theta}|-\log |T_{\theta'}'|\|_1 < \varepsilon$.
\end{proof}

\subsection{Random Lyapunov exponent and random entropy}\label{RLESection}

In \eqref{Lexp} we used the derivative of the deterministic map only.
Alternatively, if we define the process using the random transformations  \eqref{rmp}, we are led to compute the Lyapunov exponent of the cocycle given by the derivative computed along the random orbit, namely we define the {\em random Lyapunov exponent} (RLE) $\bar\Lambda$ as
\begin{equation}\label{RE}
\bar\Lambda:=\lim_{n\to\infty} \frac{1}{n}\log|D(T_{\eta_n}\circ \cdots \circ T_{\eta_1})(x)|,
\end{equation}
for almost every sequence $(\eta_k)\in [0,1]^{\mathbb{N}}$ with respect to the measure $\Leb^{\otimes\mathbb{N}}$ (see Section~\ref{RandomTransformSection}), and almost every  $x\in I$ with respect to the stationary measure $\mu_{\n}$. 
By using the notation introduced in Section~\ref{RandomTransformSection}: $\bar x_k(x):=T_{\eta_k}\circ \cdots \circ T_{\eta_1}(x)$, with $\bar x_0(x)=x$,  formula \eqref{RE} is equal to
\begin{equation} \label{Lexp_rand_maps}
\bar\Lambda = \lim_{n\to\infty} \frac{1}{n} \sum_{k=1}^{n} \log|T_{\eta_k}'(\bar x_{k-1})|,
\end{equation}
again for $\Leb^{\otimes\mathbb{N}}$-a.e.\ $(\eta_k)\in [0,1]^{\mathbb{N}}$ and $\mu_{\n}$-a.e.\ $x\in I$.
By using the ergodic theorem for random transformations, see \cite{kifier1986ergodic} or \cite[Section~3.1]{araujo2005stochastic}, we have that 
$$
\bar\Lambda=\int \log|T'_{\eta}(x)|d\mu_{\n}(x) d\eta.
$$
Notice that if we compare this random exponent with $\Lambda$, we see that the difference between the two is bounded by
$$
|\Lambda-\bar\Lambda|\le \int \bigl|\log|T'_{\eta}(x)|-\log|T'(x)|\bigr|d\mu_{\n}(x)d\eta.
$$
By using expression \eqref{main_example_random_maps_alt} for $T_{\eta}$, we can bound the error term in a more explicit way as
$$
|\Lambda-\bar\Lambda|\le \int\left|\log\Bigl|1+\frac{\tilde{q}_{\n}(\eta) \sigma'(x)}{T'(x)}\Bigr|\right|h_{\n}(x) dx d\eta.
$$
Since $\sigma'$ is bounded over $I_{\varepsilon, \Gamma}$, $\log|T'|\in L^p, p\ge1$, and finally $\|h_{\n}\|_q\le C_q$ for all $\n$, we have
$$
|\Lambda-\bar\Lambda|\le C_q \int \left\|\log \Bigl|1+\frac{\tilde{q}_{\n}(\eta) \sigma'}{T'}\Bigr|\right\|_p d\eta.
$$
We expect that this error converges to zero for large $\n$, since the quantile function converge to zero almost everywhere.
Table~\ref{tab:lyap_exp_conv} in Section~\ref{NumericalLyapunovSection} shows that, in fact, the error remains very small even for small values of $\n$.

\medskip

It is known that whenever the   random Lyapunov exponent  \eqref{RE} is positive, then it equals the {\em random entropy}, which is the random generalization of the Kolmogorov-Sinai entropy: this equality is called the {\em entropy formula}.  
Roughly speaking, the random entropy computes the limit of the Shannon entropy of the join partition generated by the successive application of the backward images of the random maps on an initial generating partition, aka the entropy rate.
We defer to \cite[Theorem~1.3]{kifier1986ergodic} for the precise definition  and \cite[Theorem~3.2]{araujo2005stochastic} for the connection with the RLE. 
What is important for us is that the random entropy  coincides with the much  easier  object which is the RLE when the latter is positive.

In Corollary~\ref{cor_pos_neg_Lexp} we proved that the Lyapunov exponent \eqref{Lexp} may be negative, and there is strong numerical evidence that the RLE \eqref{Lexp_rand_maps} is also negative for certain parameters; see Section~\ref{NumericalLyapunovSection}.
By assuming that the RLE is negative or zero, we then get that the random entropy is also zero, by using another important result connecting entropy and Lyapunov exponents, namely the Margulis-Ruelle inequality (see for instance \cite{ruelle1978inequality}), which states that the metric entropy is bounded by the maximum between zero and the sum of the positive Lyapunov exponents. 
The random version of this inequality, which we use,  has been proved by Kifer in \cite[Theorem~1.4]{kifier1986ergodic}.

We should now  stress the interesting fact that although the entropy is zero, the Markov chain \emph{mixes exponentially fast}, as we pointed out in Remark~\ref{cro}.  
This means that for any observables $f\in L^1$, $g\in BV$ the correlations $\int (\cL_{\n}^k f)(x) g(x) dx$ converge to $\int f(x) d\mu_{\n}(x)\int g(x) dx$ exponentially fast when $k\to\infty$.
This result can be stated in a more suggestive way by relying on random transformations. 
By using the notations introduced in Section~\ref{RandomTransformSection}, we can in fact rewrite the previous correlation  in terms of composition of randomly chosen maps and say that there exist $0<v<1$ and $C>0$, depending only on the system, such that, for all $k\ge0$,
$$
\left| \iint f(x) g(T_{\eta_k}\circ\dots\circ T_{\eta_1}(x)) d\overline{\eta}dx - 
\int f d\mu_{\n} \int g(x) dx \right|
\le C v^k \|f\|_1 \|g\|_{BV}.
$$
This exponential decay of correlations is a consequence of the spectral gap prescribed by the  quasi-compactness of the Markov operator proved in Proposition~\ref{ubv_fin_many_prop}, and of the uniqueness of the absolutely continuous  stationary measure, see \cite{bahsoun2014pseudo} for details; of course these properties hold even when the Lyapunov exponent is negative or zero.

\section{Numerical results}\label{nnrr} \label{NumericalSection}

In this section we describe and discuss some numerical experiments in support of our rigorous theoretical investigations. 
Specifically, we present the bifurcation diagram associated with the unimodal map \eqref{unimodal_map_rewritten} and compute the corresponding Lyapunov exponents for both the deterministic and the stochastic version of the map, see Subsections~\ref{subsec:bifurcation} and \ref{NumericalLyapunovSection}, respectively.\footnote{The code to reproduce all the figures and tables of this section are available from the corresponding author upon reasonable request.}

\subsection{Bifurcation diagram}\label{subsec:bifurcation}

In this subsection we analyse the dynamics of the unimodal map \eqref{unimodal_map_rewritten}.
The bifurcation diagram of a dynamical system shows how the asymptotic distribution of a typical orbit varies as a function of a parameter. 
For our map either the memory parameter $\omega$ or the parameter $\phi^*$ can be employed as bifurcation parameter. 
Fig.~\ref{fig:bif_diag} shows the bifurcation diagram as a function of $\phi^*$ (the bifurcation diagram as a function of $\omega$ looks similar). 
The choice of the parameters for this plot corresponds to a vertical segment in the parameter space (see Fig.~\ref{fig:dyncore}) with $\omega=0.5$ and $\phi^*$ varying in a neighbourhood of the dynamical core area. 
As explained in Section~\ref{UnimodalMapsSection}, Theorem~\ref{thmA}, the invariant set of the unimodal map  \eqref{unimodal_map_rewritten} could be an attracting periodic orbit, a Cantor set of measure zero or a finite union of intervals with a dense orbit, depending on the parameters $\omega$ and $\phi^*$. 
Specifically, there is an attracting fixed point or a 2-cycle outside the dynamical core region, whereas in the dynamical core the situation is more complex as small parameter variations can change the dynamics from chaotic to periodic and back, as we see in Fig.~\ref{fig:bif_diag}.  
To identify more precisely the signature of a chaotic behaviour, in the next subsection we compute the Lyapunov exponent as a function of $\phi^*$.

\begin{figure}%[h]
\centering
\includegraphics[width=1\linewidth]{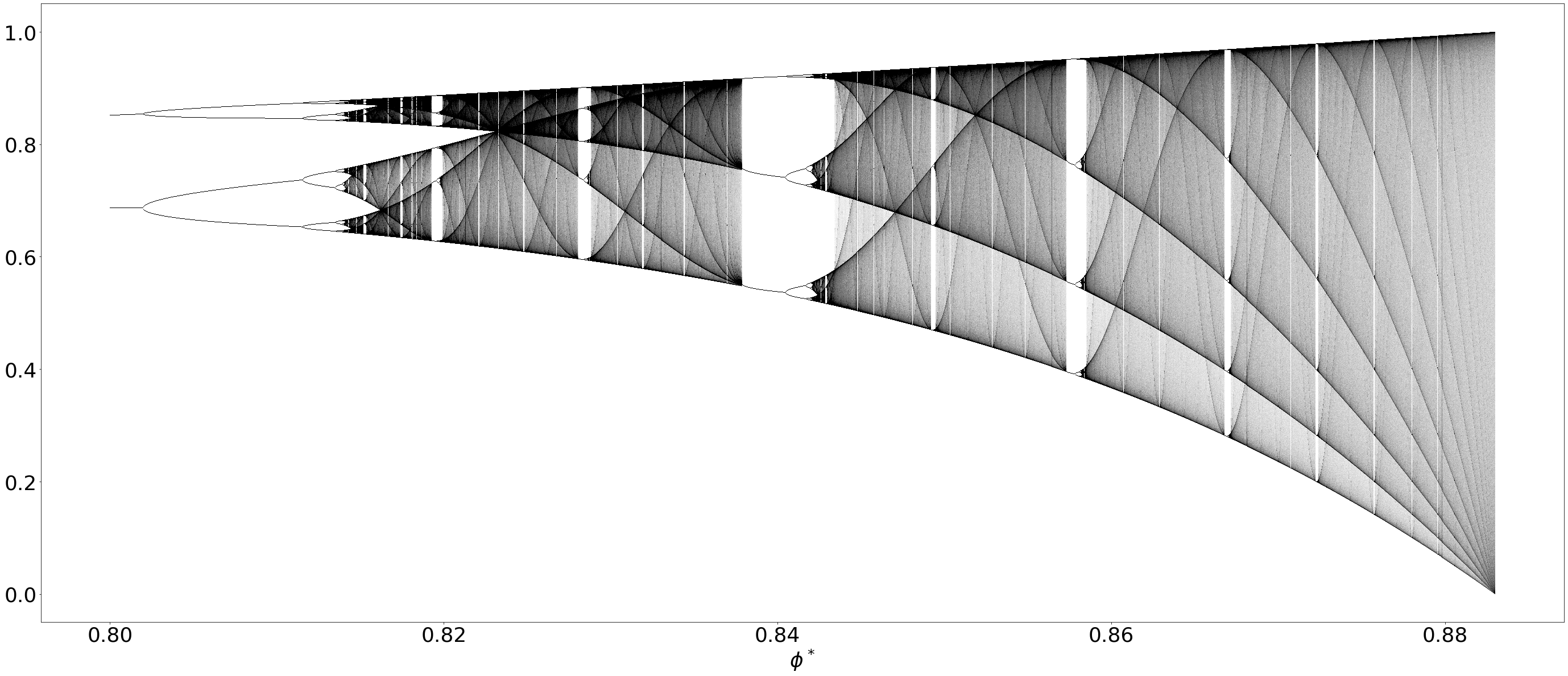}
\caption{Bifurcation diagram for $T$ in the dynamical core region ($\omega=0.5$).}
\label{fig:bif_diag}

\vspace*{\floatsep}

\includegraphics[width=1\linewidth]{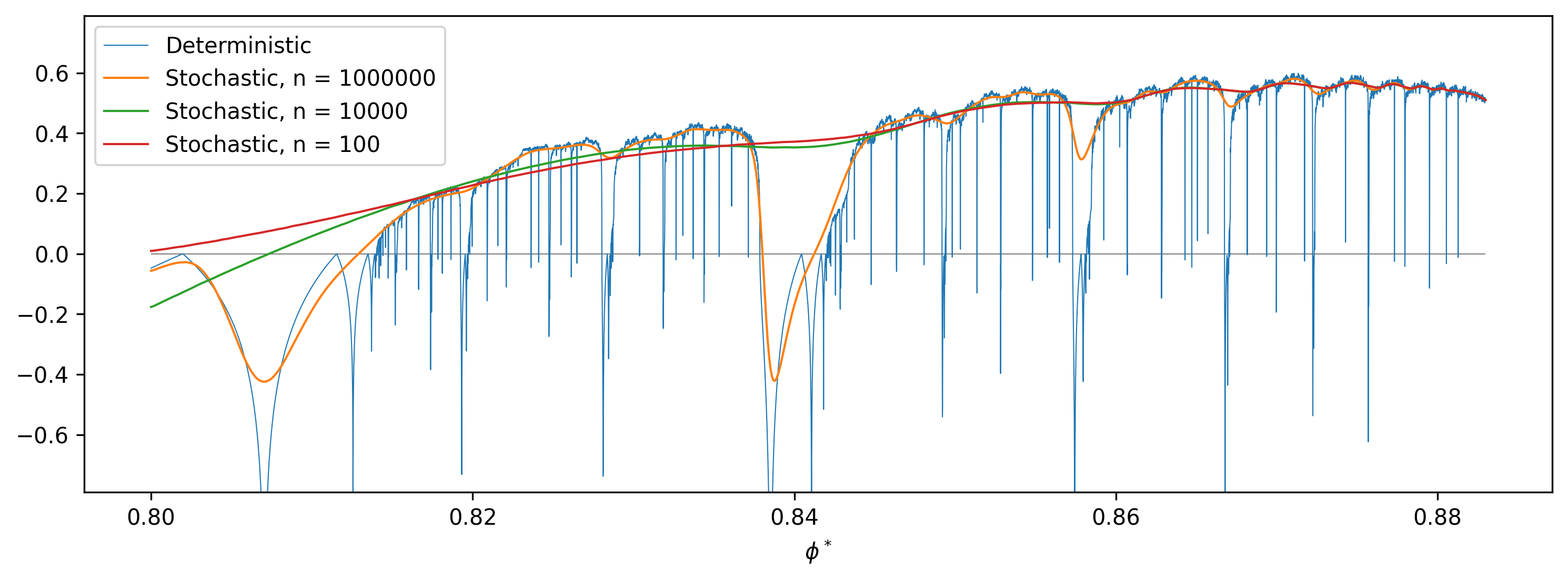}
\caption{Lyapunov exponents for deterministic and stochastic maps ($\omega=0.5$).}\label{f6}
\label{fig:lyap_exp_slice}

\vspace*{\floatsep}

\centering
\begin{subfigure}{.5\textwidth}
  \centering
  \includegraphics[width=1\linewidth]{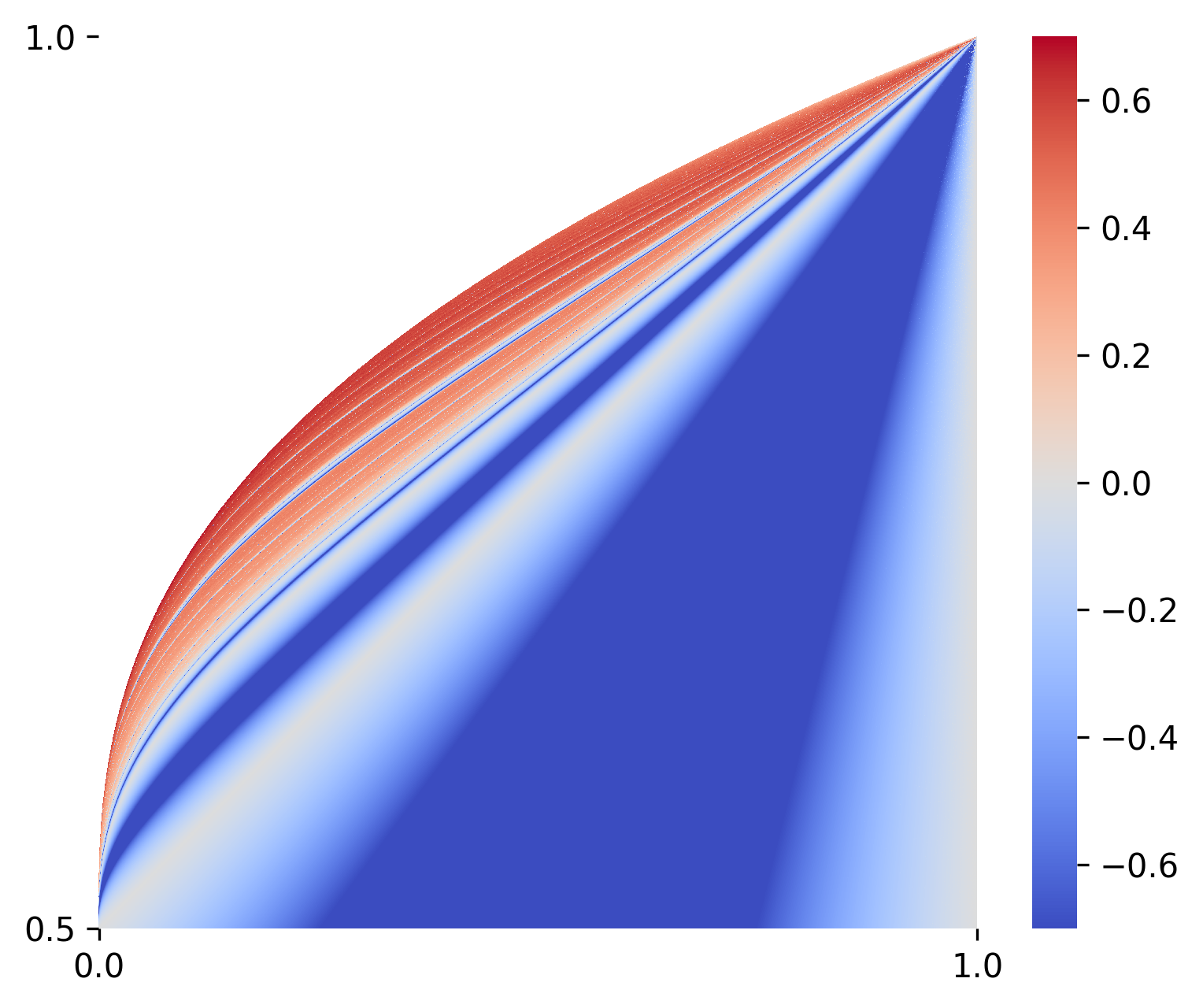}
  \put(-170,75){{\small $\omega$}}
  \put(-100,0){{\small $\phi^*$}}
  \caption{}
  \label{fig:lyap_det_plot}
\end{subfigure}%
\begin{subfigure}{.5\textwidth}
  \centering
  \includegraphics[width=1\linewidth]{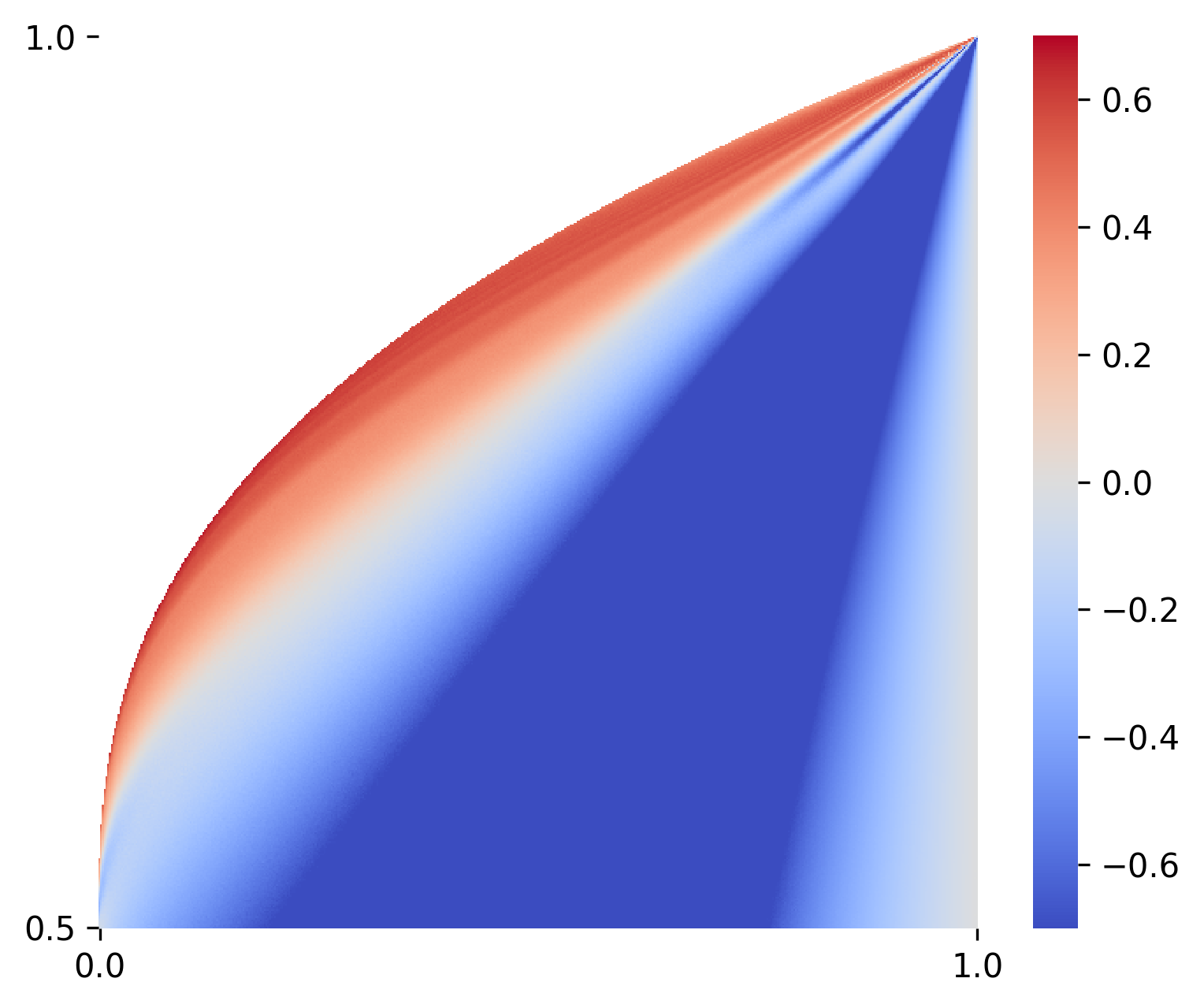}
  \put(-100,0){{\small $\phi^*$}}
  \caption{}
  \label{fig:lyap_stoc_plot}
\end{subfigure}
\caption{
Contour plot of the average Lyapunov exponents for (a)~deterministic and (b)~stochastic maps ($\n=1$).
}
\label{fig:lyap_plots}
\end{figure}

\subsection{Lyapunov exponent}\label{NumericalLyapunovSection}
The Lyapunov exponent for the deterministic map \eqref{unimodal_map_rewritten} is positive if and only if $T$ admits an absolutely continuous invariant measure, see Theorem~\ref{thmC} in Section~\ref{UnimodalMapsSection}. 
Fig.~\ref{fig:lyap_exp_slice} shows the estimated Lyapunov exponent in the same slice of the parameter space as in Fig.~\ref{fig:bif_diag}. 
Notice that the exponent becomes a smooth function of $\phi^*$ when we add even a very small noise, in agreement with the results of Section~\ref{LyapExpContinuitySection}. 
Fig.~\ref{fig:lyap_plots} shows the contour plot of the Lyapunov exponent as a function of $\phi^*$ and $\omega$ both for the deterministic map \eqref{unimodal_map_rewritten} and for the stochastic process described in Section~\ref{MainExampleSection}. 
The right panel shows a clear noise-induced regularization phenomenon.
In fact, for the stochastic version of the map the intricate fine structure in the parameter dependence of the Lyapunov exponent disappears and is replaced by a smooth dependence. 

To provide a numerical exemplification of the stochastic stability and of Theorem~\ref{Lexp_conv_n}, we computed average \eqref{Lexp} and random \eqref{Lexp_rand_maps} Lyapunov exponents, as well as the Lyapunov exponent for the deterministic map \eqref{unimodal_map_rewritten}, for different values of $\phi^*$, $\omega$ and $\n$.
The results are presented in Table~\ref{tab:lyap_exp_conv}.
Within each row, the two subrows are ALE and RLE, respectively: the two agree very well, in most cases up to the precision of the numerical computation.
In both cases we sampled 128 independent realizations of the process, each of length $10,000$.

\begin{table}[h]
\centering
\begin{tabular}{|c|c|c|c|c|c|c|c|c|}
\hline
\multirow{3}{*}{$\phi^*$} & \multirow{3}{*}{$\omega$} & \multirow{3}{*}{DC} & \multirow{3}{*}{Per.} & \multicolumn{5}{c|}{Lyapunov exponent}\\
\cline{5-9}
 & & & & \multicolumn{4}{c|}{$\n$} & \multirow{2}{*}{Det.}\\
\cline{5-8}
 & & & & $1$ & $10^3$ & $10^6$ & $10^9$ & \\
\hline
\multirow{2}{*}{0.845} & \multirow{2}{*}{0.557} & \multirow{2}{*}{yes} & \multirow{2}{*}{no} & 0.287 & 0.287 & 0.349 & 0.341 & \multirow{2}{*}{0.340} \\
 &  &  &  & 0.286 & 0.286 & 0.349 & 0.341 &  \\
\hline
\multirow{2}{*}{0.795} & \multirow{2}{*}{0.390} & \multirow{2}{*}{yes} & \multirow{2}{*}{no} & 0.345 & 0.346 & 0.389 & 0.398 & \multirow{2}{*}{0.400} \\
 &  &  &  & 0.345 & 0.346 & 0.389 & 0.399 & \\
\hline
\multirow{2}{*}{0.904} & \multirow{2}{*}{0.627} & \multirow{2}{*}{yes} & \multirow{2}{*}{no} & 0.557 & 0.557 & 0.560 & 0.550 & \multirow{2}{*}{0.552} \\
 &  &  &  & 0.558 & 0.557 & 0.560 & 0.550 & \\
\hline
\multirow{2}{*}{0.821} & \multirow{2}{*}{0.439} & \multirow{2}{*}{yes} & \multirow{2}{*}{yes} & 0.378 & 0.375 & $-0.051$ & $-0.159$ & \multirow{2}{*}{$-0.158$} \\
 &  &  &  & 0.378 & 0.375 & $-0.052$ & $-0.159$ & \\
\hline
\multirow{2}{*}{0.944} & \multirow{2}{*}{0.826} & \multirow{2}{*}{yes} & \multirow{2}{*}{yes} & 0.296 & 0.297 & 0.049 & $-0.123$ & \multirow{2}{*}{$-0.122$} \\
 &  &  &  & 0.297 & 0.296 & 0.052 & $-0.123$ & \\
\hline
\multirow{2}{*}{0.766} & \multirow{2}{*}{0.323} & \multirow{2}{*}{yes} & \multirow{2}{*}{yes} & 0.320 & 0.324 & 0.286 & $-0.076$ & \multirow{2}{*}{$-0.046$} \\
 &  &  &  & 0.320 & 0.325 & 0.286 & $-0.076$ & \\
\hline
\multirow{2}{*}{0.258} & \multirow{2}{*}{0.837} & \multirow{2}{*}{no} & \multirow{2}{*}{yes} & $-0.243$ & $-0.248$ & $-0.248$ & $-0.248$ & \multirow{2}{*}{$-0.248$} \\
 &  &  &  & $-0.286$ & $-0.248$ & $-0.248$ & $-0.248$ & \\
\hline
\multirow{2}{*}{0.908} & \multirow{2}{*}{0.804} & \multirow{2}{*}{no} & \multirow{2}{*}{yes} & $-0.284$ & $-0.285$ & $-0.365$ & $-0.362$ & \multirow{2}{*}{$-0.362$} \\
 &  &  &  & $-0.286$ & $-0.287$ & $-0.366$ & $-0.362$ & \\
\hline
\multirow{2}{*}{0.541} & \multirow{2}{*}{0.227} & \multirow{2}{*}{no} & \multirow{2}{*}{yes} & $-0.619$ & $-0.441$ & $-0.380$ & $-0.380$ & \multirow{2}{*}{$-0.380$} \\
 &  &  &  & $-0.578$ & $-0.446$ & $-0.380$ & $-0.380$ & \\
\hline
\end{tabular}
\caption{Average and random Lyapunov exponents for different values of $\n$ compared to the Lyapunov exponent for the deterministic map.
DC stands for dynamical core, Per.\ for periodic, and Det.\ for the Lyapunov exponent of the deterministic map.}
\label{tab:lyap_exp_conv}
\end{table}

\section{Estimating the map parameters via deep neural networks}\label{sec:estimationDNN}

We now consider the problem of estimating the parameters of the map from (short) time series.
This is motivated by the fact that in Section~\ref{sec:empirical} we empirically investigate a dataset of US commercial banks leverage.
We will consider the time series of bank's leverage as realizations of the process described in Section~\ref{MainExampleSection} and will estimate for each bank the model parameters.
Each time series is very short, being composed by only 59 points.  

Given the random nature of the map, one could use maximum likelihood estimation to estimate the parameters. 
However, this approach is not feasible for two reasons. 
First, the likelihood function is highly non-convex, so that standard optimization methods may perform poorly. 
Second, although the likelihood function for the process itself can be written explicitly, in many cases it may happen that the observed time series are systematically undersampled and this prevents an explicit calculation of the likelihood function.
For example, we may observe only one slow time scale, corresponding to portfolio rebalancing, out of two, or even out of three (i.e.\ the bank's risk assessment and portfolio composition may be updated more frequently than our quarterly observations, for instance at a monthly frequency). 
If we observe, for instance, only the second iterate of the process, 
\begin{equation*}
    \phi_{t+2} = T(T(\phi_t;\theta);\theta) + \sigma(\phi_t;\theta)\epsilon_t) + \sigma(\phi_{t+1};\theta)\epsilon_{t+1},\quad t\in\mathbb{Z},
\end{equation*}
the transition probabilities $p(\phi_{t+2} | \phi_{t};\theta)$, $t\in\mathbb{Z}$, are no longer Gaussian (as it would be the case if we observe the first iterate). 
Hence, there is apparently no effective formula for the likelihood function.
 
For this reason, to estimate the parameters of the map, we propose to use a convolutional neural network (CNN) consisting of a sequence of convolutional layers followed by a sequence of dense, or fully connected, layers; we refer to \cite[Chapter~9]{goodfellow2016deep} for detailed exposition of CNNs.
In order to deal with the possibility that the observed time-series are realizations of certain iterates of the process, we separately optimize two CNN architectures to be used sequentially. 
First, we optimize a CNN (henceforth denoted by $\mathsf{CNN1}$) for estimating the number of iterates between observations: it takes as input time series of length 59 and outputs the corresponding value $k$ of the map's iterate that generated the time-series. 
Second, for each value of $k$, we optimize a CNN (henceforth denoted by $\mathsf{CNN2}(k)$), having the exact same inputs, to output the corresponding parameters $(\phi^*, \omega)$ that generated the time-series.
Once the parameters $(\phi^*, \omega)$ are estimated, the variance of the noise (and therefore $\n$) can be estimated by standard methods.
To train $\mathsf{CNN1}$ and $\mathsf{CNN2}(k)$ we used a training set of one million samples simulated from the model in Section~\ref{MainExampleSection} with values of the parameters  $\theta = (\phi^*, \omega, \n)$ which uniformly span the parameter space.
For both steps, when simulating the series, the initial state of the system was taken randomly from a uniform distribution on $[0,1]$. 
This is especially important because of the relatively short length of the series at hand.
Therefore, being based only on simulations, the NN approach, contrary to the maximum likelihood one, can work also for partial observations.

The architectures of $\mathsf{CNN1}$ and $\mathsf{CNN2}(k)$ are schematized in Fig.~\ref{fig:cnn1}.
In general, a convolutional layer is composed of $n_f$ filters and each filter is associated with one kernel that is applied to a small moving window of the time-series; for instance, in our first convolutional layer $n_f = 128$ and all the windows are of width $2$.
The outputs of one convolutional layer are connected to the next layer.
The weights of these connections constitute the NN parameters to be optimized. 
After seven convolutions, the output is passed to a sequence of dense layers, which concludes the NN. 
We use the rectified linear unit (ReLU) function as activation function.
Now, we describe the experimental setup.

\begin{figure}[ht]
\centering
{\scriptsize
\begin{BVerbatim}
Model CNN1: "convolutional_categorical_model"
_________________________________________________________________
Layer (type)                 Output Shape              Param #   
=================================================================
reshape (Reshape)            (None, 59, 1)             0         
_________________________________________________________________
conv1d_1 (Conv1D)            (None, 58, 128)           384       
_________________________________________________________________
conv1d_2 (Conv1D)            (None, 29, 64)            16448     
_________________________________________________________________
conv1d_3 (Conv1D)            (None, 15, 64)            8256      
_________________________________________________________________
conv1d_4 (Conv1D)            (None, 8, 64)             8256      
_________________________________________________________________
conv1d_5 (Conv1D)            (None, 4, 64)             8256      
_________________________________________________________________
conv1d_6 (Conv1D)            (None, 2, 64)             8256      
_________________________________________________________________
conv1d_7 (Conv1D)            (None, 1, 64)             8256      
_________________________________________________________________
flatten (Flatten)            (None, 64)                0         
_________________________________________________________________
dense_1 (Dense)              (None, 128)               8320      
_________________________________________________________________
dense_2 (Dense)              (None, 64)                8256      
_________________________________________________________________
dense_3 (Dense)              (None, 3)                 195       
=================================================================
Trainable params: 74,883

Model CNN2: "convolutional_model"
...  
_________________________________________________________________
dense_3 (Dense)              (None, 2)                 130       
=================================================================
Trainable params: 74,818
\end{BVerbatim}
}
\caption{Architectures of the $\mathsf{CNN1}$ model used to estimate the iterate $k$ and the $\mathsf{CNN2}(k)$ model used to estimate the parameters $(\phi^*,\omega)$ for each $k$.
The two models differ only in the output layer.}
\label{fig:cnn1}
\end{figure}

The implementation is carried out in Python. 
To generate training and testing data we simulate one million samples. 
$\mathsf{CNN1}$ is optimized with the stochastic gradient descent method by using the Adam algorithm \cite{kingma2014method}, the categorical cross-entropy as loss function, and the accuracy as metric, with $L^2$ regularization of weights.
Instead, to optimize $\mathsf{CNN2}$ we use the Mean Squared Error (MSE) both as loss function and as metric. 
The batch size is $32$ in both cases.
The seven convolutional and three dense layers have a total of $74,818$ trainable parameters.

The CNN models show a good performance. 
We tested our methods on a testing set of 100,000 out of sample time series. 
Fig.~\ref{fig:conf_matr_k} shows the accuracy of $\mathsf{CNN1}$ to estimate the iterates on test data.
We choose $k = 1, 2, 3$ because of our empirical application of Section~\ref{sec:empirical}. 
The MSE of $\mathsf{CNN2}(k)$ on the test set is about 0.001 for each $k$. 
Since both $\phi^*$ and $\omega$ are uniformly distributed in $[0,1]$, the MSE is quite small and the NN effective.

\begin{figure}[h]
\centering
\includegraphics[width=0.4\linewidth]{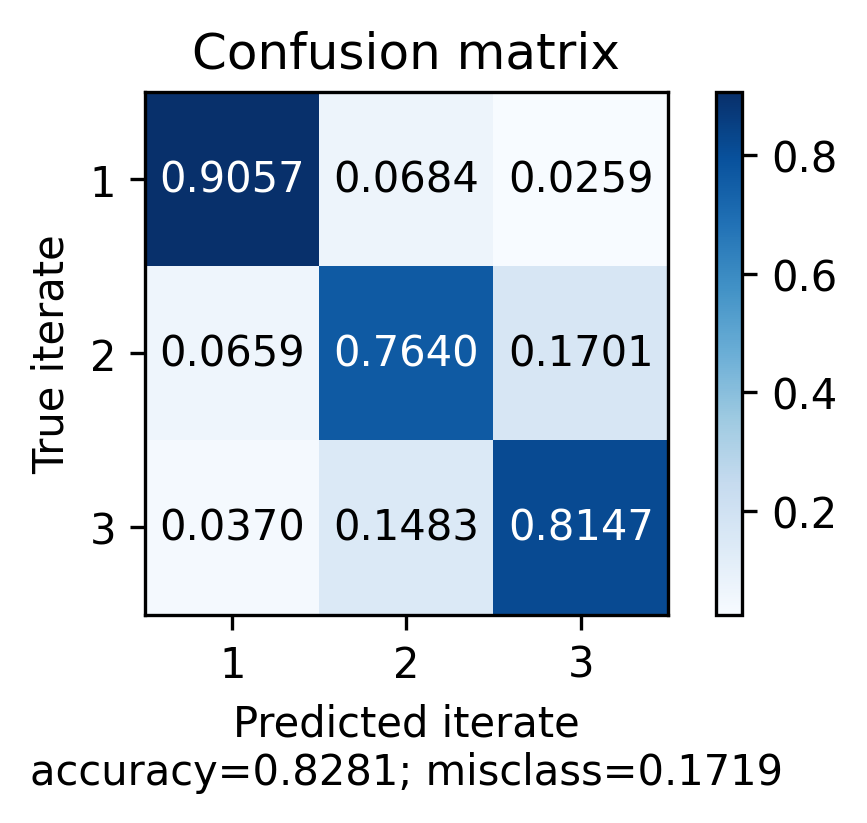}
\caption{
Accuracy of the $\mathsf{CNN1}$ model used to estimate the iterates.
}
\label{fig:conf_matr_k}
\end{figure}

\section{Chaos in real bank leverage time series}\label{sec:empirical}

In this section we perform an empirical analysis of a large set of bank's leverage time series. 
We first describe the the data set. 
Next, through $\mathsf{CNN1}$ and $\mathsf{CNN2}(k)$, we estimate the observed iterate $k$ as well as the parameters $(\phi^*, \omega)$ of the model and discuss the results, investigating the relation between the estimated parameters and the bank's size.
Finally, to perform an independent analysis supporting our conclusion, we apply the Chaos Decision Tree Algorithm \cite{toker2020simple} to these time-series and compare the resulting classification with the one obtained with NN estimates.

\subsection{Data set}\label{subsec:dataset}

We use the data set of US \textit{Commercial Banks and Savings and Loans Associations} provided by the \textit{Federal Financial Institutions Examination Council} (FFIEC). 
For the sake of completeness, we provide here a description of it, referring to \cite{di2018assessing} and references therein for further details. 
A Commercial Bank is defined\footnote{See \url{http://www.ffiec.gov/nicSearch/FAQ/Glossary.html}.} officially by the \textrm{FFIEC} as: \textit{``[\ldots] a financial institution that is owned by stockholders, operates for a profit, and engages in various lending activities"}. 
Commercial banks quarterly fill the \textit{Consolidated Report of Condition and Income} (generally referred to as \textit{Call Report}) as required by the \textrm{FFIEC}.
A Savings and Loan Associations, instead, is a financial institution that accepts deposits primarily from individuals and channels its funds primarily into residential mortgage loans. 
Starting from the first quarter of 2012 they are required to file the same reports of Commercial Banks, thus they are included in the data set since then. 
The data provided by the \textit{Call reports} are publicly available since 1986 although the level of details required has increased over time. 
To have a good compromise between the fine structure of data and a reasonably populated statistics we follow \cite{di2018assessing} and consider the time period going from March 2001 to December 2014, for a total of 59 quarters.  
Also we consider only the financial institutions that are present in the data set in all the quarters for a total of $5,031$ banks.
The financial leverage $\lambda_t$ of each institution at time $t$ is defined as the ratio between the sum of its assets and its equity at time $t$. 
In particular, the latter is given by $E_t=A_t-L_t$ where $L_t$ represents the liabilities and $A_t$ the assets of the bank, thus $\lambda_t = {A_t}/{E_t}$.

\subsection{Estimation via neural networks}

In order to estimate the parameters of the map on the just-described data set, we need to fix the value of the liquidity parameter $\gamma$; remind that we consider the linear transformation $\phi_t = (\lambda_t-1)/\gamma$.
In this work, we assume that the liquidity parameter $\gamma$ of the risk investment is the same for all the banks in our data set.
Admittedly, this is a simplifying assumption, coherent with the so-called assumption of statistical equivalence for risky investments (see also \cite{mazzarisi2019panic}), which allows for an analytical tractability of the model; a complete exploration of a relaxation of this hypothesis is beyond the scopes of the present paper and is, therefore, left for future work.
In order to fix its value, we exclude $662$ time series (out of $5,031$) that  contain outliers, which we define to be values that are two standard deviations away from the mean.
We then set $\gamma$ to the maximum over the remaining $4,369$ series, obtaining $\gamma = 15.969$.

Since the time series that we analyze contain quarterly data and portfolio decisions may be made more frequently, it is natural to assume that the observed time series are realizations of certain iterates of the process; we assume $k \in \{1, 2, 3\}$. 
Fig.~\ref{subfig:iterate_size_a} displays the output of $\mathsf{CNN1}$.
It turns out that only a small percentage (about $1\%$) of the banks in our data set rebalance their portfolios at a quarterly frequency.
Most banks seem to rebalance either every six weeks ($k=2$, about $55\%$) or every month ($k=3$, about $43\%$). 
One may ask if the portfolio re-balancing frequency is related to the size of the bank (defined as the average across the 59 quarters of the sum of the dollar-amount of all the type of assets detained by it), for example because larger banks manage more actively their portfolio.
Fig.~\ref{subfig:iterate_size_b} shows the box plots of the logarithm of the size of the banks for $k=1,2,3$.
We observe that there is not a statistically significant difference among them.

\begin{figure}[h]
\centering
\begin{subfigure}{.438\textwidth}
  \includegraphics[width=1\linewidth]{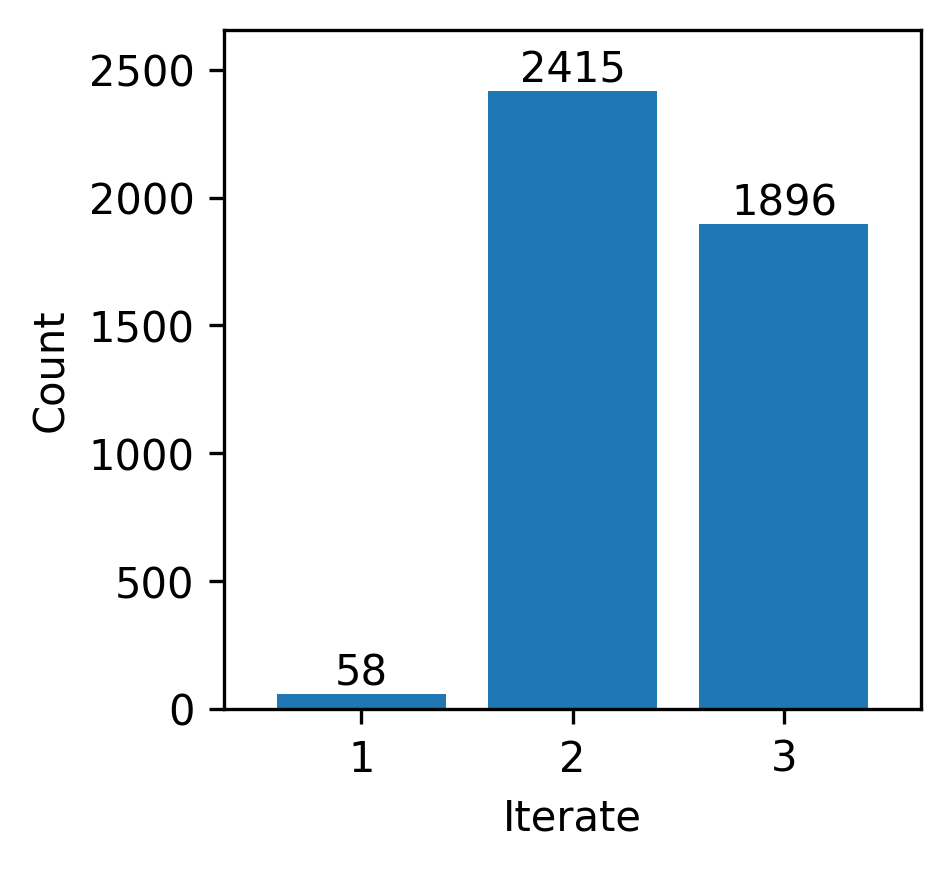}
  \caption{Number of banks by iterate.}
  \label{subfig:iterate_size_a}
\end{subfigure}
\begin{subfigure}{.4\textwidth}
  \includegraphics[width=1\linewidth]{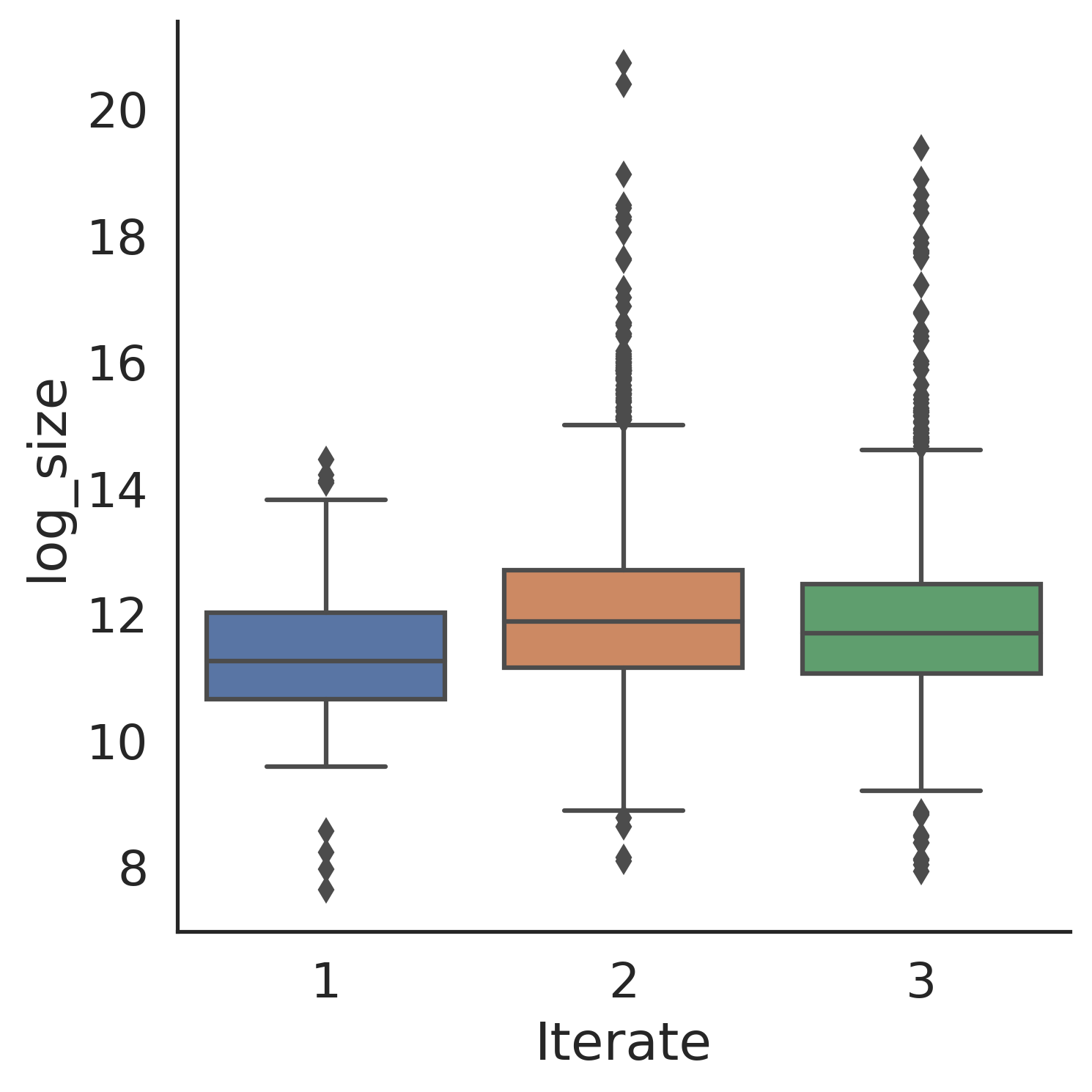}
  \caption{Log bank size by iterate.}
  \label{subfig:iterate_size_b}
\end{subfigure}
\caption{Results on the iterate of the process.}
\label{fig:iterate_size}
\end{figure}

Once the number of iterates $k$ has been identified, we proceed to the study of the chaotic behaviour of the time series. 
We divide the banks in the three groups identified by $k$ and employ $\mathsf{CNN2}(k)$ in order to estimate the parameters $(\phi^*,\omega)$. 
In Fig.~\ref{subfig:parameters_estimates_1_a} we plot the estimates of $\phi^*$ against those of $\omega$; pairs belonging to the dynamical core region are displayed in red, whereas those falling outside the dynamical core region are displayed in blue.
Interestingly, the percentage of banks for which the estimates $(\phi^*,\omega)$ are in the dynamical core region is about $12\%$.
Moreover, Fig.~\ref{subfig:parameters_estimates_1_b} indicates that $k$ is very often equal to two for these banks. 

\begin{figure}[h]
\centering
\begin{subfigure}{.5\textwidth}
  \includegraphics[width=1\linewidth]{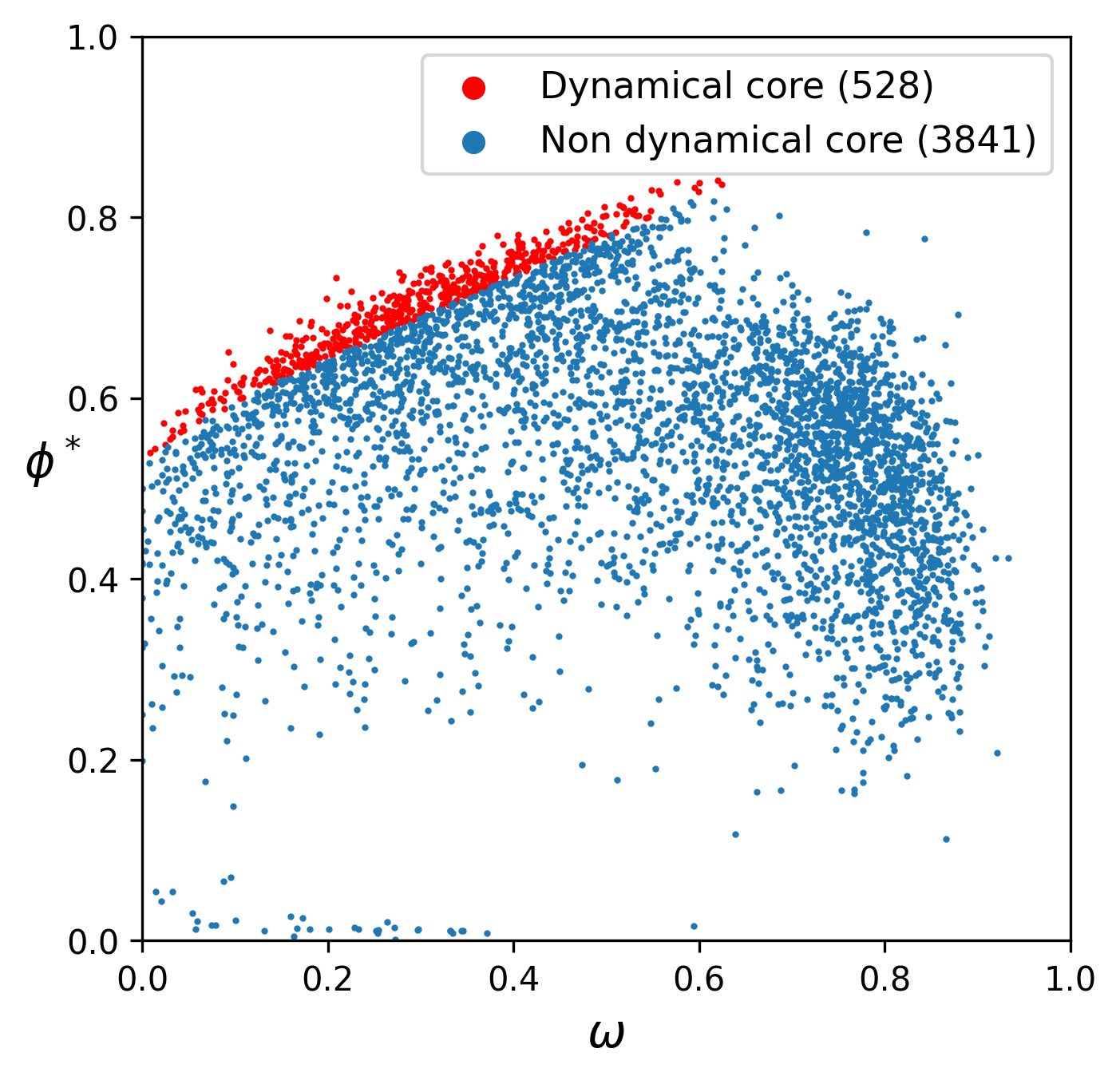}
  \caption{Parameters distribution by DC.}
  \label{subfig:parameters_estimates_1_a}
\end{subfigure}%
\begin{subfigure}{.5\textwidth}
  \includegraphics[width=1\linewidth]{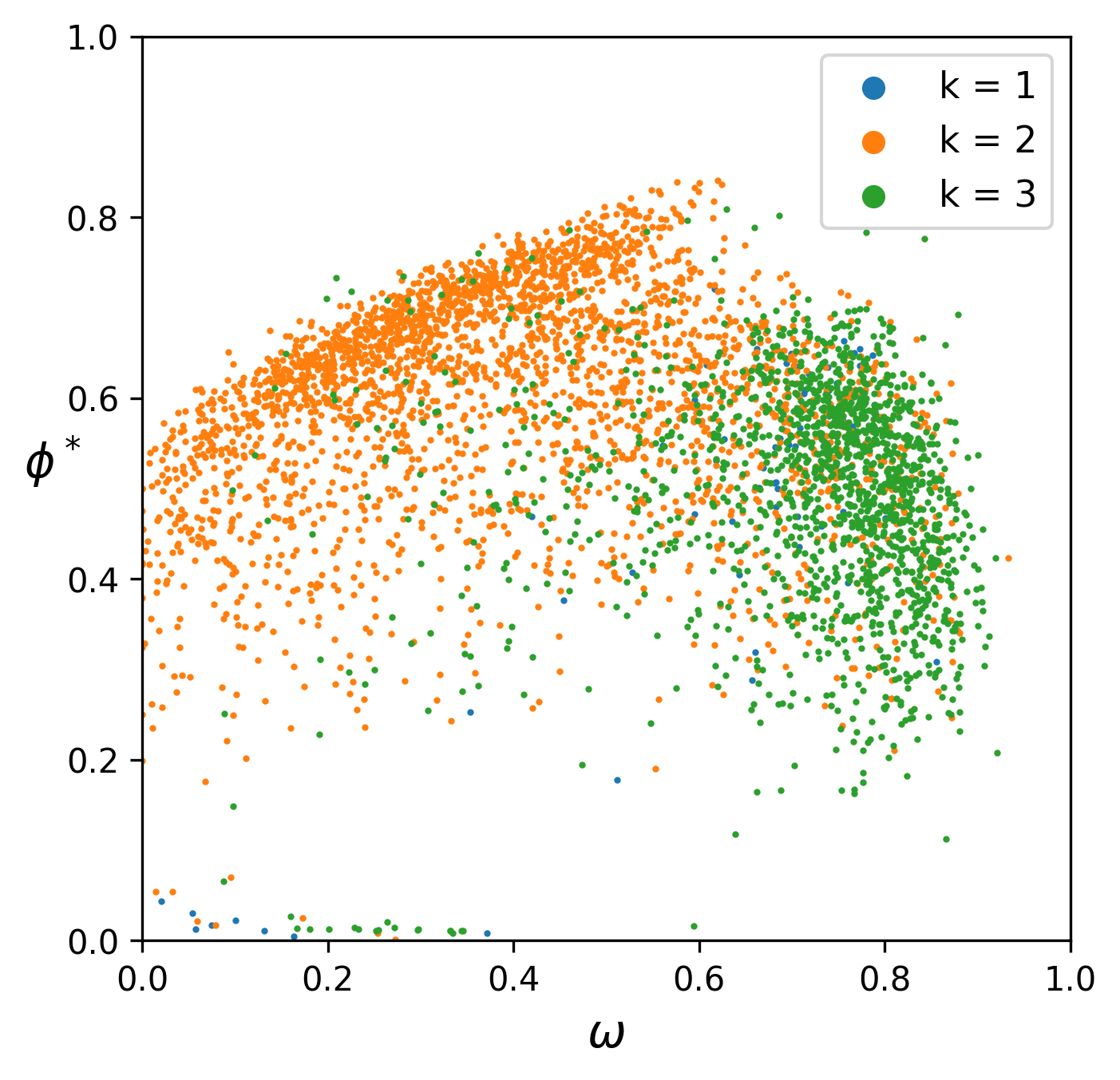}
  \caption{Parameters distribution by iterate.}
  \label{subfig:parameters_estimates_1_b}
\end{subfigure}
\caption{
Estimated parameters for iterates in $\{1,2,3\}$.
}
\label{fig:parameters_estimates_1}
\end{figure}
We now ask if bank size is related with the fact that the estimated pair $(\phi^*,\omega)$ is or not in the dynamical core region. 
Fig.~\ref{subfig:parameters_estimates_bank_size_a} shows the probability density functions of the logarithm of the banks size, by considering separately banks inside and outside the dynamical core, and Fig.~\ref{subfig:parameters_estimates_bank_size_b} displays the corresponding probability-probability plot.
To test that the difference between the distribution of banks sizes in and outside the dynamical core region is statistically significant we perform the Kolmogorov-Smirnov test of the null hypothesis that the two samples have the same distribution. 
The statistics of the test is $9.31 \times 10^{-2}$ corresponding to a $p$-value of $5.9 \times 10^{-4}$.
This latter value shows that the two subsamples have different distributions.

Summarizing, we have found that the parameters of a sizable fraction of banks lie in the dynamical core region and that the dynamics of the leverage of the larger banks tends to be more frequently in the dynamical core than that of the smaller banks.

\begin{figure}[h]
\centering
\begin{subfigure}{.5\textwidth}
    \includegraphics[width=1\linewidth]{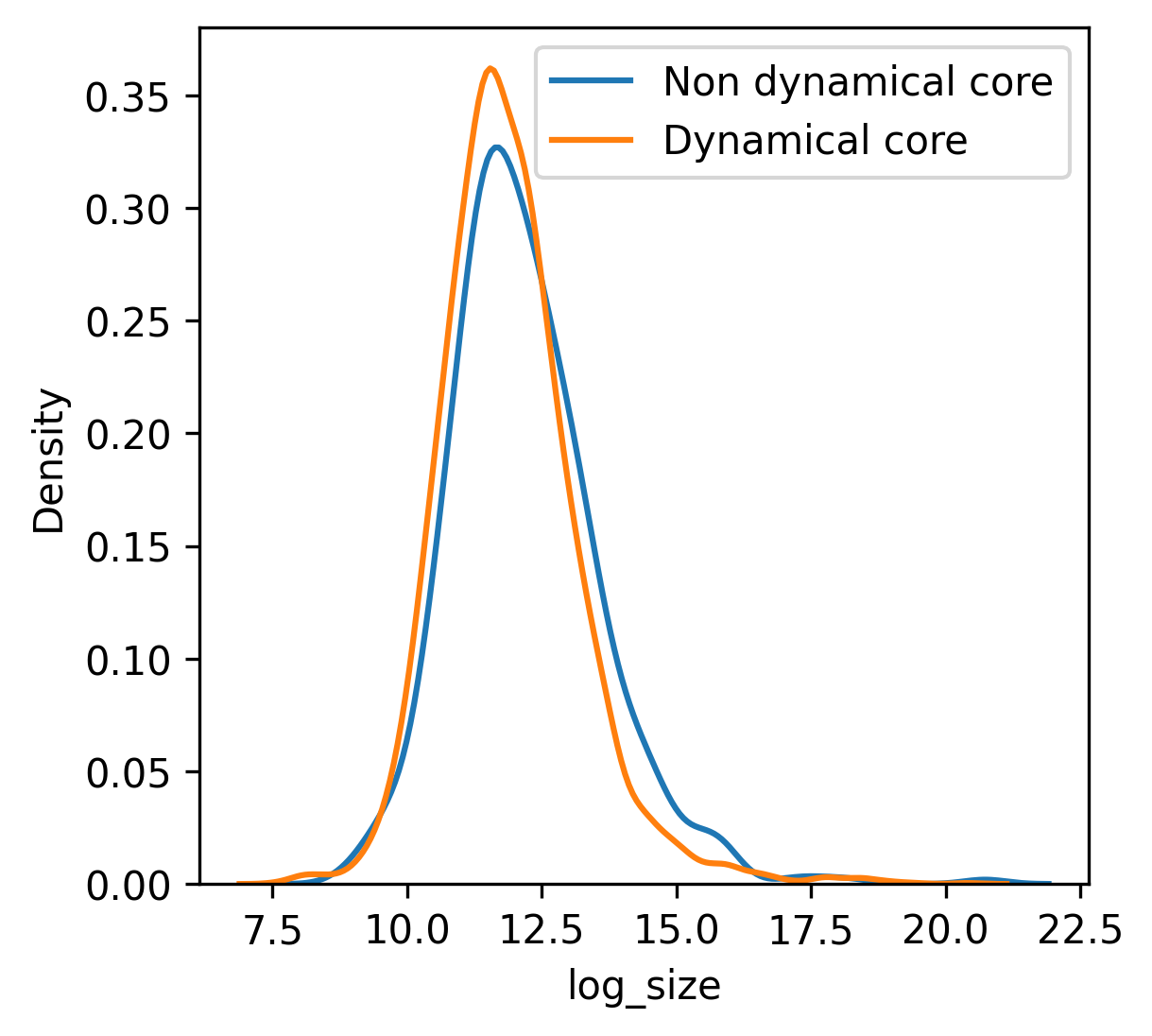}
    \caption{Probability density functions.}
    \label{subfig:parameters_estimates_bank_size_a}
\end{subfigure}%
\begin{subfigure}{.48\textwidth}
    \includegraphics[width=1\linewidth]{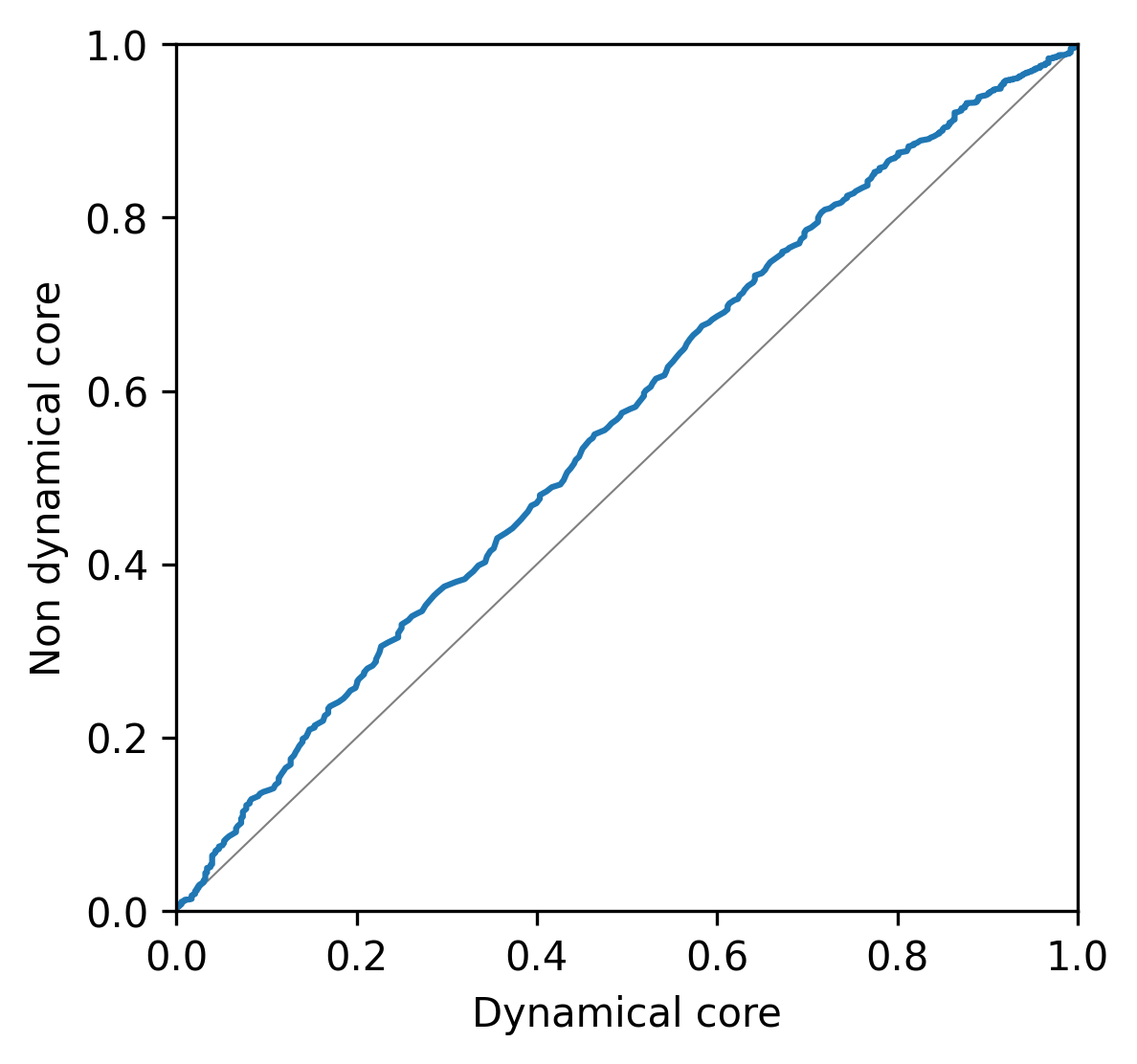}
    \caption{Probability-probability plot.}
    \label{subfig:parameters_estimates_bank_size_b}
\end{subfigure}
\caption{Results on the parameters of the process and bank size.}
\label{fig:parameters_estimates_bank_size}
\end{figure}

\subsection{Classification via the Chaos Decision Tree Algorithm}\label{subsec:dta_empirical}

Finally, we perform an independent analysis on the bank's leverage time series by making use of the recently proposed Chaos Decision Tree Algorithm (CDTA) \cite{toker2020simple}, described in detail in Appendix~\ref{app:cdta}. 
This is a non-parametric method which classifies an input time series as chaotic, periodic, or stochastic\footnote{Notice that the definitions of \emph{chaotic} and \emph{periodic} for the CDTA differ from the ones given in Section~\ref{UnimodalMapsSection}, see Appendix~\ref{app:cdta}.}.
We perform this analysis for two reasons. 
First, we know that chaotic behavior can be present only for series generated by our map with parameters in the dynamical core. 
Thus we test whether the series classified as chaotic by CDTA have estimated parameters in the dynamical core. 
The second reason is to count how many banks in the dynamical core are identified as chaotic or periodic by CDTA. 
The Appendix also contains the results of the application of CDTA to data simulated by our map for different time series length, level of noise $\n$, and number of iterates $k$. 

Applying CDTA, we find that $64\%$ of the banks are classified as stochastic, $\sim 12\%$ as periodic and $\sim 23\%$ as chaotic. 
The consistency between the classification made by CDTA and the partition  `dynamical core' and `not dynamical core' of the parameters space $(\phi^*, \omega)$ found by the NN model can be assessed by looking at Table~\ref{tab:dc_dta}.
We find that a large fraction of series outside the dynamical core are classified as stochastic by CDTA, while a third of banks in the dynamical core is classified as chaotic.
This fraction is significantly smaller for banks outside the dynamical core. 
Thus, despite the agreement is not perfect, we find a reasonable consistency between the conclusions of the two methods and, more importantly, find significant (and independent) support to the conclusion that a sizable fraction of bank time series are described by a chaotic dynamics.

\begin{table}[h]
\centering
\begin{tabular}{|l|c|c|c|}
\hline
& Periodic & Chaotic & Stochastic \\
\hline
Non dynamical core & 382 (9.98\%) & 648 (16.93\%) & 2798 (73.09\%) \\
\hline
Dynamical core & 107 (20.34\%) & 176 (33.46\%) & 243 (46.20\%) \\
\hline
\end{tabular}
\caption{Number of banks by classes.}
\label{tab:dc_dta}
\end{table}

Finally, the findings reported in the previous subsection suggest a positive relation between the size of a bank and the probability that the dynamics of the (corresponding) leverage time series is chaotic. 
To verify this observation by using the CDTA classification, we first rank the banks in quintiles according to their size and within each quintiles we compute the percentage of banks that are detected to be stochastic, periodic and chaotic. 
Table~\ref{tab:Size_statistics} collects the results.
In a nutshell, banks having a larger size have, on average, a larger percentage of leverage time series detected as chaotic. 
A $\chi^2$-test applied to contiguous quintiles rejects the hypothesis of independence of the CDTA classification from the quintile, indicating that the difference in frequencies across quintiles are statistically significant. 
Thus also the CDTA analysis confirms that larger banks are more likely characterized by chaotic time series of leverage.

\begin{center}
\begin{table}[ht]
\begin{center}
\begin{tabular}{|c|c|c|c|c|c|}
\hline 
Statistics & q$_{1}$ & q$_{2}$ & q$_{3}$ & q$_{4}$ & q$_{5}$ \\
\hline 
Chaotic (\%) & 18.2 & 20.2 & 21.7 & 23.6 & 29.1 \\
\hline
Periodic (\%) & 12.9 & 11.6 &  12.9 & 13 & 11.8 \\
\hline
Stochastic (\%) & 68.6 &  67.3 &  65 & 63 & 58.9 \\
\hline
\end{tabular}
\end{center}
\hspace{-1.5cm}
\caption{Fraction of banks classified as chaotic, periodic, or stochastic by CDTA conditionally to the decile of the bank size.}
\label{tab:Size_statistics}
\end{table}
\end{center}

\section{Conclusions}\label{sec:conclusion}

Most risk management practices (as, for example, Value-at-Risk) assume that prices are not affected by actions of other financial institutions that are managing the risk of their portfolio.
In other words, these practices assume that risk is exogenous. 
In reality, in the presence of limited liquidity, coordinated and homogeneous risk management can create market instability and result in what is known as endogenous risk.
This has the potential to amplify market instabilities and create crashes through the well-known feedback between leverage, risk, and asset prices.
An additional, and less considered, feedback between past and future risks is present because financial institutions use historical data to estimate  both  the riskiness  of their  investments  and  their correlations.
This creates new threats for the systemic stability of financial markets.
Studying how these two feedbacks affect the leverage dynamics is of paramount importance for understanding systemic risk.

In this paper we consider a stylized model where both feedbacks are present.
We showed that the dynamics of the bank's leverage is described by a unimodal map on $[0,1]$ perturbed with additive and heteroscedastic noise. 
The perturbed system can be described in two equivalent ways as a stationary Markov chain or in terms of random transformations. 
In both cases a fundamental  object  is the {\em stationary measure} of the process which allow us  to properly  define and state all the statistical properties of the system. 
We are able to construct such a measure and to prove its uniqueness. 
Moreover we show, under a few assumptions, the stochastic stability of the perturbed system, namely the weak convergence of the stationary measure to the invariant measure for the unimodal map in the zero noise limit. 
We also define an average Lyapunov exponent, still in terms of the stationary measure, as a sensible indicator of the slow motion, and prove its continuity with respect to the parameters defining the system.
We show that, depending on the parameters, the average Lyapunov exponent can be either negative or positive, leading to two qualitatively different (periodic- and chaotic-like) dynamics of the leverage.

We then estimate the parameters of the map via a method based on deep neural networks, whose efficiency was tested in a large testing set. 
Assuming  the  proposed  unimodal  map  with  heteroscedastic noise as data generating process for the banks leverage, we estimated the parameters on quarterly data of about 5,000 US Commercial Banks  via  the  proposes CNN architecture. 
By investigating the period  from  March  2001 to  December  2014,  for  a  total  of  59  quarters, we  found  that the parameters of a sizable fraction of banks lie in the dynamical core region of the parameter space and that the large banks' leverage tends to be more chaotic than the one of small ones. 
The latter finding was corroborated also by using a non-parametric approach.

We believe that the proposed methodologies may offer revealing perspectives for future works. 
For instance, it would be interesting to extend the employed mathematical techniques to study a model in which more than one asset and one bank are present in the system.

\section{Acknowledgements}

This research was supported by the research project `Dynamics and Information Research Institute - Quantum Information, Quantum Technologies' within the agreement between UniCredit Bank and Scuola Normale Superiore.
F.L.\ and S.M.\ acknowledge partial support by the European Program scheme `INFRAIA-01-2018-2019: Research and Innovation action', grant agreement \#871042 'SoBigData++: European Integrated Infrastructure for Social Mining and Big Data Analytics'.
S.V.\ is grateful to the Centro di Ricerca Matematica Ennio de Giorgi,
Scuola Normale Superiore,
Laboratoire International Associ\'e LIA LYSM, the INdAM (Italy) and the UMI-CNRS 3483, Laboratoire Fibonacci (Pisa), where this work has been initiated and completed under a CNRS delegation,
for the support.
S.V.\ thanks C.~Gonzalez-Tokman for useful discussion about Section~\ref{RLESection}.

%\clearpage
\appendix
\section{The Chaos Decision Tree Algorithm}\label{app:cdta}
The Chaos Decision Tree Algorithm \cite{toker2020simple} is a non-parametric
chaos-detection tool which has been developed 
with the goal of being especially robust to 
measurement noise.
It provides an automated processing pipeline which has been showed to be able to detect the presence (or absence) of chaos in noisy recordings, even for difficult edge cases.
We use it in our work to identify periodic/chaotic time series without any reference to our model.
It is meant to provide an independent check on the 
existence of chaotic behaviour in leverage time series and to support the evidence that it may depend on the bank's size. 

The algorithm classifies a time-series as either stochastic or periodic, or chaotic. 
We now briefly explain how the algorithm works. 
The first step 
%of the algorithm 
is to test if data are stochastic. 
This is done via surrogate-based approach by comparing the permutation entropy of the original time-series to the permutation entropy of random surrogates of that time-series by using a combination of Amplitude Adjusted Fourier Transform surrogates and Cyclic Phase Permutation surrogates. 
If the permutation entropy of the original time-series falls within either surrogate distribution, the time-series is classified as stochastic. 
If the permutation entropy falls outside the surrogate distribution, then the algorithm proceeds to de-noise the inputted signal by using the Schreiber's noise-reduction algorithm \cite{Schreiber93}. 
Notice that the calculation of the permutation entropy relies on two parameters: the permutation order and the time-lag.
The time-lag has been set to 1 as suggested in \cite{toker2020simple}. 
The choice of the order of the permutation is made in order to maximize the detection of chaotic series in our model.

To test the CDTA algorithm on our model, we first consider the deterministic map and generate $100$ chaotic and $100$ periodic time-series of length 59	from the dynamical core. 
We then apply CDTA to these series for different values of the permutation order ($\in\{3,\ldots,8\}$).
While the periodicity accuracy is maximized for a value equal to 8 ($92\%$ of the periodic series are correctly detected as such and the remaining $8\%$ are labeled chaotic), the chaos detection accuracy is maximized for a choice of the permutation order equal to 5 ($65\%$ of the series are correctly detected as chaotic and the remaining periodic). 
Because of the purpose of this paper, we fix the permutation order to 5, but we have checked that the conclusions of our data analysis (in particular Table~\ref{tab:iterate_k_all}) would have been the same with a different choice.

At this point, the algorithm checks for signal oversampling and, if the data are over-sampled, the algorithm iteratively down-samples the data until they are no longer over-sampled. 
Finally, CDTA performs the 0-1 chaos test \cite{gottwald09} on the input data. 
Ref.~\cite{toker2020simple} points out that the 0-1 chaos test has been modified from the original one to be less sensitive to noise.  
Then, it suppresses the correlations arising from quasi-periodicity, and normalizes the standard deviation of the test signal. 
The value for the parameter that suppresses signal correlations is chosen based on ROC analyses.  
The modified 0-1 test provides a single statistic, \textit{K}, which approaches 1 for chaotic systems and approaches 0 for periodic systems. 
The algorithm sets up a cutoff for \textit{K} based on the length of the time-series. 
If \textit{K} is greater than the cutoff, the data are classified as chaotic, and if they are smaller than or equal to the cutoff, they are classified as periodic.

\subsection{Simulations}\label{subsec:dta}

\begin{table}%[h]
\centering
\begin{tabular}{|c|c|c||c|c|c||c|c|c|}
\hline
\multirow{2}{*}{Iterate} & Series & \multirow{2}{*}{$\n$} & 
\multicolumn{3}{c|}{Dynamical core} & \multicolumn{3}{c|}{Not dynamical core} \\
\cline{4-9}
& length && S (\%) & P (\%) & C (\%) & S (\%) & P (\%) & C (\%) \\
\hline

\multirow{12}{*}{1} & \multirow{3}{*}{59} & 5 & 33.5 & 5.14 & \textbf{61.4} & \textbf{57.4} & 4.81 & 37.7 \\
&& 20 & 29.3 & 4.28 & \textbf{66.5} & \textbf{69.9} & 3.28 & 26.8 \\
&& 100 & 24.3 & 6.99 & \textbf{69.6} & \textbf{88.1} & 2.94 & 8.97 \\
\cline{2-9}
& \multirow{3}{*}{295} & 5 & 2.2 & 1.7 & \textbf{96.1} & 22.5 & 6.24 & \textbf{71.2} \\
&& 20 & 0.1 & 1.9 & \textbf{98} & 43.8 & 10.4 & \textbf{45.8} \\
&& 100 & 0 & 2.3 & \textbf{97.7} & \textbf{73.8} & 8.35 & 17.9 \\
\cline{2-9}
& \multirow{3}{*}{590} & 5 & 0 & 0.7 & \textbf{99.3} & 13.1 & 6.08 & \textbf{80.9} \\
&& 20 & 0 & 0.4 & \textbf{99.6} & 33.6 & 8.5 & \textbf{57.9} \\
&& 100 & 0 & 0.4 & \textbf{99.6} & \textbf{66.4} & 8.04 & 25.6 \\
\cline{2-9}
& \multirow{3}{*}{1180} & 5 & 0 & 0.1 & \textbf{99.9} & 10.9 & 3.44 & \textbf{85.6} \\
&& 20 & 0 & 0 & \textbf{100} & 27.7 & 5.57 & \textbf{66.8} \\
&& 100 & 0 & 0 & \textbf{100} & \textbf{60.2} & 5.29 & 34.5 \\
\hline
\hline
\multirow{12}{*}{2} & \multirow{3}{*}{59} & 5 & \textbf{75.7} & 2.17 & 22.2 & \textbf{83.8} & 1.33 & 14.9 \\
&& 20 & \textbf{80.1} & 1.65 & 18.2 & \textbf{92.8} & 0.26 & 6.91 \\
&& 100 & \textbf{86.6} & 1.43 & 11.9 & \textbf{96.6} & 0.56 & 2.81 \\
\cline{2-9}
& \multirow{3}{*}{295} & 5 & 39.4 & 0 & \textbf{60.6} & 40.5 & 3.26 & \textbf{56.2} \\
&& 20 & 38.6 & 0.6 & \textbf{60.8} & \textbf{70} & 3.7 & 26.3 \\
&& 100 & 21 & 1.2 & \textbf{77.8} & \textbf{83.9} & 2.96 & 13.1 \\
\cline{2-9}
& \multirow{3}{*}{590} & 5 & 27.6 & 0 & \textbf{72.4} & 25.6 & 3.23 & \textbf{71.2} \\
&& 20 & 10.6 & 0 & \textbf{89.4} & \textbf{52.3} & 4.24 & 43.4 \\
&& 100 & 4.8 & 0.6 & \textbf{94.6} & \textbf{74.7} & 2.88 & 22.4 \\
\cline{2-9}
& \multirow{3}{*}{1180} & 5 & 11 & 0 & \textbf{89} & 13.7 & 2.22 & \textbf{84.1} \\
&& 20 & 0.2 & 0 & \textbf{99.8} & 39.4 & 2.82 & \textbf{57.7} \\
&& 100 & 0.2 & 0 & \textbf{99.8} & \textbf{64.1} & 2.24 & 33.7 \\
\hline
\hline
\multirow{12}{*}{3} & \multirow{3}{*}{59} & 5 & \textbf{85.3} & 0.47 & 14.2 & \textbf{83.9} & 1.33 & 14.9 \\
&& 20 & \textbf{84.5} & 1.17 & 14.3 & \textbf{92.8} & 0.26 & 6.91 \\
&& 100 & \textbf{89.5} & 1.43 & 8.79 & \textbf{96.6} & 0.56 & 2.81 \\
\cline{2-9}
& \multirow{3}{*}{295} & 5 & 36 & 0 & \textbf{64} & 40.5 & 3.26 & \textbf{56.2} \\
&& 20 & 32.6 & 0.4 & \textbf{67} & \textbf{70} & 3.7 & 26.3 \\
&& 100 & 23 & 1.6 & \textbf{75.4} & \textbf{83.9} & 2.96 & 13.1 \\
\cline{2-9}
& \multirow{3}{*}{590} & 5 & 20.6 & 0 & \textbf{79.4} & 25.6 & 3.23 & \textbf{71.2} \\
&& 20 & 9.6 & 0 & \textbf{90.4} & \textbf{52.3} & 4.24 & 43.4 \\
&& 100 & 9.4 & 0.8 & \textbf{89.8} & \textbf{74.7} & 2.88 & 22.4 \\
\cline{2-9}
& \multirow{3}{*}{1180} & 5 & 6 & 0 & \textbf{94} & 13.7 & 2.22 & \textbf{84.1} \\
&& 20 & 2 & 0 & \textbf{98} & 39.4 & 2.82 & \textbf{57.7} \\
&& 100 & 2.8 & 0.8 & \textbf{96.4} & \textbf{64.1} & 2.24 & 33.7 \\
\hline
\end{tabular}
\caption{Percentage of time-series detected by the Chaos Decision Tree Algorithm as stochastic (S), periodic (P) and chaotic (C) in the dynamical core and its complement as a function of the time-series length and $\n$ for different iterates of the map.}
\label{tab:iterate_k_all}
\end{table}

In this section we present some numerical investigations showing how CDTA performs when simulating (noisy) time-series from the map described in Section~\ref{MainExampleSection}. 
Specifically, we run the following two numerical experiments:
\begin{enumerate}[label=(\roman*)]
    \item\label{itm:cdta_i} First, we simulate time-series from the dynamical core area (i.e.\ time-series for which the pairs $(\phi^*, \omega)$ satisfy condition~\ref{itm:C3}: $T(\Delta) < c < \Delta < 1$).
    More precisely, we simulate $500$ samples of different length and level of noise, which is captured by the variable $\n$. 
    %The generated time-series belong to the interval $[T(c)-1,1]$ for all times.  
    \item\label{itm:cdta_ii} Second, we simulate time-series from outside the dynamical core area
    (the map $T$ thus satisfies \ref{itm:C1} or \ref{itm:C2}).
    The remaining simulation setting coincides with that in \ref{itm:cdta_i}.
\end{enumerate}
The procedures explained in \ref{itm:cdta_i}--\ref{itm:cdta_ii} are also repeated when $T$ is replaced by its $k$-th iterate $T^{k}$, $k=2, 3$; see discussion in Section~\ref{sec:estimationDNN}. 
Table~\ref{tab:iterate_k_all} collects the results.
We observe that when both $\n$ and the time series length are large, CDTA classifies almost all the time series in the dynamical core as chaotic, while those outside it are never classified as such. 
This is quite independent on $k$. 
By decreasing either $\n$ or the time series length, the classification is less precise and this effect is stronger for larger values of $k$. 
In the regime of length comparable with our empirical data ($N=59$), roughly a third (for $k=1$) or up to $85\%$ (for $k=3$) of the time series in the dynamical core are classified as stochastic, showing the limits of the CDTA when the time series are short and/or the noise is large.

%\medskip

\end{document}